\documentclass[10pt]{article}

\usepackage{arxiv}
\usepackage[utf8]{inputenc}
\usepackage[T1]{fontenc}
\usepackage{graphicx}
\usepackage[export]{adjustbox}
\usepackage{hyperref}
\hypersetup{colorlinks=false,pdfborder={0 0 0}}
\usepackage{xcolor}
\usepackage{url}
\usepackage{amsfonts}
\usepackage{amsmath}
\usepackage{amssymb}
\usepackage{amsthm}
\usepackage{algorithm}
\usepackage{algpseudocode}
\usepackage{microtype}
\usepackage{bm}
\usepackage{array}
\usepackage{booktabs}
\usepackage[round]{natbib}
\usepackage{enumitem}
\usepackage{placeins}
\usepackage[capitalize,noabbrev]{cleveref}
\usepackage{tikz}
\usetikzlibrary{positioning, arrows.meta, calc, shapes.geometric, fit, backgrounds}

\definecolor{ours}{HTML}{ff7f0e}
\definecolor{solve}{HTML}{7f7f7f}
\definecolor{corr}{HTML}{d62728}
\definecolor{iidcv}{HTML}{3a7ca5}
\definecolor{rwonly}{HTML}{5a8a94}
\definecolor{coldmsip}{HTML}{D7263D}
\definecolor{coldsvgd}{HTML}{5AB1BB}
\definecolor{asvgd}{HTML}{2E86AB}
\definecolor{svgdcurve}{HTML}{1f77b4}
\definecolor{svgdrwcurve}{HTML}{9467bd}

\theoremstyle{plain}
\newtheorem{proposition}{Proposition}

\theoremstyle{definition}
\newtheorem{remark}{Remark}

\newcommand{\R}{\mathbb{R}}
\newcommand{\E}{\mathbb{E}}
\newcommand{\Var}{\mathrm{Var}}
\newcommand{\Cov}{\mathrm{Cov}}

\newcommand{\MMD}{\mathrm{MMD}}

\newcommand{\bh}{{\bm{h}}}
\newcommand{\bs}{{\bm{s}}}
\newcommand{\bv}{{\bm{v}}}
\newcommand{\bw}{{\bm{w}}}
\newcommand{\bK}{{\bm{K}}}
\newcommand{\bV}{{\bm{V}}}
\newcommand{\bSigma}{{\bm{\Sigma}}}
\newcommand{\bx}{{\bm{x}}}
\newcommand{\by}{{\bm{y}}}
\newcommand{\bz}{{\bm{z}}}

\newcommand{\bphi}{{\bm{\phi}}}

\newcommand{\bmu}{{\bm{\mu}}}

\newcommand{\ystar}{\by^{\star}}

\title{Amortized mean-shift interacting particles}

\author{
  Ali Siahkoohi \\
  Department of Computer Science \\
  University of Central Florida
}
\date{}

\begin{document}
\maketitle

\begin{abstract}
\noindent Bayesian inference for inverse problems is run to evaluate \emph{integrals}---posterior expectations, tail probabilities, and risks---across a stream of observations, not to produce samples for their own sake. The standard estimate averages the integrand over posterior samples, a Monte-Carlo average whose error decays only as the square root of the sample size, so accuracy demands many samples---prohibitive when each one calls a partial-differential-equation forward model. Mean-shift interacting particles need far fewer: in place of independent samples they return a small set of signed-weight nodes---a deterministic quadrature---whose weighted averages estimate those integrals more accurately than an equal number of random draws. Finding the nodes, however, is a per-observation optimization that, in its most accurate form, reads the posterior score at every step---returning in another form the cost the method was meant to save. We introduce \emph{amortized mean-shift interacting particles}, a learned map that emits the weighted nodes from an observation and a handful of posterior samples in a single forward pass. Training asks only for joint parameter--observation samples and a posterior to draw from---a conditional normalizing flow, an empirical (Nadaraya--Watson) conditional read off the dataset, or any other reference the user can sample---and the map learns to integrate that posterior from samples alone, evaluating neither its density nor its score. Once trained, it generalizes to unseen observations and integrands at any node budget and improves on independent samples in two ways: by reweighting them, provably no worse than the equal weights of Monte-Carlo; and by moving them, which empirically lowers the integration error further. Across closed-form, sampled, learned, and physics-based posteriors---up to a thousand-coefficient groundwater field---it integrates more accurately than the same number of independent samples at every budget, and a posterior-whitened, dimension-aware kernel removes the high-dimensional wall---where an isotropic kernel loses its geometric contrast as the pairwise distances concentrate---and carries the margin into high dimension. The result is \emph{a Pareto improvement on Monte-Carlo integration rather than a competitor to drawing more samples}.
\end{abstract}

\begin{figure*}[t]
\centering
\includegraphics[width=\linewidth]{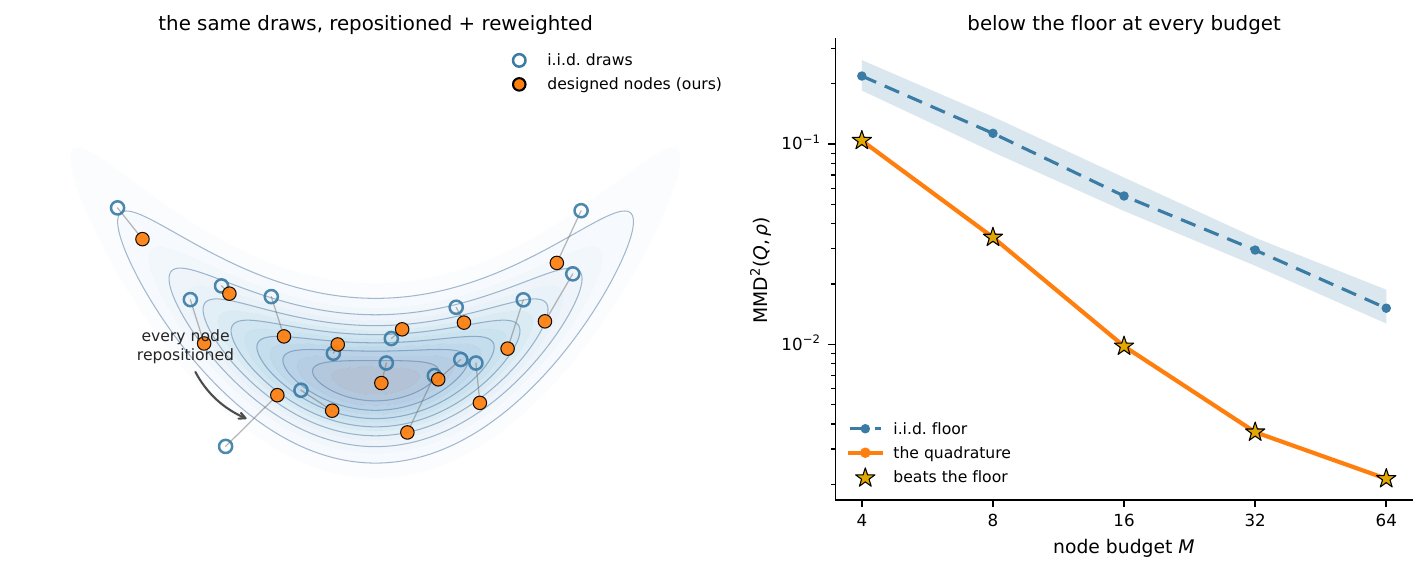}
\caption{\textbf{The same independent samples, repositioned and reweighted, integrate the posterior below the independent-sample floor---at every budget, in one forward pass.} A single set-equivariant network maps an observation and independent samples of the reference $\rho$ to a weighted-node quadrature at any requested resolution. \textbf{(a)}~On a curved reference (density backdrop), the \textcolor{iidcv}{\textbf{independent samples}} (open marks, equal weight) miss the high-mass ridge, while the \textcolor{ours}{\textbf{designed weighted nodes}} (filled, each connected to its seed) are pulled onto it. \textbf{(b)}~For the same network, the squared maximum mean discrepancy to $\rho$ against the node budget: \textcolor{ours}{\textbf{the quadrature}} sits below the dashed \textcolor{iidcv}{\textbf{independent-sample floor}} throughout. Reweighting is provably no worse than the floor~(\cref{prop:reweight-floor}); the configuration the nodes reach is a fixed point of the mean-shift map~(\cref{prop:fixedpoint}). The reference is the two-dimensional banana~(\cref{sec:exp-toys,tab:repro}). Code on \href{https://github.com/luqigroup/amsip}{\textcolor[HTML]{00008B}{GitHub}}.}
\label{fig:teaser}
\end{figure*}

\section{Introduction}
\label{sec:intro}

Quantifying uncertainty in the solution of an inverse problem is of fundamental importance across the computational sciences, from seismic and medical imaging to estimation constrained by partial differential equations. The unknown $\bx \in \R^{d_x}$ is observed only through a forward model that maps it to data $\by \in \R^{d_y}$, and that map is rarely invertible: such inverse problems are typically ill-posed, and a single point estimate cannot express the non-uniqueness of their solution~\citep{stuart2010ip}. The Bayesian framework addresses this by characterizing the solution as the posterior distribution given by Bayes' rule,
\begin{equation}
\label{eq:bayes}
p(\bx \mid \by) \;=\; \frac{p(\by \mid \bx)\, p(\bx)}{p(\by)},
\end{equation}
in which $p(\by \mid \bx)$ is the likelihood induced by the forward model and $p(\bx)$ the prior. The posterior is not the end in itself; it is consulted to extract a statistic of the unknown---a posterior mean, the probability that a quantity of interest exceeds a threshold, or a Bayes risk---and each such statistic is an expectation. Given $M$ independent posterior samples $\bx_i \sim p(\bx \mid \by)$, the expectation of an integrand $\bh$ is estimated by the Monte-Carlo average
\begin{equation}
\label{eq:mc-floor}
\E[\bh(\bx) \mid \by] \;=\; \int \bh(\bx)\, p(\bx \mid \by)\, \mathrm{d}\bx \;\approx\; \frac{1}{M} \sum_{i=1}^{M} \bh(\bx_i), \qquad \Var\!\Big[ \tfrac{1}{M}\textstyle\sum_{i} \bh(\bx_i) \Big] \;=\; \frac{\Var_{p(\bx \mid \by)}[\bh(\bx)]}{M},
\end{equation}
whose error, for any fixed integrand $\bh$, decays only at the rate $1/\sqrt{M}$~\citep{ohagan1991bq}. The deliverable of Bayesian inference is therefore an integral, and the cost of that integral is set by this variance.

Drawing the posterior samples that the average in equation~\eqref{eq:mc-floor} consumes is a longstanding challenge in its own right, and much progress has been made on it---Markov chain Monte Carlo~\citep{robert2004montecarlo,neal2011hmc,welling2011sgld}, variational and interacting-particle methods~\citep{rezende2015flows,liu2016svgd,blei2017vi,korba2020nonasymptotic,AghazadeSiahkoohiGholami_2026,SiahkoohiAghazadeGholami_2026}, and the amortized variational and simulation-based inference that learn to sample the high-dimensional posteriors these problems present~\citep{papamakarios2016npe,greenberg2019snpe,cranmer2020sbi,radev2022bayesflow,SiahkoohiRizzutiOrozcoEtAl_2023,OrozcoSiahkoohiLouboutinEtAl_2025}. Whichever sampler is used, its draws feed the same Monte-Carlo average and inherit the same Monte-Carlo floor of equation~\eqref{eq:mc-floor}: even an exact sampler lowers the error of a reported expectation only by producing more draws, never by improving the ones in hand. The bottleneck we address is not the fidelity of the samples, but the variance of the estimator they feed.

One line of work confronts this estimator variance directly. Mean-shift interacting particles~\citep{belhadji2025datadriven,belhadji2026msip} replace the $M$ independent samples by a small set of weighted nodes---a quadrature---whose weighted average estimates a posterior expectation with a worst-case error, taken over an entire class of integrands at once, below that of the same number of independent samples. A single node set therefore improves every reported expectation simultaneously, in contrast to a control variate, which reduces the variance of one integrand at a time~\citep{assaraf1999zerovariance,mira2013zerovariance,oates2017controlfunctionals,SiahkoohiOh_2026}. The nodes, however, are not free. They are the solution of an optimization---minimizing a discrepancy to the posterior---that is rerun from scratch at every observation, and that reads the posterior density and, in its most accurate form, the posterior score. A simulator exposes neither, and where the likelihood is a partial-differential-equation forward model each score evaluation is itself a forward-and-adjoint solve. The construction is thus neither amortized across a stream of observations nor free of the per-observation density and score evaluations it was meant to spare.

To address these limitations, we propose \emph{amortized mean-shift interacting particles}: a set-equivariant network that emits the quadrature in a single forward pass, from an observation and a set of independent posterior samples, with no per-observation optimization and no evaluation of the posterior density or score. Because the network reads its reference only through samples, any posterior the user can sample is admissible---a trained conditional flow, a physics-based posterior, or an empirical conditional measure~\citep{belhadji2025datadriven}. Two levers carry the integration error below the Monte-Carlo floor under different guarantees: reweighting the samples by a closed-form optimum is \emph{provably} no worse than the samples themselves at every budget, while moving the nodes is the larger, empirical gain. Across closed-form, sampled, learned, and physics-based posteriors---up to thousand-coefficient groundwater fields---a single trained map integrates below the Monte-Carlo floor at every node budget, so that the construction does not compete with drawing more samples but is a Pareto improvement on the Monte-Carlo estimator itself~(\cref{fig:teaser}).

\subsection{Contributions}
\label{sec:contributions}
\begin{enumerate}[leftmargin=*,topsep=2pt,itemsep=3pt]
\item[\textbf{(1)}] \textbf{A score-free quadrature below the independent-sample floor.} From samples of any reference posterior, a weighted node set integrates it with lower worst-case error than an equal number of independent samples, reading the posterior only through a sample-estimated kernel mean and self-affinity---never its density or score. Reweighting alone is \emph{provably} no worse than equal weights~(\cref{prop:reweight-floor}); moving the nodes descends below the floor~(\cref{sec:method}), so the construction is a Pareto improvement, not a rival to drawing more samples.
\item[\textbf{(2)}] \textbf{One network, any node budget, a single forward pass.} A single set-equivariant network emits the quadrature at \emph{any requested} node count, trained by maximum-mean-discrepancy regression against the posterior and generalizing across observations, with the weights in closed form so the network learns only where to place the nodes~(\cref{sec:method}).
\item[\textbf{(3)}] \textbf{One construction across posterior types and dimensions.} A single trained map integrates below the floor, at every node budget, on Gaussian and Gaussian-mixture targets across dimension, a curved sampled posterior, a limited-angle tomography posterior, a trained conditional flow, and thousand-coefficient groundwater fields~(\cref{sec:experiments}); a posterior-whitened kernel removes the high-dimensional wall where an isotropic kernel collapses~(\cref{sec:method,app:subspace}), and an optional per-query finetuning descends below the one-pass emission on exact-score targets while collapsing on sharp ones~(\cref{sec:refine}).
\end{enumerate}

The remainder of the paper is organized as follows. \Cref{sec:related} places the neighboring literatures in plain terms---among them the neural posterior estimators that supply the construction's \emph{input} rather than competing with its output. \Cref{sec:problem} sets up the integration problem, introduces mean-shift interacting particles and the two ways their nodes are found, and states precisely why neither admits amortization. \Cref{sec:method} builds the amortized map, \cref{sec:refine} adds the optional per-query finetuning, and \cref{sec:theory} collects the guarantees the construction rests on. \Cref{sec:experiments} consolidates the evidence across posterior types and dimensions; \cref{sec:stress} then stress-tests the construction---interrogating the conditioning, the bandwidth, and the resolution dial---before \cref{sec:discussion,sec:conclusion} mark the scope and conclude.

\FloatBarrier
\section{Related work}
\label{sec:related}

Amortized mean-shift interacting particles assemble several neighboring literatures, and the contribution is clearest once each has its role. The method borrows its quadrature object, the reweighting that closes it, its network architecture, and its score-free choice of subspace; it generalizes one neighbor---the per-integrand control variate---and consumes another---the posterior estimators that feed it rather than rival it. The paragraphs below assign each thread its role in words, and defer to \cref{sec:problem,sec:method} the mathematics that makes the connections precise.

\paragraph{The mean shift supplies the quadrature object.}
The object the method emits---a small set of weighted nodes whose weighted averages approximate posterior expectations---is the object of Bayesian quadrature and probabilistic integration~\citep{ohagan1991bq,rasmussen2003bmc,briol2019pi}, and of the kernel-herding and kernel-thinning constructions that compress a distribution into a handful of representative super-samples~\citep{chen2010herding,dwivedi2021kernelthinning}. The construction we build on is the mean shift~\citep{fukunaga1975meanshift,comaniciu2002meanshift}---classically a mode-seeker that walks a point uphill on a kernel density estimate---recently lifted into a quadrature rule by \citet{belhadji2025datadriven,belhadji2026msip}, whose interacting particles settle at a node set whose weighted average integrates a target better than the same number of independent draws. Two of their constructions matter here: a data-driven one that reads the target from samples alone, the route we amortize, and a score-based one that reads the target's score, the route an optional refinement resumes; the nearest score-built relatives place or prune nodes by reading the score~\citep{chen2018steinpoints,riabiz2022steinthinning}. We keep their quadrature object, take the target to be a posterior the user supplies, and remove the per-observation re-solve their construction repeats for every new observation.

\paragraph{The conditional control variate is the per-integrand special case.}
The closest relative is the conditional neural control variate of \citet{SiahkoohiOh_2026}, which learns, for each observation, a correction that cancels the Monte-Carlo variance of \emph{one} chosen integrand---an auxiliary function whose posterior mean is zero, subtracted from the integrand to shrink its variance. Both sit in the control-variate and control-functional tradition---from classical control variables~\citep{fieller1954sampling} and the zero-variance principle~\citep{assaraf1999zerovariance,mira2013zerovariance}, through kernel and Stein control functionals~\citep{oates2017controlfunctionals,si2022scalable,south2022semiexact,south2023regularized,sun2023meta,sun2023vector}, to their neural parameterizations~\citep{muller2020neural,wan2020neural,bedaque2024leveraging,oh2025training} and the variance-reduction baselines of reinforcement learning~\citep{greensmith2004variance}---each of which cancels the variance of a single integrand. The present method is its all-integrand counterpart: a single weighted node set lowers the error of \emph{every} integrand at once, reused across integrands rather than refit for each. It inherits that work's discipline of stating the guarantee relative to the posterior the user commits to, never an inaccessible ground truth. One construction cancels a single integrand at a time; the other integrates them all at once.

\paragraph{Neural posterior estimation is the input, not a baseline.}
The reference posterior the method integrates is exactly what likelihood-free posterior estimators produce, and the relationship is consumption, not rivalry. Neural posterior, likelihood, and ratio estimation~\citep{papamakarios2016npe,greenberg2019snpe,papamakarios2019snle,durkan2020contrastive,hermans2020snre,kothari2021trumpets}, surveyed by \citet{cranmer2020sbi} and assessed by calibration diagnostics~\citep{talts2018sbc,lueckmann2021benchmarking}, learn an amortized posterior and draw samples from it; a trained conditional flow~\citep{wang2024pcpmap}, a diffusion posterior sampler~\citep{BaldassariSiahkoohiGarnierEtAl_2023,chung2023dps,pidstrigach2024infinite}, or a conditional generative or transport map~\citep{baptista2024monotone,hosseini2025cot} is one such posterior. Our map takes that posterior's samples and hands back its integrals at a lower cost per integral. A learned posterior is therefore the construction's input---the two are stages of one pipeline---and no claim about how faithfully it approximates the truth is made or needed.

\paragraph{Bayesian quadrature is the reweighting's provenance.}
The closed-form optimal weights that close our emission---a reweighting of a fixed node set toward the region the posterior favors---are the classical weights of Bayesian Monte Carlo and probabilistic integration~\citep{ohagan1991bq,rasmussen2003bmc,briol2019pi}. The novelty is not the per-observation weight solve, which is old, but \emph{amortizing} it---together with the node positions---across a stream of observations and across the requested number of nodes, so that one trained map emits the answer in a single forward pass rather than solving anew for every observation.

\paragraph{Amortized samplers learn a sampler, not a quadrature.}
Amortized Stein variational gradient descent~\citep{feng2017asvgd} also learns a map that replaces a per-observation iteration, but it emits an evenly weighted cloud of samples and reads the posterior score at every step. Our map reads no score and emits a weighted---and signed---node set built for integration, and on the one axis the two share, the error of the integral, it is favorable against a tuned sampler~(\cref{sec:experiments}). The two remain different objects: a sampler is the natural choice when the score is in hand and an evenly weighted cloud is wanted, while ours is a score-free, reusable quadrature. The network that emits it---a set of samples and an observation summary passed through attention layers that let the samples coordinate---is a neural process~\citep{garnelo2018cnp,kim2019anp} with the permutation symmetry of Deep Sets and the Set Transformer~\citep{zaheer2017deepsets,lee2019settransformer}, conditioned on the requested number of nodes through feature-wise modulation (FiLM)~\citep{perez2018film}; that symmetry is what lets one trained map serve any number of nodes.

\paragraph{The informed subspace can be selected without a score.}
In high dimension the method acts through a low-dimensional summary of the unknown that the data inform, the subject of likelihood-informed subspaces and certified dimension reduction~\citep{cui2014lis,spantini2015optimal,cui2021datafree,zahm2022certified} and of the active-subspace and sufficient-dimension-reduction estimators near them~\citep{cook1991save,li1991sir,cook2002cms,constantine2015active}. Across that literature the informed directions are read from the score or the forward-model gradient; we recover the same directions from samples alone, keeping the construction score-free from end to end. Borrowed object, borrowed reweighting, borrowed architecture, score-free throughout---the next section makes the borrowings precise.

\FloatBarrier
\section{The integration problem and mean-shift interacting particles}
\label{sec:problem}

The deliverable of \cref{sec:intro} is an integral, and its cost is the variance of the Monte-Carlo estimator that produces it. This section confronts that cost. It replaces the random samples by a designed set of weighted nodes whose worst-case integration error is a single tractable quantity~(\cref{sec:problem-mmd}); recalls the construction that finds those nodes---mean-shift interacting particles, mean shift made to interact~(\cref{sec:problem-meanshift})---and the two ways their defining quantities are read, from samples or from a score~(\cref{sec:problem-embeddings}); and closes by naming precisely why neither way can be amortized across a stream of observations~(\cref{sec:problem-obstacle}). The reference throughout is the posterior the user commits to, written $\rho(\cdot \mid \ystar)$ and never the inaccessible ground-truth posterior it approximates~(\cref{rem:integrate-rho}).

\subsection{The integration error is a maximum mean discrepancy}
\label{sec:problem-mmd}

Keep the budget of \cref{sec:intro}---$M$ function evaluations---but free the two choices the Monte-Carlo estimator left fixed: the equal weights and the random locations. Replace the $M$ samples by a weighted node set, a \emph{quadrature} $Q = \sum_{j=1}^{M} w_j\, \delta_{\bz_j}$ with nodes $\bz_j$ and signed weights $w_j$, and estimate a posterior expectation by the weighted average $\E_\rho[\bh] \approx \sum_j w_j\, \bh(\bz_j)$. The weights are signed, not constrained to be non-negative, so $Q$ is a quadrature rule rather than a sample set: its value is in the weighted averages it computes, not in any node taken alone.

How good is such a rule? Across the many integrands a posterior must serve, the right measure of error is the worst case. For a characteristic kernel $k$ with reproducing-kernel Hilbert space $\mathcal{H}$, the reproducing property collapses that worst case over the unit ball of $\mathcal{H}$ into a single Hilbert-space distance,
\begin{equation}
\label{eq:mmd-worst-case}
\sup_{\|\bh\|_{\mathcal{H}} \le 1} \Big| \textstyle\sum_j w_j\, \bh(\bz_j) - \E_\rho[\bh] \Big| \;=\; \big\| \mu_Q - \mu_\rho \big\|_{\mathcal{H}} \;=\; \MMD(Q, \rho),
\end{equation}
the maximum mean discrepancy between $Q$ and $\rho$, in which $\mu_\nu = \int k(\cdot, \bx)\, \mathrm{d}\nu(\bx)$ is the \emph{kernel mean embedding} of a measure $\nu$~\citep{gretton2012mmd,muandet2017kme}. A single scalar therefore controls the error of \emph{every} integrand in $\mathcal{H}$ at once---the maximum mean discrepancy \emph{is} the quadrature's worst-case integration error over the unit ball~\citep{briol2019pi,belhadji2026msip}---so pulling the quadrature's embedding onto the reference's bounds every integral in that ball simultaneously~(\cref{fig:ped-kernel-mean}). Squaring gives the closed quadratic form the construction minimizes,
\begin{equation}
\label{eq:mmd-quadratic}
\MMD^2(Q, \rho) \;=\; \bw^{\!\top} \bK\, \bw \;-\; 2\, \bw^{\!\top} \bmu \;+\; c_\rho, \ \bK_{jl} = k(\bz_j, \bz_l),\ \mu_j = \int k(\bz_j, \bx)\, \mathrm{d}\rho(\bx),\ c_\rho = \iint k(\bx, \bx')\, \mathrm{d}\rho\, \mathrm{d}\rho,
\end{equation}
with $\bK$ the node Gram matrix, $\bmu$ the kernel mean of $\rho$ at the nodes, and $c_\rho$ the reference's self-affinity. Two facts the whole construction rests on are already visible. The Gram $\bK$ depends on the nodes alone, and the reference enters \emph{only} through the pair $(\bmu, c_\rho)$. Both are expectations under $\rho$---a kernel-mean average and a kernel double-average---so one machinery serves any reference that can be sampled: a Gaussian, a trained flow, and an empirical conditional measure are scored through the same $(\bmu, c_\rho)$, differing only in how the samples that estimate the pair are produced. That estimate asks for \emph{samples of $\rho$ alone}---no evaluation of its density, and none of its score.

\begin{figure}[t]
\centering
\includegraphics[width=0.7\linewidth]{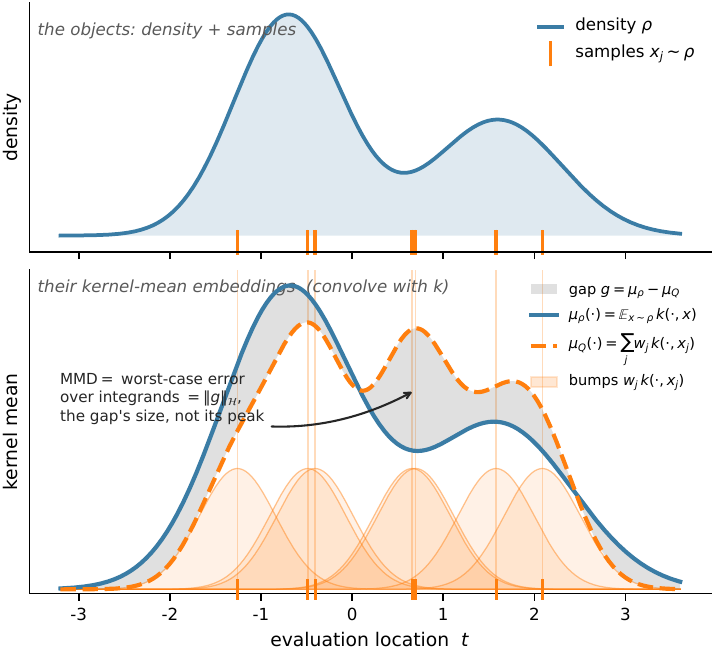}
\caption{\textbf{The integration error is the size of the gap between two smoothed curves---the kernel mean of the reference and the kernel mean of the quadrature.} \textbf{(top)}~A reference density $\rho$ and a handful of nodes drawn from it. \textbf{(bottom)}~Each measure is convolved with the kernel into its kernel mean embedding, the \textcolor{iidcv}{\textbf{reference curve}} $\mu_\rho$ and the \textcolor{ours}{\textbf{quadrature curve}} $\mu_Q$. The maximum mean discrepancy is the \emph{worst-case} integration error---the supremum, over every integrand of unit norm, of the difference between its averages under $\rho$ and under the quadrature~\eqref{eq:mmd-worst-case}. That supremum is attained by the \textbf{shaded gap} $g=\mu_\rho-\mu_Q$, the worst-case witness, and equals its \emph{size}: the reproducing-kernel Hilbert space norm $\lVert\mu_\rho-\mu_Q\rVert_{\mathcal H}$ of the whole gap function, not the height at any single point. Driving the quadrature curve onto the reference curve makes every integrand's error small at once.}
\label{fig:ped-kernel-mean}
\end{figure}

\begin{remark}[The guarantee is relative to the posterior the user supplies]
\label{rem:integrate-rho}
The quadrature integrates $\rho$, the posterior the user commits to, not the ground truth $\pi$ it approximates. By the triangle inequality, $\MMD(Q, \pi) \le \MMD(Q, \rho) + \MMD(\rho, \pi)$: the first term is the integration error this paper drives below the Monte-Carlo floor, while the second is the fidelity of the user's own posterior model~\citep{hosseini2026error}, which the construction neither touches nor claims to improve, and which is small for a faithful $\rho$ by the user's modeling effort rather than ours~\citep{SiahkoohiOh_2026}. Every guarantee below is stated against the supplied reference, never against a truth no estimator of it can access.
\end{remark}

\subsection{The quadrature is a mean-shift fixed point}
\label{sec:problem-meanshift}

The quadratic~\eqref{eq:mmd-quadratic} is convex in the weights but not in the nodes, and the node set that minimizes it has a revealing characterization: it is the fixed point of a \emph{mean shift}. The construction is most easily understood as a classical algorithm made to interact, so we build it from that root.

\paragraph{Classical mean shift.}
Given samples $\bx_1, \dots, \bx_N$ of a measure, the kernel density estimate $v_0(\bz) = \tfrac{1}{N}\sum_\ell k(\bx_\ell, \bz)$ is a smoothed picture of where their mass lies. Classical mean shift~\citep{fukunaga1975meanshift,comaniciu2002meanshift} walks a single point uphill on this estimate toward its nearest mode, and its fixed-point step is the ratio of a kernel-weighted average of the samples to their kernel-weighted count,
\begin{equation}
\label{eq:classical-meanshift}
\bz \;\mapsto\; \frac{\sum_\ell \bx_\ell\, k(\bx_\ell, \bz)}{\sum_\ell k(\bx_\ell, \bz)} \;=\; \bz + \sigma_x^2\, \nabla \log v_0(\bz),
\end{equation}
both equalities holding for the squared-exponential kernel of bandwidth $\sigma_x$ that we use throughout. A single point seeking a mode is one extreme of quantization; a quadrature is the other, $M$ points spread to cover the whole measure rather than piled on one mode.

\paragraph{Mean-shift interacting particles.}
\citet{belhadji2025datadriven} lift this single-point ascent into a quadrature by letting $M$ points interact. The nodes that make the discrepancy~\eqref{eq:mmd-quadratic} stationary are the fixed point of a damped iteration
\begin{equation}
\label{eq:meanshift}
\bz \;\leftarrow\; (1-\lambda)\,\bz + \lambda\, \Psi(\bz), \ \Psi(\bz)_i = \frac{(\bK^{-1} \bV_1)_i}{(\bK^{-1} \bv_0)_i},\ \bv_0(\bz') = \int k(\bx, \bz')\, \mathrm{d}\rho(\bx),\ \bV_1(\bz') = \int \bx\, k(\bx, \bz')\, \mathrm{d}\rho(\bx),
\end{equation}
in which $\bv_0$ is the kernel mean embedding of the reference at the nodes---the $v_0$ of the density estimate above---and $\bV_1$ its kernel-weighted first moment. The map $\Psi$ is the same average-over-count ratio as classical mean shift, now \emph{preconditioned} by the inverse node Gram $\bK^{-1}$. That preconditioner is the interaction: the off-diagonal of $\bK^{-1}$ couples the $M$ nodes, so they spread to cover the reference rather than collapse together. With a single node the preconditioner is a scalar and~\eqref{eq:meanshift} reduces exactly to classical mean shift~\citep{belhadji2025datadriven}; with $M$ nodes it is mean shift made to interact, and its fixed point is the quadrature of $\rho$. Equivalently, the iteration is preconditioned gradient descent on the discrepancy~\eqref{eq:mmd-quadratic} minimized over the weights, so its fixed points are precisely its stationary node configurations~\citep{belhadji2025datadriven}. We write $\Psi$ in the form it takes for the squared-exponential kernel; the general-kernel map, in which a second auxiliary kernel enters the Gram, is recorded in~\cref{app:proofs}.

\subsection{Two ways to read the embedding}
\label{sec:problem-embeddings}

Everything in~\eqref{eq:meanshift} reduces to the pair $(\bv_0, \bV_1)$, the kernel mean embedding of the reference and its first moment, and the two published constructions differ only in how they read it. The \emph{data-driven} mean shift~\citep{belhadji2025datadriven} reads the pair from samples and nothing else. Given draws $\bx_1, \dots, \bx_L \sim \rho$, it forms the embeddings as kernel sums,
\begin{equation}
\label{eq:embed-scorefree}
\bv_0(\bz') \approx \frac{1}{L}\sum_{\ell=1}^{L} k(\bx_\ell, \bz'), \qquad \bV_1(\bz') \approx \frac{1}{L}\sum_{\ell=1}^{L} \bx_\ell\, k(\bx_\ell, \bz'),
\end{equation}
so the mean-shift step is a responsibility-weighted average of the samples, evaluating neither the density of $\rho$ nor its score. This is the \emph{score-free} route, and the one \cref{sec:method} amortizes. The \emph{score-based} mean shift~\citep{belhadji2026msip}, which extends the construction to a reference known only through its unnormalized density, instead reads the same pair from that density and its score: writing the smoothed step $\sigma_x^2 \nabla \log v_0$ as a Gaussian-convolution integral and applying Stein's identity turns it into the multi-point estimate
\begin{equation}
\label{eq:embed-score}
\sigma_x^2\, \nabla \log v_0(\bz') = \sigma_x^2 \sum_{q} p_q(\bz')\, \nabla \log \rho(\bz' + \sigma_x \xi_q), \qquad \xi_q \sim \mathcal{N}(\bm{0}, \bm{I}),
\end{equation}
the posterior score $\nabla \log \rho$ read at inner points $\bz' + \sigma_x \xi_q$ and averaged with self-normalized weights $p_q \propto \rho(\bz' + \sigma_x \xi_q)$. The two routes reach the same map and differ only in whether the embedding is estimated from samples or from a score.

Either way, the map carries one property the amortization will depend on: it sees only the \emph{shape} of the reference, not its total mass. Scaling $\rho$ by any positive constant scales $\bv_0$ and $\bV_1$ together, and the constant cancels in the ratio $\Psi$, so the nodes are unchanged~(\cref{prop:norminv}). In the conditional setting the canceling constant is the per-observation evidence $p(\ystar)$, so an unknown, observation-varying normalization never reaches the quadrature---the fact that makes amortizing across a stream of observations well-posed, though, as the next subsection shows, not yet free.

\subsection{Why mean-shift interacting particles do not allow amortization}
\label{sec:problem-obstacle}

Mean-shift interacting particles deliver the object this paper wants---a quadrature below the Monte-Carlo floor---but at a cost that, in the settings we care about, is prohibitive. Three obstacles stand between the construction and a posterior consulted across a stream of observations.
\begin{enumerate}[leftmargin=*,topsep=2pt,itemsep=3pt]
\item[(i)] \emph{The quadrature is never available in closed form.} The nodes are reached only by iterating~\eqref{eq:meanshift} to its fixed point, a solve that is per-target and therefore rerun from scratch at every observation once the target is the conditional posterior $\rho(\cdot \mid \ystar)$. A stream of observations is a stream of independent solves.
\item[(ii)] \emph{The score-based route reads a score the setting does not provide.} A simulator delivers joint draws of parameters and data but exposes neither the posterior density nor its score, and where the likelihood is a partial-differential-equation forward model each score evaluation is itself a forward-and-adjoint solve. The most accurate route to a posterior quadrature is, in exactly the settings that motivate it, the most expensive.
\item[(iii)] \emph{The score-free route avoids the score but quantizes the wrong measure.} Its embeddings~\eqref{eq:embed-scorefree} are kernel sums over samples drawn from a single fixed measure, so---however those samples are weighted---the quadrature is of that one \emph{unconditional} measure, carrying no dependence on the observation $\ystar$. Stripped of the score, the construction is stripped of its tie to the observation.
\end{enumerate}
The three obstacles share a shape: the work that produces the quadrature is redone, in full, for every observation. What is absent is a route at once score-free, posterior-conditional, and amortized across observations---a single map that, having paid the cost once at training time, emits the quadrature of any observation's posterior in one forward pass. \Cref{sec:method} constructs it.

\FloatBarrier
\section{Amortized mean-shift interacting particles}
\label{sec:method}

Mean-shift interacting particles are the right object at the wrong cost: the quadrature beats the Monte-Carlo floor, but reaching it is a per-observation solve that, in its most accurate form, reads the score~(\cref{sec:problem-obstacle}). This section keeps the object and removes the cost. A single set-equivariant network emits the quadrature in one forward pass, at any requested resolution $M$ and for any observation $\ystar$, reading $\rho$ only through its samples---no per-observation solve, no density, and no score. \Cref{sec:method-net} builds the network as a learned one-step mean shift off the floor; \cref{sec:method-estimand} fixes the reference-agnostic estimand that lets one map serve a Gaussian, a trained flow, a physics posterior, or an empirical conditional measure alike; \cref{sec:objective} trains it by regressing the emitted measure onto $\rho$ with the weights in closed form, so the network learns only where to place the nodes; \cref{sec:why-move} explains why moving the nodes, not reweighting them, carries the gain; and \cref{sec:latent} removes the one obstacle that breaks the construction in high dimension.

\subsection{The set-equivariant emission}
\label{sec:method-net}

\begin{figure}[t]
\centering
\resizebox{\linewidth}{!}{%
\begin{tikzpicture}[
  font=\small,
  >={Stealth[length=2.6mm]},
  net/.style={draw=ours, fill=ours!8, line width=1.4pt, rounded corners=5pt,
              align=center, inner sep=9pt, minimum width=3.2cm, minimum height=1.8cm},
  panel/.style={rounded corners=3pt, line width=0.9pt},
  cond/.style={draw=black!50, fill=black!4, line width=0.9pt, rounded corners=3pt,
               align=center, inner sep=4pt, font=\footnotesize},
  bigflow/.style={->, line width=2.0pt, draw=black!75},
  tap/.style={->, line width=1.0pt, draw=black!50, densely dashed},
]

\def\baselen{1.9}

\coordinate (seedc) at (0,0);
\draw[iidcv!55, line width=0.9pt] ($(seedc)+(-\baselen/2,0)$) -- ($(seedc)+(\baselen/2,0)$);
\foreach \x/\h in {-0.78/0.42, -0.36/0.42, 0.04/0.42, 0.44/0.42, 0.82/0.42}{
  \draw[iidcv, line width=1.4pt] ($(seedc)+(\x,0)$) -- ($(seedc)+(\x,\h)$);
  \node[circle, draw=iidcv, fill=white, line width=0.8pt, minimum size=4pt, inner sep=0pt]
       at ($(seedc)+(\x,\h)$){};
}
\node[anchor=south, font=\footnotesize, iidcv, align=center] at ($(seedc)+(0,0.95)$)
  {$M$ i.i.d.\ samples\\[-1pt]$\{\bz_j^0\}\!\sim\!\rho(\cdot\mid\ystar)$};
\node[anchor=north, font=\scriptsize\itshape, text=iidcv!85, align=center] at ($(seedc)+(0,-0.18)$)
  {equal weights\\[-1pt]$=$ the i.i.d.\ floor};

\node[net] (net) at (4.4,0.15)
  {\bfseries\textcolor{ours}{set-equivariant network $f_\theta$}\\[3pt]
   \footnotesize\color{black!80} self- \& cross-attention\\[1pt]
   \footnotesize\color{black!80} $\rightarrow$ per-node displacement $\Delta\bz_j$};

\coordinate (outc) at (9.0,0);
\draw[ours!50, line width=0.9pt] ($(outc)+(-\baselen/2,0)$) -- ($(outc)+(\baselen/2,0)$);
\foreach \x/\h in {-0.72/0.62, -0.30/0.30, 0.10/-0.22, 0.50/0.50, 0.86/-0.16}{
  \draw[ours, line width=1.6pt] ($(outc)+(\x,0)$) -- ($(outc)+(\x,\h)$);
  \node[circle, draw=ours!85, fill=ours, minimum size=4.4pt, inner sep=0pt]
       at ($(outc)+(\x,\h)$){};
}
\node[anchor=south, font=\footnotesize, ours, align=center] at ($(outc)+(0,1.0)$)
  {designed nodes $\{\bz_j\}$\\[-1pt]\scriptsize\textcolor{solve}{signed weights $\{w_j\}$ (closed-form)}};
\node[anchor=north, font=\scriptsize\itshape, text=ours!85, align=center] at ($(outc)+(0,-0.38)$)
  {repositioned\\[-1pt]\& reweighted};

\node[anchor=west, font=\small, ours, align=left] (Q) at (11.0,0.15)
  {$\displaystyle Q=\sum_j w_j\,\delta_{\bz_j}$\\[3pt]\footnotesize the quadrature of $\rho(\cdot\mid\ystar)$};

\draw[bigflow] ($(seedc)+(\baselen/2+0.15,0.15)$) -- (net.west);
\draw[bigflow] (net.east) -- ($(outc)+(-\baselen/2-0.15,0.15)$);
\node[font=\scriptsize, text=black!70, align=center, inner sep=1pt]
  at ($(net.east)!0.5!($(outc)+(-\baselen/2-0.15,0.15)$)+(0,0.42)$)
  {$\bz_j=\bz_j^0$\\[-1pt]$+\,\Delta\bz_j$};
\draw[bigflow] ($(outc)+(\baselen/2+0.15,0.15)$) -- (Q.west);

\node[cond, minimum width=2.0cm] (yf) at ($(net.south)+(-1.15,-1.35)$)
  {$\ystar\!\to\!\bs(\ystar)$\\[1pt]\scriptsize informed summary};
\node[cond, minimum width=2.0cm] (fl) at ($(net.south)+(1.15,-1.35)$)
  {$M\!\to\!$ FiLM\\[1pt]\scriptsize any resolution};
\draw[tap] (yf.north) -- ($(net.south)+(-1.05,0)$);
\draw[tap] (fl.north) -- ($(net.south)+(1.05,0)$);

\end{tikzpicture}%
}
\caption{\textbf{One set-equivariant network turns $M$ independent samples into a weighted quadrature of $\rho(\cdot\mid\ystar)$ in a single forward pass, at any $M$ and any observation.} \textbf{(left)}~The seed is a set of $M$ independent samples $\{\bz_j^0\}\sim\rho(\cdot\mid\ystar)$, whose equal-weight average is the independent-sample floor. \textbf{(center)}~The network $f_\theta$---self-attention among the particles, cross-attention to an informed summary $\bs(\ystar)$~(\cref{sec:latent}), conditioned on $M$ by feature-wise modulation---emits a per-node displacement and returns $\bz_j=\bz_j^0+\Delta\bz_j$, so a zero displacement reproduces the floor and any motion is the sub-floor part. \textbf{(right)}~The moved nodes carry signed weights (positive up, negative down) and form the quadrature $Q=\sum_j w_j\,\delta_{\bz_j}$; the weights are \emph{not} learned but the closed-form unit-sum weights $\hat\bw$ of~\cref{prop:reweight-floor}, solved at the emitted nodes.}
\label{fig:architecture}
\end{figure}

We pay that per-observation solve once, at training time, by learning a map from an observation to its quadrature, and we make that map a \emph{resolution dial}: it accepts a requested node count $M$ and a seed of $M$ independent samples of $\rho$, and emits a quadrature at that $M$ in a single forward pass. Two questions decide the design---what the network should output, and how one network can serve any $M$---and the seed answers both.

The seed answers the first. The $M$ independent samples $\bz_1^0, \dots, \bz_M^0 \sim \rho(\cdot \mid \ystar)$ are exactly the configuration whose equal-weight average is the Monte-Carlo floor, so emitting them unmoved reproduces the floor. The network therefore learns only how to \emph{move} each node off the floor: it emits a per-node displacement $\Delta\bz_j$ and returns $\bz_j = \bz_j^0 + \Delta\bz_j$, so a zero displacement reproduces the floor and any move is the sub-floor gain. Nothing else about the rule changes: the reference, the kernel, the budget, and the floor are all held fixed. Only the node positions are learned---and even the weights are not, but follow in closed form~(\cref{sec:objective}).

The seed answers the second question too. Because the network ingests a \emph{set} of particles, its body is a stack of self-attention blocks~\citep{vaswani2017attention} among the $M$ seeds---the coordination and repulsion that turn an independent bag of samples into an interacting set---interleaved with cross-attention to an informed summary $\bs(\ystar)$ of the observation~(\cref{sec:latent}). The set view makes the network permutation-equivariant in the particles, in the manner of Deep Sets~\citep{zaheer2017deepsets} and the Set Transformer~\citep{lee2019settransformer}, so $M$ is a free dimension of the input rather than a fixed width of the architecture. Feature-wise modulation~\citep{perez2018film} then conditions the network on the requested $M$, so one trained map serves every resolution~(\cref{fig:architecture}). Learn where to move each node, attend among particles, modulate on $M$: the three choices together free the construction from a fixed budget.

\Cref{alg:emit} states the one-pass emission. The network runs forward once on the seed and the summary to place the nodes, and the emission closes with a single $M \times M$ ridged kernel solve for the weights---the closed-form unit-sum weights $\hat\bw$ that are provably no worse than equal weights~(\cref{prop:reweight-floor}). Nothing iterates, no density is evaluated, and no score is differentiated; the only solve at inference is one linear system. The next subsection fixes what $\rho$ the map reads, and the one after trains it so that this one-pass emission reproduces the quadrature the iteration would have reached.

\begin{algorithm}[t]
\caption{Amortized quadrature emission (one pass, at requested resolution $M$)}
\label{alg:emit}
\begin{algorithmic}[1]
\Require observation $\ystar$; requested node count $M$; sampler of the posterior $\rho(\cdot \mid \ystar)$; trained set-equivariant network $f_\theta$; parameter bandwidth $\sigma_x$; kernel ridge $\varepsilon$
\Ensure weighted quadrature $\{(\bz_j, w_j)\}_{j=1}^{M}$ approximating $\E_\rho[\,\cdot \mid \ystar]$
\State $\bz_{1:M}^0 \gets$ \textbf{draw} $M$ independent samples from $\rho(\cdot \mid \ystar)$ \Comment{the independent-sample seed; equal weights would be the floor}
\State $\bs(\ystar) \gets$ informed observation summary \Comment{\cref{sec:latent}; raw $\ystar$ where $d_y$ is small}
\State $\Delta\bz_{1:M} \gets f_\theta\big(\bz_{1:M}^0,\ \bs(\ystar),\ M\big)$ \Comment{one forward pass; self-attention among particles, FiLM on $M$; zero score, zero iteration}
\State $\bz_j \gets \bz_j^0 + \Delta\bz_j$ \Comment{displacement off the floor}
\State $\hat\mu_j \gets$ kernel mean of $\rho$ at $\bz_j$ from the seed samples;\quad $\bK_{jl} \gets k_{\sigma_x}(\bz_j, \bz_l)$;\quad $\bK \gets \bK + \varepsilon\bm{I}$ \Comment{ridged node Gram}
\State $\bw \gets \bK^{-1}(\hat{\bmu} + \kappa\,\bm{1})$, with $\kappa$ fixed by $\bm{1}^{\!\top}\bw = 1$ \Comment{unit-sum-constrained weight solve~(\cref{prop:reweight-floor})}
\State \Return $\{(\bz_j, w_j)\}_{j=1}^{M}$
\end{algorithmic}
\end{algorithm}

\subsection{The reference-agnostic estimand}
\label{sec:method-estimand}

One design choice makes the network indifferent to which posterior it integrates. The objective the next subsection minimizes touches $\rho$ only through the pair $(\bmu, c_\rho)$ of~equation~\eqref{eq:mmd-quadratic}---the kernel mean at the nodes and the self-affinity---both expectations under $\rho$. The reference type changes only how these two numbers are produced, never what the network does with them: closed form for a Gaussian or Gaussian mixture under the squared-exponential kernel; Monte-Carlo averages over the samples of a trained flow or a physics posterior; and kernel sums over reweighted atoms for an empirical conditional measure. Every case is read through the same pair, so the network of~\cref{sec:method-net} need not know which reference produced it.

One reference deserves a closer look, because it ties the construction to the prior score-free evidence and needs no trained model: the Nadaraya--Watson conditional measure~\citep{nadaraya1964regression,watson1964smooth}. Given joint samples $\{(\bx_i, \by_i)\}_{i=1}^{N} \sim p(\bx, \by)$ and a query $\ystar$, it reweights the dataset by the observation responsibilities,
\begin{equation}
\label{eq:nw}
\rho(\cdot \mid \ystar) \;=\; \sum_i \alpha_i(\ystar)\, \delta_{\bx_i}, \qquad \alpha_i(\ystar) = \frac{k_{\sigma_y}(\by_i, \ystar)}{\sum_j k_{\sigma_y}(\by_j, \ystar)}, \quad k_{\sigma_y}(\by, \by') = \exp\!\big(-\|\by - \by'\|^2 / 2\sigma_y^2\big),
\end{equation}
with $\sigma_y$ the observation bandwidth, distinct from the parameter bandwidth $\sigma_x$. Its kernel mean and self-affinity are again finite kernel sums over the responsibility-weighted atoms, so the same $(\bmu, c_\rho)$ machinery applies with no density and no score. Read as a memorized empirical prior, this measure is a bona-fide posterior the user may commit to, not merely a smoothing of one: for a matched Gaussian observation model the responsibilities are exactly the memorized-prior Bayes weights, and off that matched regime they are a bandwidth-inflated smoothing of them. The parameter bandwidth $\sigma_x$ then doubles as a \emph{de-memorization} dial, setting the scale on which the quadrature moves its nodes off the memorized atoms~\citep{SiahkoohiSabeddu_2026}. With the estimand fixed for any reference, the task reduces to one map that turns those samples into a quadrature.

\subsection{The training objective}
\label{sec:objective}

Training has one goal: the emitted measure should match the reference $\rho$, read in the same squared-exponential kernel at bandwidth $\sigma_x$ as the discrepancy~\eqref{eq:mmd-quadratic}. Writing $\hat\bz_j$ for the nodes the network emits and $\hat w_j$ for their closed-form weights---both differentiable in the network parameters $\theta$ through the $M \times M$ kernel solve---the network minimizes the expected squared maximum mean discrepancy between the emitted quadrature and the reference,
\begin{equation}
\label{eq:mmdreg}
\boxed{\,\textbf{Objective:}\ \mathcal{L}(\theta) \;=\; \E_{\ystar}\, \E_{\bz^0_{1:M} \sim \rho(\cdot\mid\ystar)}\, \MMD_{k_{\sigma_x}}^2\!\Big( \textstyle\sum_{j=1}^{M} \hat w_j\, \delta_{\hat\bz_j},\ \rho(\cdot \mid \ystar) \Big),\ M \sim \mathrm{Uniform}\{M_{\min}, \dots, M_{\max}\}.\,}
\end{equation}
Three features carry the design. First, the node count is drawn uniformly per batch, so one map is trained across the whole resolution range it will be queried at. Second, the discrepancy enters $\rho$ only through $(\bmu, c_\rho)$: exact where $\rho$ is closed-form, and a finite-sample estimate where $\rho$ is only sampleable---the same reference-agnostic reading as before. Third, the weights are not learned but are the closed-form unit-sum weights $\hat\bw$ at the emitted nodes~(\cref{prop:reweight-floor}); the network therefore learns \emph{only} the node positions---a learned one-step mean shift off the floor---and the gradient flows end to end through both the nodes and the weight solve, with no step through any fixed-point iteration.

\noindent\textit{Intuition.} The quadrature's own worst-case error is a discrepancy to the reference, and the mean shift is the iteration that descends it. Rather than imitate that iteration step by step, the network optimizes the discrepancy onto $\rho$ directly, so it is pulled toward the node sets the iteration would not move any further---the quadrature of $\rho$ itself. This is rigorous, not a shortcut: optimizing the discrepancy and distilling the iteration share a fixed-point set~(\cref{prop:fixedpoint}), exactly the configurations the mean shift leaves stationary.

\begin{remark}[What the amortized map buys, and what is proven]
\label{rem:budget-composition}
Queried at any $M$, the map composes with the sample budget rather than competing with it, and the composition splits along the two levers. Reweighting the $M$ seed samples is \emph{provably} no worse than those same samples at every $M$~(\cref{prop:reweight-floor}); moving the nodes is the empirically larger gain~(\cref{sec:why-move}). Together they give a Pareto improvement on the Monte-Carlo estimator of $\rho$, with one leg proven and one empirical. The same split runs across the node count: the closed-form weights are resolution-agnostic by construction, but that the \emph{learned positions} stay sub-floor at an $M$ unseen in training is an empirical claim, supported by the resolution sweep of~\cref{sec:abl-budget}, not a theorem.
\end{remark}

\begin{remark}[Reusing the map across sample batches]
\label{rem:pooling}
Because the map reads only a seed set of $M$ independent $\rho$-samples and is set-equivariant, it applies unchanged to any number of independent seed sets at the \emph{trained} budget $M$. Drawing $K$ such sets and emitting a quadrature $Q_k$ from each gives the pooled mixture $\bar{Q} = \tfrac{1}{K}\sum_{k} Q_k$, a $KM$-node weighted quadrature of unit mass. Two facts keep this principled. First, every pooled expectation $\bar{Q}[f] = \tfrac{1}{K}\sum_k Q_k[f]$ is the average of $K$ independent estimates of $\rho[f]$: it carries the same bias as a single emission but cuts its variance by $1/K$, so pooling is a strict variance reduction of any integral---and of the per-node uncertainty it reports---at no extra training and no node count beyond $M$. Second, by convexity of the squared maximum mean discrepancy, $\MMD^2(\bar{Q}, \rho) \le \tfrac{1}{K}\sum_k \MMD^2(Q_k, \rho)$, so the pooled measure is never worse than the average single emission, and stays below the $M$-sample floor whenever the emissions do~(\cref{prop:reweight-floor}). What pooling does \emph{not} buy is a below-floor result at the larger budget $KM$: the $K$ batches share one network and do not decorrelate as $KM$ freshly designed nodes would. It is a reuse of the amortized map to suppress Monte-Carlo noise---the device behind the smooth per-pixel uncertainty maps of the reconstruction figures~(\cref{fig:dq-cnf-recon,fig:dq-darcy-recon}).
\end{remark}

\subsection{Why moving the nodes beats reweighting}
\label{sec:why-move}

Reweighting is trapped by where the seed landed; moving the nodes is not~(\cref{fig:ped-levers}). The reweight lever matches or beats the floor by construction~(\cref{prop:reweight-floor}), but the larger gain comes from the node positions, so learning where to place them must earn itself against the cheaper option of solving weights at the random samples. The argument is a mechanism, not a rate theorem.

\begin{figure}[t]
\centering
\includegraphics[width=\linewidth]{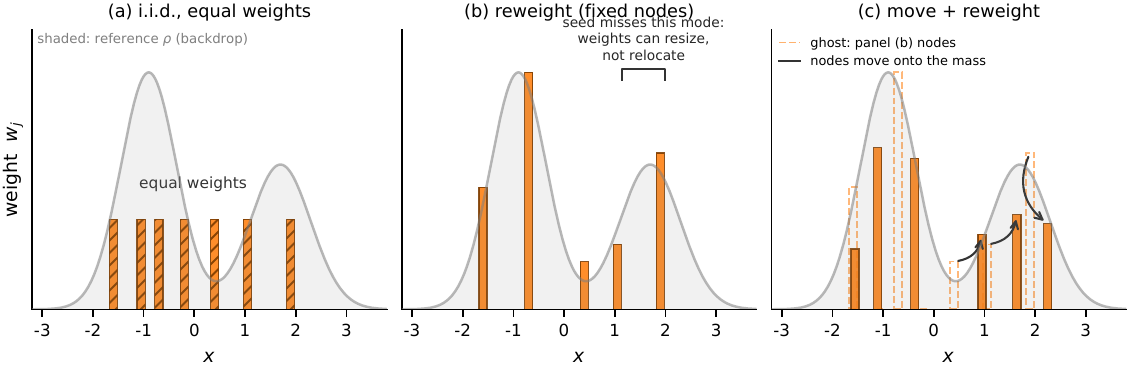}
\caption{\textbf{Two levers against the floor: reweighting resizes the mass at fixed nodes, while moving relocates the nodes themselves.} The same reference, integrated three ways in one dimension. \textbf{(a)}~Independent samples carry equal weight, the floor. \textbf{(b)}~The \emph{reweight} lever keeps the nodes fixed and rescales their weights toward the high-mass region, the constrained solve that is provably no worse than the floor~(\cref{prop:reweight-floor})---it can only resize what the random samples already placed. \textbf{(c)}~The \emph{move-and-reweight} lever additionally relocates the nodes onto the reference's mass, the empirically larger gain~(\cref{sec:why-move}).}
\label{fig:ped-levers}
\end{figure}

\paragraph{The trap is a fixed span.}
The reweighted discrepancy projects the reference's kernel mean $\mu_\rho$ onto the span of the node embeddings $\{k(\cdot, \bz_j)\}$; what survives is the part of $\mu_\rho$ orthogonal to that span, which no choice of weights can touch. Independent samples fix the span at random, so the surviving residual is whatever the seed happened to miss---a cap set by an accident of sampling. Moving the nodes removes the cap, because it chooses the span: a designed node set places the embeddings where $\mu_\rho$ concentrates and shrinks the orthogonal residual that reweighting is stuck with. The gain therefore follows the reference---negligible for a near-Gaussian the samples already cover, large where structure such as separated modes or a curved ridge is under-resolved~(\cref{fig:lever-split}).

\begin{figure*}[t]
\centering
\includegraphics[width=\linewidth]{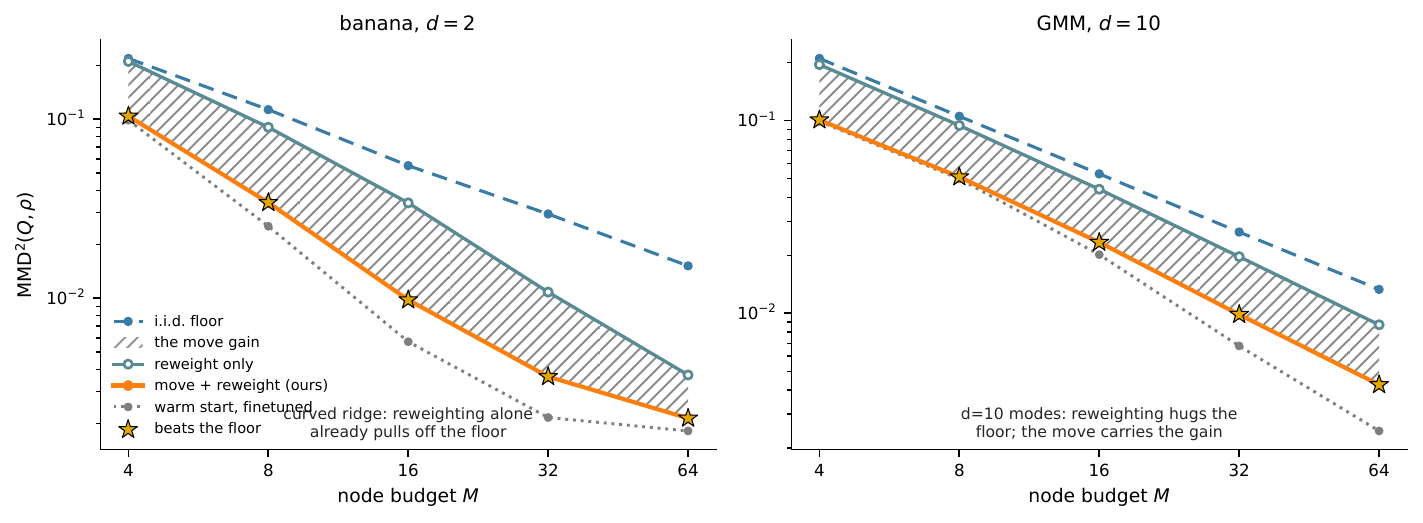}
\caption{\textbf{Reweighting alone is provably no worse than the floor at every budget; moving the nodes is the empirically larger gain that descends below it.} The squared maximum mean discrepancy to the reference against the node budget, for two representative targets. The \textcolor{iidcv}{\textbf{independent-sample floor}} is dashed; stars mark budgets that beat it. \textcolor{rwonly}{\textbf{Reweight only}} (teal, open marks) hugs the floor from below and never crosses above it~(\cref{prop:reweight-floor}); \textcolor{ours}{\textbf{move and reweight}} (orange, filled) descends further below, tracked from below by the \textcolor{solve}{\textbf{warm start, finetuned}}---the non-amortized per-query best the one-pass map nearly matches. The hatched band between the curves is the move gain: the empirical lever, an area rather than a number. On the curved (banana) target reweighting alone already lifts off the floor and the move gain is wide; on the well-separated mixture reweighting hugs the floor and the move-and-reweight leg carries the gain, largest where the reference carries structure random samples under-resolve.}
\label{fig:lever-split}
\end{figure*}

\begin{figure*}[t]
\centering
\includegraphics[width=\linewidth]{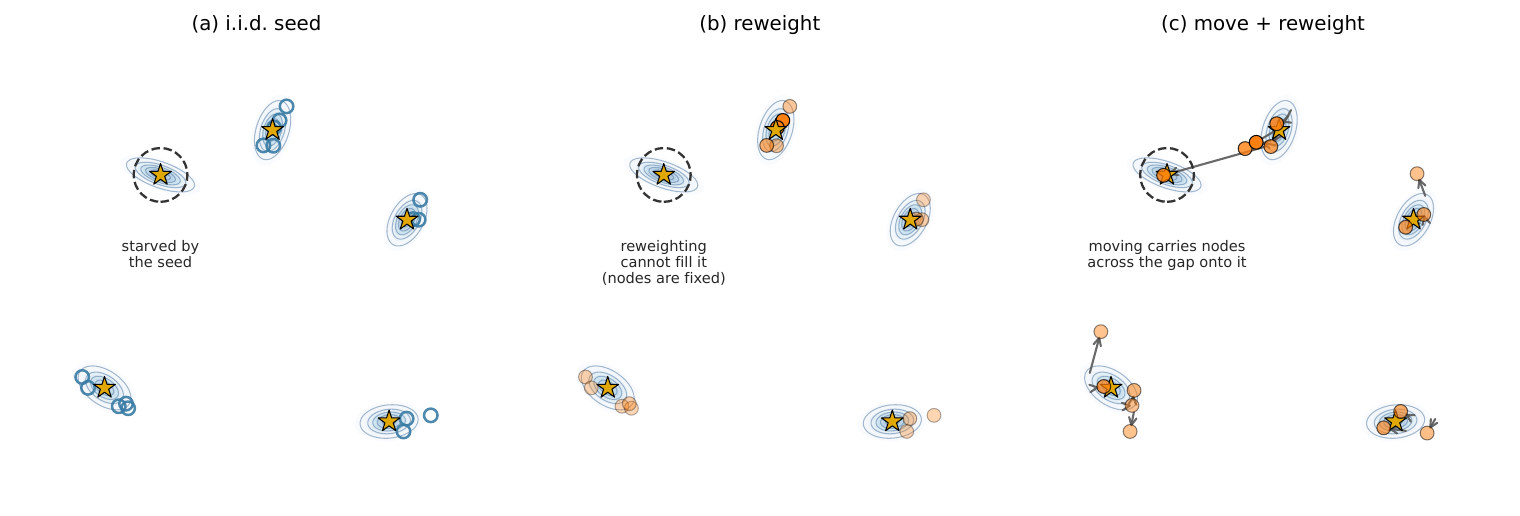}
\caption{\textbf{Reweighting cannot populate a mode the seed undersamples; moving the nodes carries them across the low-density gap onto it.} Three views on the same separated Gaussian-mixture reference, shown as a density backdrop with \textcolor{ours}{\textbf{mode anchors}} marked, under the median-heuristic bandwidth that spans the inter-mode gaps. \textbf{(a)}~The \textcolor{iidcv}{\textbf{seed}} of independent samples (equal weight) piles on the reached modes and leaves one bare. \textbf{(b)}~Reweighting the same nodes (opacity tracking weight magnitude) resizes mass among them, but the bare mode stays bare. \textbf{(c)}~The \textcolor{ours}{\textbf{designed nodes}} are relocated across the low-density gap onto the mode the seed missed; only~(b)$\rightarrow$(c) moves them. The mechanism is the $1/v_0$ preconditioning of the mean-shift step~\eqref{eq:meanshift}: dividing the kernel-weighted average by the local node density keeps the step order-one across the gap whenever the kernel bandwidth spans it.}
\label{fig:mechanism}
\end{figure*}

\paragraph{The $1/v_0$ preconditioning carries a node across a gap.}
The motion is a one-step mean shift, the ratio $\bK^{-1}\bV_1 / \bK^{-1}\bv_0$ of~\eqref{eq:meanshift}: the kernel-weighted average of the reference \emph{divided by} the local node density $v_0$. That division is the mechanism. A node stranded in a low-density gap sees its numerator shrink toward zero, but divided by the equally small $v_0$ it still takes a step of order one toward the mass it should cover, provided the kernel reaches that mass. At the median-heuristic bandwidth, which spans the inter-mode gaps, this carries nodes across the gaps onto modes the seed undersamples, where reweighting leaves them bare~(\cref{fig:mechanism}). The moved nodes land at a stationary point of the discrepancy~(\cref{prop:fixedpoint})---the quadrature the iterated mean shift would not move further---which the swept frontier confirms integrates the reference better than the random seed wherever it carries structure~(\cref{sec:exp-toys}).

\subsection{High dimension: the informed subspace}
\label{sec:latent}

Everything so far assumes the kernel sees structure in the nodes; in high ambient dimension it does not. On the \emph{ambient} parameters the squared-exponential kernel degenerates: distinct nodes become near-equidistant and the off-diagonal Gram entries become near-equal---the designed set loses the geometric contrast it exploits and collapses to a bag of $\rho$-samples, so the move gain vanishes~(\cref{rem:kernel-wall}). The remedy is a change of metric. The kernel is moved onto an \emph{informed summary} of the posterior---the likelihood-informed and certified-dimension-reduction subspaces of~\citet{cui2014lis,spantini2015optimal,zahm2022certified}, here read \emph{score-free} from $\rho$-samples rather than from the likelihood score---inside a posterior-whitened (Mahalanobis) kernel whose dimension-aware bandwidth $\sigma_x = \sqrt{d}$ holds the node Gram conditioned at every dimension. In the linear--Gaussian case the score-free summary recovers exactly the directions the score-based subspace would; for a general reference it recovers the central-mean subspace, which is what the observation-conditioning needs. The full construction---the whitened kernel, the bandwidth, and the score-free subspace estimator with its linear--Gaussian equivalence---is deferred to~\cref{app:subspace}; here it suffices that one whitened kernel keeps the designed quadrature below the floor at every measured dimension~(\cref{sec:exp-toys}), with the rank-truncated subspace used where the posterior is genuinely low-rank~(\cref{sec:exp-physics}).

\begin{remark}[The kernel wall in high ambient dimension, and its metric remedy]
\label{rem:kernel-wall}
Under an isotropic squared-exponential kernel the node Gram loses its geometric contrast as the ambient dimension grows. The squared pairwise distances concentrate---their relative fluctuation is $O(1/\sqrt{d})$---so distinct nodes become near-equidistant and the off-diagonal Gram entries become near-equal, the Gram approaching an equicorrelation matrix in which no node is geometrically distinguished. The constrained weight solve then returns near-equal weights, the quadrature degenerates to a bag of $\rho$-samples, and the node-position gain of~\cref{sec:why-move} disappears. The posterior-whitened kernel~\eqref{eq:whitened-kernel} with the constant-Gram-geometry bandwidth $\sigma_x = \sqrt{d}$~\eqref{eq:sqrtd} pins the typical off-diagonal coupling at $e^{-1}$ for every $d$ and removes the wall. The informed subspace $\bphi$ is the rank-truncated special case of that whitening: the gain lives in the $r$-dimensional summary the kernel acts through, the off-subspace directions are integrated by the closed-form prior moment, and the discarded tail is controlled by certified dimension reduction~\citep{zahm2022certified} rather than quadratured.
\end{remark}

With the wall removed, the amortized map emits a floor-beating quadrature in any dimension; whether that holds across posterior types in practice is the question the experiments take up~(\cref{sec:experiments}).

\FloatBarrier
\section{Finetuning the quadrature with more samples or a score}
\label{sec:refine}

The map emits the quadrature in one forward pass, with no per-observation solve and no score. A practitioner with budget to spare may reintroduce one of these per query: the emission can be finetuned. Two resources serve. More samples of $\rho$ refine the nodes score-free, continuing the mean shift the emission approximates; an available score refines them through the score-based mean shift, from the foot of the budget frontier~(\cref{sec:refine-warmstart}). The finetuning is an \emph{optional add-on}, not the core. It closes the amortization gap where the reference is well-conditioned, and it is fragile where the reference is sharp~(\cref{sec:refine-fragile}). A third use of an available score corrects the integrand rather than the nodes~(\cref{sec:refine-cv}). Throughout, the emission is the initialization---obtained for zero fixed-point iterations and zero forward solves---and every finetuning resumes from it.

\subsection{Warm-started per-query refinement}
\label{sec:refine-warmstart}

The emission approximates a mean shift, and a mean shift can be resumed. From the emitted nodes, two resources continue it. Where only samples of $\rho$ are at hand, the \emph{score-free} mean shift~\citep{belhadji2025datadriven} spends additional samples to keep descending the discrepancy from the one-pass emission. Where a score is available, the \emph{score-based} mean shift~\citep{belhadji2026msip} reads the unnormalized log-density and its score and moves the nodes \emph{and} their weights, keeping the quadrature throughout. Either resource treats the emission as a warm start, and the warm start is what the refiner saves. Started cold, from independent samples or the prior, the refiner reaches its fixed point only after many iterations---and so after many samples or many score evaluations. Started from the emission, it begins inside the basin and reaches that basin's fixed point in far fewer. The emitted quadrature need only be \emph{basin-accurate}: it must land inside the basin of attraction, not reproduce $\rho$. This requirement is strictly weaker than fitting the reference, and the accuracy of the refined quadrature belongs to the node set, not to where the refiner started.

Where the reference is well-conditioned, the refinement does more than accelerate. It carries the quadrature \emph{below} the zero-solve emission, toward the floor. On an exact-score linear--Gaussian target the warm-started arm sits below the cold one at every score-evaluation budget, both heading to a shared floor. On limited-angle tomography the reading repeats: the warm arm descends below the emission to the floor and stays at or below the cold arm at every budget~(\cref{sec:exp-refine}). That descent is the score's gain, and its mechanism is a contraction---the smoothed mean-shift map $\bz \mapsto \bz + \sigma_x^2 \nabla \log v_0(\bz)$ pulls toward the reference mean at every bandwidth, so the one forward pass and the per-query refinement are two endpoints of a single evaluation-budget knob, not alternatives. One caveat travels with the gain: the below-emission descent is conditional on the score being exact and the reference well-conditioned~(\cref{rem:finetune-fragile}). The acceleration reads only the node positions, and carries no such caveat.

The acceleration rests on a local argument, not a rate theorem, and its reach is part of the claim. The damped iteration $\bz \leftarrow (1-\lambda)\bz + \lambda\,\Psi(\bz)$ is a fixed-point map. Where it is differentiable at a fixed point $\bz^\star$ and the damped Jacobian $\bm{G} = (1-\lambda)\bm{I} + \lambda\,\partial\Psi(\bz^\star)$ has spectral radius below one, iterates started within the basin converge linearly, so a warm start nearer $\bz^\star$ crosses any fixed excess-discrepancy threshold in fewer iterations---each costing the per-node sample or score budget. The claim is local and conditional. It is silent outside the basin, weakened by a non-normal $\bm{G}$, and, as a theoretical boundary, vacuous where the node Gram degenerates toward the identity: there warm and cold contract at the same damping-set rate, and the value of the emission is the zero-solve integration rather than a refinement saving~(\cref{rem:finetune-fragile}).

\cref{alg:refine} states the refinement. It resumes the damped mean-shift iteration from the emitted nodes, reading the reference through additional samples or through the score in place of the dataset. One subtlety governs whether it descends at all: the embedding. The refinement descends the discrepancy only when $v_0$ is the $\sigma_x$-\emph{smoothed} kernel mean of the reference, not the raw density. For a Gaussian reference this smoothed embedding is closed form. For a general reference it is the multi-point Stein estimate~\eqref{eq:embed-score}, the score either analytic or a learned diffusion or flow score trained on the same dataset~\citep{song2019generative,ho2020ddpm,song2021sde,chung2023dps}. Each step shifts the nodes and re-solves the weights, so the iteration keeps both the nodes and their constrained weights---a quadrature throughout, never a bag of samples.

\begin{algorithm}[t]
\caption{Per-query finetuning of the emission (score-free with more samples, or score-based)}
\label{alg:refine}
\begin{algorithmic}[1]
\Require warm-start nodes $\bz^{(0)} = \bz_{1:M}$ from~\cref{alg:emit}; either additional samples of $\rho(\cdot \mid \ystar)$ (score-free) or its score $\nabla \log \rho$ (analytic, or a learned diffusion/flow score trained on the $(\bx, \by)$ dataset); bandwidth $\sigma_x$; damping $\lambda$; steps $T$; kernel ridge $\varepsilon$; box bounds $[\ell, u]$
\Ensure refined quadrature $\{(\bz_j^{(T)}, w_j)\}_{j=1}^{M}$
\For{$t = 1, \dots, T$}
  \State form the $\sigma_x$-smoothed embedding $(\bv_0, \bV_1)$ at $\bz^{(t-1)}$: \Comment{kernel sums over fresh $\rho$-samples (score-free); or, with a score,}
  \Statex \hspace{1.2em} $\sigma_x^2 \nabla \log v_0(\bz) = \sigma_x^2 \textstyle\sum_q p_q(\bz)\, \nabla \log \rho(\bz + \sigma_x \xi_q)$,\quad $\xi_q \sim \mathcal{N}(\bm{0}, \bm{I})$ \Comment{the multi-point Stein estimate~\eqref{eq:embed-score}, \emph{not} the raw density}
  \State $\bK_{jl} \gets k_{\sigma_x}(\bz_j^{(t-1)}, \bz_l^{(t-1)})$;\quad $\bK \gets \bK + \varepsilon \bm{I}$
  \State $\Psi(\bz)_i \gets (\bK^{-1} \bV_1)_i / (\bK^{-1} \bv_0)_i$ \Comment{log-stable mean-shift map}
  \State $\bz^{(t)} \gets (1 - \lambda)\, \bz^{(t-1)} + \lambda\, \Psi(\bz^{(t-1)})$;\quad clamp $\bz^{(t)}$ to $[\ell, u]$
\EndFor
\State $\bw \gets (\bK + \varepsilon\bm{I})^{-1}(\bv_0 + \kappa\,\bm{1})$ at $\bz^{(T)}$, with $\kappa$ fixed by $\bm{1}^{\!\top}\bw = 1$ \Comment{re-solve constrained weights at the final nodes}
\State \Return $\{(\bz_j^{(T)}, w_j)\}_{j=1}^{M}$
\end{algorithmic}
\end{algorithm}

\subsection{The below-emission descent is fragile on a sharp reference}
\label{sec:refine-fragile}

The below-emission descent does not survive a sharp reference. Where $\rho$ concentrates relative to the kernel bandwidth, the iterated mean shift over-contracts: the node Gram degenerates toward rank one and the signed-weight quadrature stops improving even as the nodes move, so the descent stalls~(\cref{rem:finetune-fragile}). The acceleration, which reads only the node positions, is untouched by this collapse, and the one-pass emission---which the trained map reaches without iterating---never enters it. The empirical line confirms both sides~(\cref{sec:exp-physics,sec:exp-refine}).

\begin{remark}[Where the finetuning is silent or harmful]
\label{rem:finetune-fragile}
The finetuning has two boundaries, one on its acceleration and one on its below-emission descent. \emph{The acceleration is silent} in three regimes that bound the local convergence reading above: a warm start outside the basin of attraction inherits no guarantee; a non-normal damped Jacobian $\bm{G}$ admits transient growth before the asymptotic contraction sets in, so the budget saving can be smaller than the spectral radius alone suggests; and where the node Gram degenerates toward the identity the interaction term collapses, $\partial\Psi(\bz^\star) \to \bm{0}$ and $\bm{G} \to (1-\lambda)\bm{I}$, so warm and cold starts contract at the same rate and a better starting point buys nothing---there the value of the emission is the amortized, zero-solve integration, not a refinement saving. \emph{The below-emission descent is harmful} in the mirror regime, where the reference is sharp. The descent rests on the signed-weight quadrature continuing to improve as the iterated mean shift moves the nodes; where $\rho$ is sharp relative to the kernel bandwidth, the smoothed kernel mean concentrates toward a near-point-mass across the nodes, the node Gram degenerates toward rank one, and the weight solve acquires large alternating-sign entries whose effective sample size collapses. The refinement then improves the node positions but not the signed-weight quadrature, and the descent stalls. The trained one-pass map does not iterate to this configuration~(\cref{sec:objective}).
\end{remark}

\subsection{The score correction}
\label{sec:refine-cv}

An available score has a second use, one that leaves the nodes untouched and corrects the integrand instead---the per-integrand complement of the all-integrand quadrature. Following the conditional neural control variate~\citep{SiahkoohiOh_2026} and the control-functional construction~\citep{oates2017controlfunctionals}, an observation-conditioned Stein control variate built from the reference score is subtracted from a chosen integrand. It integrates to zero under $\rho$ by Stein's identity~\citep{stein1972bound}, so it cancels that integrand's quadrature error rather than independent-sample variance. This is the per-integrand object the quadrature generalizes to all integrands at once~(\cref{sec:related}), reintroduced here as its complement: the quadrature integrates every $\bh$ with one $(\bz, \bw)$, while the correction sharpens a single $\bh$ further with its own score. Its reach is narrow. The correction helps only where the quadrature already carries appreciable on-support error; it is silent where the quadrature is tight, and harmful where the control variate, fitted on the dataset's marginal but read at the concentrated emitted nodes, falls off its training support. Where independent samples of $\rho$ are cheap, an independent-sample control variate is therefore preferable to the score the correction costs. All three uses of the budget---more samples, the score-based mean shift, the Stein correction---refine one object: the quadrature the map emits in a single pass. That emission, finetuned or not, is what the experiments now put to the references a practitioner supplies~(\cref{sec:experiments}).

\FloatBarrier
\section{Theoretical guarantees}
\label{sec:theory}

\Cref{sec:method,sec:refine} forward-referenced three guarantees; we state them here with their idea and a proof sketch, deferring the full statements, with all regularity conditions, to~\cref{app:proofs}. They answer three questions the construction raised: why a single map may serve observations of different and unknown evidence, why the reweight lever is never worse than the samples it reweights, and why training against the discrepancy is, in the end, the same as imitating the mean-shift iteration it replaces.

The first guarantee is what makes the amortization well-posed. The mean-shift map reads the reference only through a ratio of its embeddings, and a ratio cannot see an overall scale.

\begin{proposition}[Normalization-invariance of the mean-shift map]
\label{prop:norminv}
For any $c > 0$, replacing the reference $\rho$ by $c\,\rho$ scales both $\bv_0$ and $\bV_1$ of~\eqref{eq:meanshift} by $c$ and leaves the map $\Psi$ unchanged, so the nodes and their unit-sum weights are invariant to the total mass of the reference. In the conditional setting that mass is the per-observation evidence $p(\ystar)$, which therefore never enters the quadrature. Full statement and proof in~\cref{app:norminv}.
\end{proposition}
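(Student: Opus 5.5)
The argument is linearity of the embeddings followed by cancellation in ratios. First I will fix what "replacing $\rho$ by $c\rho$" means: the positive measure $c\rho$ has total mass $c$ times that of $\rho$. Then I will note that $\bv_0(\bz') = \int k(\bx,\bz')\,\mathrm{d}\rho(\bx)$ and $\bV_1(\bz') = \int \bx\, k(\bx,\bz')\,\mathrm{d}\rho(\bx)$ of~\eqref{eq:meanshift} are linear in the measure. Hence $\bv_0 \mapsto c\,\bv_0$ and $\bV_1 \mapsto c\,\bV_1$ at every node.

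The node Gram $\bK$, ridged or not, depends only on the nodes, so $\bK^{-1}$ is untouched. Therefore $(\bK^{-1}\bV_1)_i$ and $(\bK^{-1}\bv_0)_i$ both scale by $c$, and
\[
\Psi(\bz)_i \;=\; \frac{(\bK^{-1} c\bV_1)_i}{(\bK^{-1} c\bv_0)_i} \;=\; \frac{(\bK^{-1}\bV_1)_i}{(\bK^{-1}\bv_0)_i},
\]
wherever the denominator is nonzero. Since $c>0$, scaling never creates or removes a zero of the denominator. Induction on the damped iteration $\bz \leftarrow (1-\lambda)\bz + \lambda\Psi(\bz)$ then shows that every iterate, and hence every fixed point, is the same for $\rho$ and $c\rho$.

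The sample-based estimates~\eqref{eq:embed-scorefree} and the score-based estimate~\eqref{eq:embed-score} need a separate check.
\begin{itemize}
\item In the score-based route, $\nabla\log(c\rho) = \nabla\log\rho$ and the self-normalized weights $p_q \propto c\rho(\cdot)$ are unchanged, so the estimated step is literally identical.
\item In the sample-based route the samples of a normalized $\rho$ carry no scale at all, so the relevant reading is the unnormalized posterior $p(\by\mid\bx)p(\bx) = p(\ystar)\,\rho$. This is the conditional claim, with $c = p(\ystar)$.
\end{itemize}

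The step I expect to be the main obstacle is the weights. The unit-sum weights are $\bw = \bK^{-1}(\bmu + \kappa\bm 1)$ with $\bm 1^\top\bw = 1$. Here $\bmu \mapsto c\bmu$ while the constraint does not scale, so $\bw$ is not simply scaled. The substantive claim is that the weights are unchanged once $c\rho$ is renormalized to a probability measure. I will make this explicit: the nodes depend on $\rho$ only through its shape, and the weights are computed against the normalized reference $(c\rho)/(c\rho)(\mathbb{R}^{d_x}) = \rho$. So both outputs are functions of the normalized measure alone.

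I will also remark why the constraint matters. A solve that is homogeneous in the reference has weights exactly scaling with $c$, and these become invariant after dividing by their sum, $\bw/\bm 1^\top\bw$. The unit-sum constraint builds in exactly this normalization. For the general-kernel map of~\cref{app:proofs}, the same argument goes through verbatim, because the auxiliary kernel enters only the Gram and not the measure.
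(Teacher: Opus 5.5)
Your argument for the nodes is the paper's own: $\bv_0$ and $\bV_1$ are linear in the measure, the Gram depends only on the nodes, $c$ cancels in the ratio $\Psi$, and so the iterates and fixed points are identical. Your extra checks are correct and go beyond the paper. These are the zero set of the denominator, and the score-based route, where $\nabla\log(c\rho)=\nabla\log\rho$ and the self-normalized $p_q$ absorb $c$.

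The weights are where you depart from the paper, and one sentence there is wrong. The paper proves invariance for the mean-shift weight rule $\bw \propto \bK^{-1}\bv_0$. That rule is homogeneous of degree one in the reference, so it is unchanged after normalizing to unit sum. This is your closing remark, and it is the actual cancellation argument.

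Your main weight paragraph addresses a different object: the Lagrange-constrained solve $\bK^{-1}(\bmu+\kappa\bm{1})$ of \cref{prop:reweight-floor}. You are right that it is not homogeneous; the paper concedes that the two weight rules differ only later, in the proof of \cref{prop:fixedpoint}. But the sentence ``the unit-sum constraint builds in exactly this normalization'' is false. Feeding $c\bmu$ and fixing $\kappa$ by $\bm{1}^{\!\top}\bw=1$ gives
\[
\bw^\star(c) \;=\; c\Big(\bK^{-1}\bmu - \tfrac{\bm{1}^{\!\top}\bK^{-1}\bmu}{\bm{1}^{\!\top}\bK^{-1}\bm{1}}\,\bK^{-1}\bm{1}\Big) + \tfrac{\bK^{-1}\bm{1}}{\bm{1}^{\!\top}\bK^{-1}\bm{1}},
\]
which is affine in $c$ and moves with it unless $\bmu \propto \bm{1}$.

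So for the constrained solve, invariance does not follow from the constraint. It holds only because you stipulate that $\bmu$ is formed from the normalized measure, which makes that half of the claim true by convention rather than by cancellation. The implementation does satisfy this convention, through the $1/L$ sample averages and the normalized Nadaraya--Watson responsibilities.

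Keep the two rules separate. Prove the cancellation statement for $\bw\propto\bK^{-1}\bv_0$, as the paper does. Then state the invariance of the constrained weights explicitly as conditional on normalizing the reference before the solve.
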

\noindent\textit{Intuition.} The map is the ratio $\bK^{-1}\bV_1 / \bK^{-1}\bv_0$; a common positive factor on $\bv_0$ and $\bV_1$ cancels, so only the shape of the reference, not its mass, reaches the nodes. Because the evidence is exactly that mass, one map can serve every observation without ever computing it---the precondition for amortizing across a stream.

The second guarantee bounds the cheaper of the two levers. Reweighting a fixed node set is a convex problem, and the equal weights the Monte-Carlo estimator uses are one of its feasible points.

\begin{proposition}[Reweighting is no worse than equal weights]
\label{prop:reweight-floor}
Let $\bz_1, \dots, \bz_M$ be fixed nodes with Gram matrix $\bK$ in a characteristic kernel and $\bmu$ the kernel mean of $\rho$ at the nodes, and let $\bw^\star = \arg\min_{\bm{1}^{\!\top}\bw = 1} \MMD^2(\sum_j w_j\, \delta_{\bz_j}, \rho)$ be the unit-sum-constrained optimal weights. Then
\begin{equation}
\label{eq:reweight-floor}
\MMD^2\!\Big( \textstyle\sum_j w_j^\star\, \delta_{\bz_j},\ \rho \Big) \;\le\; \MMD^2\!\Big( \tfrac{1}{M}\textstyle\sum_j \delta_{\bz_j},\ \rho \Big),
\end{equation}
and when the nodes are $M$ independent samples of $\rho$ the right-hand side is the independent-sample floor at node budget $M$. The bound is exact for a closed-form $\bmu$ and holds up to the finite-sample estimand of~\cref{prop:fixedpoint} for a sampled $\rho$. Full statement and proof in~\cref{app:reweight-floor}.
\end{proposition}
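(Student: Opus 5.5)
The inequality follows from feasibility alone. By \eqref{eq:mmd-quadratic}, for fixed nodes the squared discrepancy is the quadratic
\[
F(\bw) \;=\; \bw^{\!\top}\bK\,\bw - 2\,\bw^{\!\top}\bmu + c_\rho ,
\]
which is convex in $\bw$ because $\bK$ is a Gram matrix and hence positive semidefinite. The equal-weight vector $\bw_{\mathrm{eq}} = \tfrac{1}{M}\bm{1}$ satisfies $\bm{1}^{\!\top}\bw_{\mathrm{eq}} = 1$, so it is feasible for the constrained problem defining $\bw^\star$. Since $\bw^\star$ minimizes $F$ over this affine set, $F(\bw^\star) \le F(\bw_{\mathrm{eq}})$. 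This is exactly \eqref{eq:reweight-floor}, because $F(\bw_{\mathrm{eq}})$ is the squared discrepancy of $\tfrac1M\sum_j\delta_{\bz_j}$.

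Before that comparison I need the minimizer to exist. I would handle this by showing $F$ is bounded below on the affine set. We have $F(\bw) = \|\sum_j w_j k(\cdot,\bz_j) - \mu_\rho\|_{\mathcal H}^2 \ge 0$, so the infimum is finite. It is attained because a convex quadratic that is bounded below on an affine subspace attains its minimum there. If $\bK$ is singular, $\bw^\star$ may be non-unique, but any minimizer works. I would also check the Lagrangian (KKT) form $\bw^\star = \bK^{-1}(\bmu + \kappa\bm{1})$, with $\kappa$ fixed by the constraint, to tie the statement to \cref{alg:emit} when $\bK$ is invertible.

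The ridge $\varepsilon$ used in the algorithm needs separate care. The ridged solution minimizes $F(\bw) + \varepsilon\|\bw\|^2$ rather than $F$, so it gives $F(\hat\bw) + \varepsilon\|\hat\bw\|^2 \le F(\bw_{\mathrm{eq}}) + \varepsilon/M$. I would record this as an $O(\varepsilon)$ slack and state the exact claim for $\varepsilon \to 0$, using the pseudo-inverse when $\bK$ is singular.

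The floor interpretation is immediate. If $\bz_j \sim \rho$ are independent, the right-hand side evaluated on those same nodes is the Monte-Carlo discrepancy at budget $M$, and the inequality holds pathwise, so it also holds in expectation.

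The main obstacle is the sampled-$\rho$ case, where $\bmu$ and $c_\rho$ are replaced by estimates $\hat\bmu$ and $\hat c$. The argument above applies verbatim to the estimated objective $\hat F$, so $\hat F(\hat\bw^\star) \le \hat F(\bw_{\mathrm{eq}})$ holds exactly. To transfer this to the true $F$, I would use the decomposition
\[
F(\bw) - \hat F(\bw) \;=\; -2\,\bw^{\!\top}(\bmu - \hat\bmu) + (c_\rho - \hat c).
\]
The constant $c_\rho - \hat c$ cancels when comparing the two weight vectors. The remaining error is bounded by $2\|\bw\|_1 \max_j|\mu_j - \hat\mu_j|$ applied to both $\hat\bw^\star$ and $\bw_{\mathrm{eq}}$. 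This is the sense of \emph{holds up to the finite-sample estimand}: an additive error of order $\|\hat\bw^\star\|_1 \,\|\bmu - \hat\bmu\|_\infty$.

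The delicate part is that signed weights can have large $\ell_1$ norm, so this term is not automatically small. I would either state the bound relative to the estimand itself, which is what the proposition asserts, or add an explicit $\ell_1$ control via the ridge.

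A further subtlety arises when the same seed samples are used both as the nodes and to estimate $\hat\bmu$. Then $\hat\bmu$ and the nodes are dependent, and the transfer above is where that dependence would bite.
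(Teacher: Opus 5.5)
Your proposal is correct and follows the paper's proof: both write $\MMD^2$ as the convex quadratic $\bw^{\!\top}\bK\bw - 2\bw^{\!\top}\bmu + c_\rho$, note that $\tfrac{1}{M}\bm{1}$ is feasible on the unit-sum hyperplane, compare the constrained minimum against that point, and add the Lagrangian closed form and the pathwise floor reading. Your ridge step can be sharpened, since you (like the paper) concede an $O(\varepsilon)$ slack that is not actually there: every unit-sum $\bw$ satisfies $\|\bw\|^2 \ge (\bm{1}^{\!\top}\bw)^2/M = 1/M$, so your own inequality $F(\hat\bw) + \varepsilon\|\hat\bw\|^2 \le F(\bw_{\mathrm{eq}}) + \varepsilon/M$ already gives $F(\hat\bw) \le F(\bw_{\mathrm{eq}})$ exactly, i.e., the ridged weight solve is itself never worse than equal weights on the same objective.
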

\noindent\textit{Intuition.} The squared discrepancy is a convex quadratic in the weights, and the equal weights are one feasible unit-sum point of it; the constrained minimizer ranges over every feasible point, so it can only match or improve on the one it already contains. The bound is pathwise---at the given nodes---and its expectation over random nodes is the canonical $1/M$ floor.

The reweight floor holds at any nodes, but the residual it leaves is set by where the random seed landed~(\cref{sec:why-move}); the larger gain comes from moving the nodes, and the network learns those positions by descending the discrepancy directly. The third guarantee certifies that this direct descent and the mean-shift iteration it replaces reach the same configurations.

\begin{proposition}[Shared fixed points of the distillation and discrepancy objectives]
\label{prop:fixedpoint}
Under a squared-exponential kernel at a common bandwidth and a single damped inner step, the plain-distillation loss---which regresses the emitted nodes onto their own image under the mean-shift map---and the maximum-mean-discrepancy regression objective~\eqref{eq:mmdreg} onto $\rho$ share a fixed-point set: the node configurations at which the distillation loss vanishes are exactly those at which $\MMD(Q, \rho)$ is stationary under the mean-shift weight rule. For a closed-form $\rho$ the equivalence holds verbatim; for a sampled $\rho$ it is the stationarity of the finite-sample estimand, with an $O(L^{-1/2})$ gap in the number of reference samples $L$. Full statement and proof in~\cref{app:fixedpoint}.
\end{proposition}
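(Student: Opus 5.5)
The plan is to reduce both objectives to a statement about the gradient of the weight-profiled discrepancy, and then to read the mean-shift map as a rescaled gradient step on it. First I eliminate the weights. Fix nodes $\bz=(\bz_1,\dots,\bz_M)$ and use the mean-shift weight rule $\bw(\bz)=\bK^{-1}\bv_0$. This is the unconstrained minimizer of~\eqref{eq:mmd-quadratic}, with $\bmu=\bv_0$ by definition. Substituting gives the profiled objective
\begin{equation*}
F(\bz)=c_\rho-\bv_0^{\top}\bK^{-1}\bv_0 .
\end{equation*}
By the envelope argument, the total derivative of $\MMD^2$ along the weight rule equals the partial derivative in $\bz_i$ at fixed $\bw$:
\begin{equation*}
\nabla_{\bz_i}F=2\sum_l w_iw_l\nabla_{\bz_i}k(\bz_i,\bz_l)-2w_i\nabla_{\bz_i}v_0(\bz_i).
\end{equation*}

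Second, I specialize to the squared-exponential kernel at bandwidth $\sigma_x$. There $\nabla_{\bz}k(\bx,\bz)=\sigma_x^{-2}(\bx-\bz)k(\bx,\bz)$, so $\nabla v_0(\bz_i)=\sigma_x^{-2}(\bV_1(\bz_i)-\bz_i v_0(\bz_i))$. Similarly, $\sum_l w_l\nabla_{\bz_i}k(\bz_i,\bz_l)=\sigma_x^{-2}\big(\sum_l w_l\bz_l k_{il}-\bz_i(\bK\bw)_i\big)$. Since $(\bK\bw)_i=v_0(\bz_i)$, the $\bz_i v_0$ terms cancel and I obtain
\begin{equation*}
\nabla_{\bz_i}F=\tfrac{2w_i}{\sigma_x^2}\Big(\big(\bK\,\mathrm{diag}(\bw)\,\bz\big)_i-\bV_1(\bz_i)\Big).
\end{equation*}
Here $\bz$ is the $M\times d$ node matrix, so $(\bK\,\mathrm{diag}(\bw)\,\bz)_i=\sum_l k_{il}w_l\bz_l$, and I apply the same row convention to $\bK^{-1}\bV_1$ below. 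Stationarity of $F$ at a node with $w_i\neq 0$ is therefore the linear system $\bK\,\mathrm{diag}(\bw)\,\bz=\bV_1$. Multiplying by $\bK^{-1}$, this reads $w_i\bz_i=(\bK^{-1}\bV_1)_i$, i.e.\ $\bz_i=(\bK^{-1}\bV_1)_i/(\bK^{-1}\bv_0)_i=\Psi(\bz)_i$. I must check that the $\bK^{-1}$ here matches the Gram used in~\eqref{eq:meanshift}. In the general-kernel version recorded in~\cref{app:proofs} an auxiliary kernel enters; for the squared-exponential kernel at a common bandwidth it coincides, and that is exactly why the hypothesis is there.

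Third, the plain-distillation loss with one damped inner step is $\sum_i\|\bz_i-\mathrm{sg}[(1-\lambda)\bz_i+\lambda\Psi(\bz)_i]\|^2=\lambda^2\|\bz-\Psi(\bz)\|^2$. It vanishes iff $\bz=\Psi(\bz)$ for any $\lambda>0$. Combined with the second step, the zero set of the distillation loss equals the stationary set of $F$, i.e.\ of $\MMD(Q,\rho)$ under the weight rule. The proviso is $w_i=(\bK^{-1}\bv_0)_i\neq0$, which is also exactly where $\Psi$ is defined. So this is an exact set equality, and the only thing to state carefully is this nondegeneracy condition together with invertibility of $\bK$ (characteristic kernel, distinct nodes, or the ridge $\varepsilon$, which I keep consistently in both objectives). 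The unit-sum variant used elsewhere differs by the multiplier $\kappa$. The proposition is stated for the mean-shift weight rule, so I do not need it, but I will remark that the normalization-invariance of~\cref{prop:norminv} is what makes the two weight rules differ only by a scale at such points.

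The main obstacle is the sampled case. There I would replace $\bv_0,\bV_1,c_\rho$ by their $L$-sample kernel sums~\eqref{eq:embed-scorefree} and rerun the first three steps verbatim. The argument is algebraic, so the equivalence holds exactly for the finite-sample estimand. To relate that estimand to the population one, I would bound the embedding errors. The kernel is bounded by one, so standard kernel-mean concentration gives $\sup_{\bz}|\hat v_0-v_0|=O_P(L^{-1/2})$, and the same rate holds for $\hat\bV_1$ under a finite second moment of $\rho$. I would then propagate these errors through $\bK^{-1}$, whose norm is controlled by the ridge, and through the division by $w_i$, which needs $|w_i|$ bounded below. That gives an $O(L^{-1/2})$ perturbation of both $\Psi$ and $\nabla F$. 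The lower bound on $|w_i|$ near fixed points is the delicate assumption; I would state it as a regularity condition rather than try to derive it.
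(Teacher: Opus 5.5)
Your argument is correct, but at its core it takes a different route from the paper. For the closed-form case the paper does not derive the link between discrepancy stationarity and the mean-shift map. It imports from \citet{belhadji2025datadriven} the result that configurations stationary for $\MMD^2(\cdot,\rho)$ under the weight rule $\bw\propto\bK^{-1}\bv_0$ are exactly the fixed points of $\Psi$. It then proves only the distillation zero set (your third step, at a single inner step) and the transfer to the empirical measure $\hat\rho_L$.

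You prove the imported step yourself. You profile out $\bw=\bK^{-1}\bv_0$, apply the envelope theorem, and use $\nabla_{\bz}k=\sigma_x^{-2}(\bx-\bz)k$ with $(\bK\bw)_i=v_0(\bz_i)$ to reach the gradient $\tfrac{2w_i}{\sigma_x^2}\big((\bK\,\mathrm{diag}(\bw)\,\bz)_i-\bV_1(\bz_i)\big)$. The paper's route is shorter. Yours buys three things:
\begin{enumerate}
\item It is self-contained, and it shows exactly where the squared-exponential, common-bandwidth hypothesis enters: the derivative kernel is proportional to $k$, so the $\bz_i v_0$ terms cancel.
\item It surfaces a nondegeneracy proviso the paper leaves implicit. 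That proviso must hold for \emph{all} $i$ at once, not node by node. A single $w_i=0$ leaves $\Psi$ undefined and drops one row of the coupled system before you multiply by $\bK^{-1}$.
\item It extends exactly to the ridged Gram. With $\bw=(\bK+\varepsilon\bm{I})^{-1}\bv_0$ one has $(\bK\bw)_i=v_0(\bz_i)-\varepsilon w_i$, so stationarity reads $(\bK+\varepsilon\bm{I})\,\mathrm{diag}(\bw)\,\bz=\bV_1$. This is precisely the fixed-point equation of the ridged map used in the refinement. The paper settles for an $O(\varepsilon)$ slack there.
\end{enumerate}

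For a sampled $\rho$, both proofs rest on the same observation: the estimand is the exact discrepancy to $\hat\rho_L$, so the algebra applies verbatim. The paper then claims only $O(L^{-1/2})$ convergence of the objective value and explicitly disclaims convergence of the stationary set. Your perturbation bound on $\Psi$ and $\nabla F$ is more informative, but it needs the extra regularity you name. Neither version yields convergence of the fixed-point set without an additional implicit-function or argmin-continuity argument.

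One side remark should be dropped. It claims that the unit-sum weights differ from the mean-shift weights only by a scale at fixed points, and that this follows from \cref{prop:norminv}. Neither part holds:
\begin{itemize}
\item The vector $\bK^{-1}(\bv_0+\kappa\bm{1})$ is parallel to $\bK^{-1}\bv_0$ only when $\kappa=0$ or $\bv_0\parallel\bm{1}$.
\item \cref{prop:norminv} concerns rescaling $\rho$ inside $\Psi$. The unit-sum discrepancy against the probability measure $\rho$ is not invariant under that rescaling.
\end{itemize}
The paper states the opposite of your remark. The equivalence is with the mean-shift-weighted regression, and the boxed objective~\eqref{eq:mmdreg}, with its constrained solve, matches the stationary set only up to the $\kappa$ offset. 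Your main argument never uses the remark, so the proof has no gap, but as written the remark would overextend the proposition to the boxed objective.
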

\noindent\textit{Intuition.} Freezing distillation's target at the map's image of the nodes makes its loss zero exactly when that image coincides with the nodes---a fixed point---which is exactly a stationary configuration of the discrepancy the regression descends. The two objectives agree on \emph{where} the answer is and differ only in how they reach it. This is a shared \emph{stationary} set, not a shared global minimum: the moved nodes the network learns are these stationary quadratures, near-optimal rather than certified optima.

Together the three settle what the construction promises. The evidence never enters~(\cref{prop:norminv}), the reweight lever cannot lose~(\cref{prop:reweight-floor}), and the learned positions target the same stationary quadrature the iteration would reach~(\cref{prop:fixedpoint}). What none of them certifies---the size of the move gain, its persistence at a node budget unseen in training, and the behavior on a sharp reference---is exactly what the empirical evidence now takes up.

\FloatBarrier
\section{Empirical evidence}
\label{sec:experiments}

\Cref{sec:theory} left three things uncertified: the size of the move gain, its persistence at a node budget unseen in training, and the behavior on a sharp reference. This section takes up the first across posterior types and dimension, and the latter two are the stress tests of~\cref{sec:stress}. The stages climb one ladder---the reference grows harder to supply. It is closed-form on the toys and the peer of a tuned sampler~(\cref{sec:exp-toys,sec:exp-svgd}); then conditioned on the observation for a linear-Gaussian tomography posterior whose covariance does not move~(\cref{sec:exp-amort}); then a trained conditional flow whose shape moves with the observation~(\cref{sec:exp-cnf}); then the Nadaraya--Watson conditional the physics posterior commits to~(\cref{sec:exp-nw,sec:exp-physics}); and finally the warm-versus-cold finetuning frontier, including where it fails~(\cref{sec:exp-refine}). Each stage hands the next its open question.

\begin{figure*}[t]
\centering
\includegraphics[width=\linewidth]{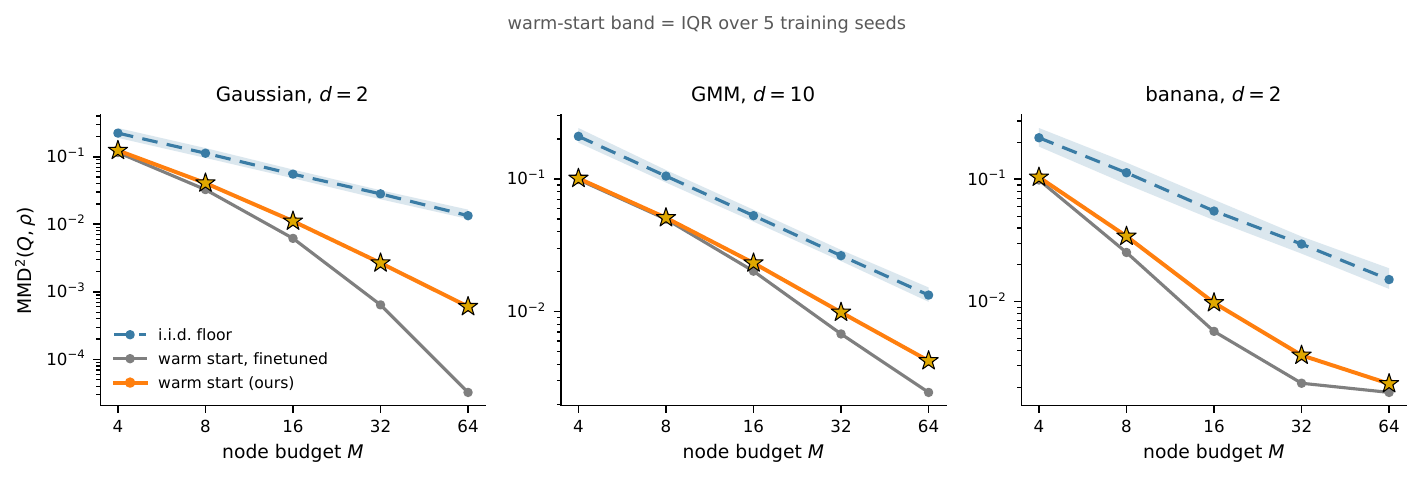}
\caption{\textbf{One trained network beats the independent-sample floor at every node budget and on every target, in a single forward pass.} The squared maximum mean discrepancy to the reference against the node budget, for a representative closed-form Gaussian, a two-component Gaussian mixture, and the curved banana whose estimand is estimated from samples. The dashed \textcolor{iidcv}{\textbf{independent-sample floor}}~(\cref{sec:problem}) is the baseline. The \textcolor{ours}{\textbf{one-pass warm start}} lies below it at every budget. The \textcolor{solve}{\textbf{warm start, finetuned}}---the move-and-reweight descent run to convergence per set, the non-amortized per-query best---is the target the one-pass warm start nearly matches and tracks from above. Reweighting alone is provably no worse than the floor~(\cref{prop:reweight-floor}); moving the nodes descends below it, deepest on the banana at modest budgets where its curvature is the structure random samples most miss. The curve is unchanged in form where the reference is only sampleable.}
\label{fig:dq-mmd}
\label{fig:toy}
\end{figure*}

\subsection{Closed-form and sampled toys: the warm start beats the floor across targets and dimensions}
\label{sec:exp-toys}

Start with the core claim, stripped of conditioning. A single set-equivariant network, trained once and run in one forward pass on held-out samples, emits a weighted quadrature whose discrepancy to the reference sits below that of the same number of independent samples. This subsection settles the claim where the estimand is in hand: the Gaussian dimension sweep, the two-component Gaussian-mixture sweep, and the curved banana of the conditional control variate~\citep{SiahkoohiOh_2026}. It also reads off two facts the high-dimensional stages will need---the kernel wall and the posterior-whitened metric that removes it~(\cref{sec:whitened-kernel}). One reference is fixed per target, so these are unconditional warm starts; conditioning on the observation is the next stage~(\cref{sec:exp-amort}).

\paragraph{One trained net, one forward pass, three arms.}
The protocol is the same on every target. A single network is trained with the node count drawn uniformly over a range of resolutions, then run on held-out independent samples across that range. At each resolution three quantities are scored by the squared maximum mean discrepancy to the reference. The \emph{independent-sample floor} is the same number of samples under equal weights, the baseline of~\cref{sec:problem}. The \emph{one-pass warm start} is the network's emitted nodes under the closed-form unit-sum-constrained weights~(\cref{prop:reweight-floor}). The \emph{warm start, finetuned} is the per-set descent the trained net amortizes---the non-amortized per-query best the move-and-reweight lever reaches when run to convergence on each held-out set. The kernel bandwidth is the per-target median heuristic, shared between the displacement and the weight solve. Setting it per target keeps the Gaussian-mixture kernel coupling within mode as the dimension grows, where a fixed small bandwidth would wall. The Gaussian and Gaussian-mixture references admit the estimand $(\bmu, c_\rho)$ of~equation~\eqref{eq:mmd-quadratic} in closed form under the squared-exponential kernel; the banana admits none. Its kernel mean and self-affinity are therefore estimated from a fixed sample dataset---the same empirical estimand a trained flow or a physics posterior supplies downstream, differentiable in the node positions while empirical in the reference. The banana thus does double duty: a harder target, and the bridge to the sampled references the rest of the section relies on.

\paragraph{The warm start sits below the floor at every budget and every target.}
Across all three targets the one-pass warm start sits below the independent-sample floor at every node budget~(\cref{fig:dq-mmd}), and the result separates the two levers of~\cref{rem:budget-composition}. Reweighting alone is provably no worse than the floor, since equal weights are a feasible unit-sum point of the constrained solve~(\cref{prop:reweight-floor}); that leg carries the guarantee at every budget without moving a node. Moving the nodes is the empirically larger gain. The warm start descends strictly below the reweight floor wherever the reference has structure the random samples miss, tracking the warm start, finetuned from above and trailing it as the resolution grows---the headroom the amortization leaves, never a crossing back above the floor. The reweight leg is what the construction can prove; the move leg is the larger improvement these toys make real.

\paragraph{The banana's curvature makes the kernel bite hardest, and the pipeline runs unchanged when $\rho$ is only sampleable.}
The banana answers it. Neither Gaussian nor a mixture, with no closed-form estimand, it carries the deepest margin at modest node budgets. Its curvature is the structure the random samples most miss: the designed nodes concentrate where the kernel mean lives, so the move lever bites hardest there. The banana is also the sampled-reference bridge. Its kernel mean and self-affinity are Monte-Carlo estimates from a fixed dataset, yet the architecture, the weight solve, the held-out evaluation, and the floor-beating margin are all unchanged from the closed-form targets. Whether the reference is a closed-form density or only sampleable, the pipeline runs the same way---the interface a trained conditional flow~(\cref{sec:exp-cnf}) and a physics posterior~(\cref{sec:exp-physics}) plug into without modification.

\paragraph{The whitened kernel holds the margin as the ambient dimension grows, where the isotropic kernel walls.}
One threat remains: dimension. A sweep over ambient dimension on the Gaussian and Gaussian-mixture targets both names the threat and removes it. Under the isotropic squared-exponential kernel the warm start stays below the floor at every dimension, yet the margin closes monotonically toward it. The mechanism is the kernel wall: the squared pairwise distances concentrate, the off-diagonal Gram entries become near-equal, and the designed set collapses toward a bag of independent samples, where the two levers nearly coincide~(\cref{rem:kernel-wall}). The wall is a property of the kernel's metric, not of high-dimensional quadrature. The posterior-whitened (Mahalanobis) kernel with the constant-Gram-geometry bandwidth removes it~(\cref{sec:whitened-kernel}). Under that metric the warm start stays well below the floor at every measured dimension and the margin stays open: on the Gaussian sweep, whose precision is the exact metric, and on the mixture sweep, whose pooled global metric leaves a still sub-floor margin. The warm start, finetuned recovers its room to beat the floor in step, so whitening restores both the ceiling and the network's reach for it. The informed subspace of~\cref{sec:latent-subspace} is the rank-truncated reading of this same metric---the form the physics posterior of~\cref{sec:exp-physics} uses where the posterior is genuinely low-rank.

\begin{figure*}[t]
\centering
\includegraphics[width=\linewidth]{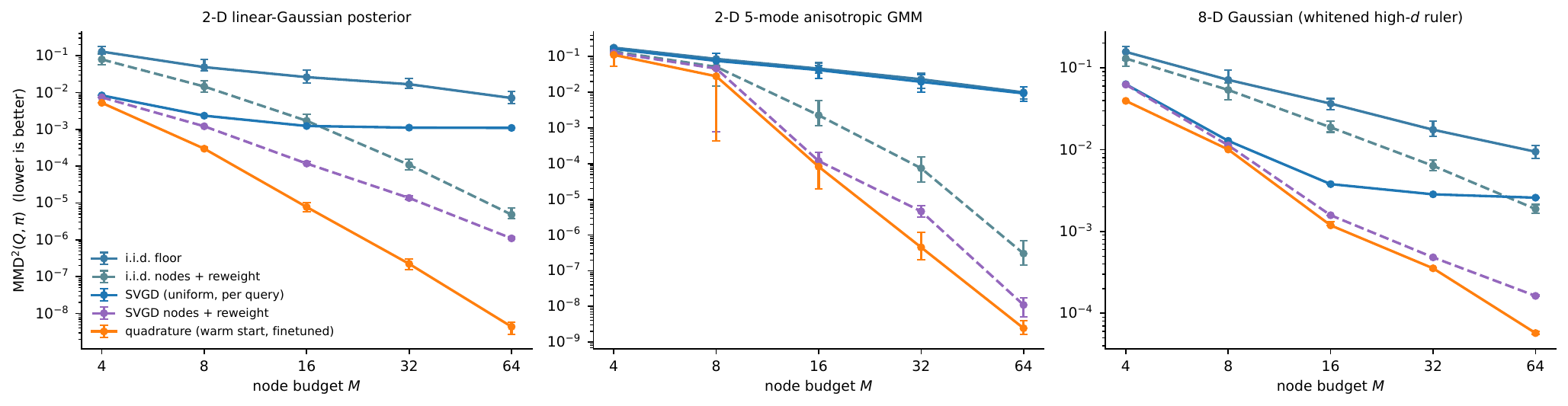}
\caption{\textbf{On the integration axis the quadrature decays past the fixed point a tuned interacting-particle sampler plateaus at---in low dimension, on a multimodal target, and, under the posterior-whitened ruler, in high dimension too.} The squared maximum mean discrepancy to the exact posterior against the node budget, every arm scored by the same exact kernel, on a two-dimensional Gaussian, a two-dimensional five-mode anisotropic Gaussian mixture, and an eight-dimensional Gaussian read under the posterior-whitened (Mahalanobis) kernel~(\cref{sec:whitened-kernel}). Against the \textcolor{iidcv}{\textbf{independent-sample floor}} and \textcolor{rwonly}{\textbf{reweighting alone}}, \textcolor{svgdcurve}{\textbf{tuned Stein variational gradient descent}}---run per query off the exact score and grid-tuned---is the interacting-particle peer, with \textcolor{svgdrwcurve}{\textbf{its cloud under our reweighting}} isolating the weighting gain. The \textcolor{ours}{\textbf{quadrature}} is the warm start finetuned to convergence, the per-instance optimum the amortized map approximates; the free reweight-only lever passes the sampler at the larger budgets, which settles at its Stein fixed point---a different functional than the discrepancy plotted. The raw-ambient kernel wall is the subject of the high-dimension discussion~(\cref{sec:latent,rem:kernel-wall}).}
\label{fig:dq-vs-svgd}
\end{figure*}

\subsection{Comparison to interacting-particle samplers: a different object, decaying past their fixed point on the integration axis}
\label{sec:exp-svgd}

The move lever invites an obvious objection: an interacting-particle sampler already moves particles, so why design nodes at all? A Stein variational sampler~\citep{liu2016svgd} is the right object to answer with---and it is a different object on a different axis. It returns an equally weighted particle cloud, read off the posterior score and re-run for every query. Our quadrature returns a signed-weight, score-free, amortized object emitted in one forward pass and reused across observations and integrands. The two meet on exactly one axis: the integration error of the same posterior. On that axis the head-to-head is favorable~(\cref{fig:dq-vs-svgd}). The point is a trade-off, not a defeat of the sampler.

\paragraph{A fair, tuned head-to-head on the integration axis.}
The comparison is set up to favor the sampler, so any margin is conservative. On the closed-form linear-Gaussian and Gaussian-mixture toys, every arm is scored by the same exact kernel discrepancy to the exact posterior at a single fixed bandwidth---the common ruler the move lever already descends. The sampler reads the exact score and is tuned over a grid of its internal bandwidth, step, and iteration count, with the configuration chosen on a disjoint tuning seed to minimize the eval discrepancy and then frozen. Our quadrature is the warm start finetuned to convergence---the per-instance move-and-reweight optimum---at a fixed, untuned step and iteration count. That asymmetry hands the advantage to the sampler. The sampler's own cloud is also re-scored under the construction's constrained weights, isolating how much of any gap is the weighting rather than the positions.

\paragraph{The quadrature decays past the sampler's fixed point; the sampler stays the right tool with the score in hand.}
The pattern holds across all three panels. On the smooth low-dimensional posterior the quadrature is below the tuned sampler at every node budget, the margin widening as the budget grows. The reason is structural: the sampler plateaus at its Stein fixed point---a different functional than the plotted discrepancy---while the quadrature keeps descending the discrepancy it is built for. Even the free reweight-only lever, a single linear solve with no node motion, passes the tuned sampler at the larger budgets~(\cref{fig:dq-vs-svgd}). This updates the earlier reading on the linear-Gaussian target, where a cold-start solve did not clear a tuned sampler~\citep{belhadji2026msip}: the move-and-reweight descent is the stronger object on this axis. On the multimodal target, scored at the median bandwidth that resolves the modes, the quadrature is below the tuned sampler at every budget by a wide margin. The equal-weight sampler sits only just under the independent-sample floor there, while the quadrature drops far beneath it---the signed weights integrating the mixture the uniform cloud cannot. The high-dimensional panel is the decisive one. Under the posterior-whitened ruler that lifts the kernel wall~(\cref{sec:whitened-kernel,rem:kernel-wall}), the eight-dimensional case is a clear win rather than a null: the quadrature is below the tuned sampler at every node budget, the margin widening with the budget, and well below the independent-sample floor, with the sampler itself a strong peer under the floor. The raw-ambient wall---where the geometric contrast is lost and the node-design gain collapses---is the subject of the high-dimension discussion~(\cref{sec:latent,rem:kernel-wall}), not this head-to-head. Two asymmetries stay in view. The plotted discrepancy is the construction's own training objective, not the sampler's. And the move lever here is the warm start finetuned to convergence---the per-instance optimum the amortized map approximates---rather than the deployable one-pass emission, with the free reweight lever the genuinely cheap leg reported alongside. The verdict is plain. Where the score is in hand and per-query compute is acceptable, the equally weighted sampler remains the natural choice for a stable sample cloud. What the construction offers instead is a score-free, amortized, reusable weighted quadrature for integration---and reusable is exactly the property the conditional stages now cash in.

\subsection{The amortized conditional map: limited-angle tomography}
\label{sec:exp-amort}

The toys fix one reference per target; the construction was built for more. A single set-equivariant network should condition on the held-out observation $\ystar$ and emit, in one forward pass, an arbitrary-resolution weighted quadrature of that observation's posterior. Limited-angle tomography is the first conditional test, and deliberately the easy one. The task is to recover an image from projections over a restricted angular wedge---a canonical ill-posed linear inverse problem~\citep{kak2001ct,natterer2001ct} whose posterior is sharp along the wedge-informed directions and prior-dominated off them~\citep{cui2014lis,spantini2015optimal}. Concretely, a $16\times16$ image is probed by parallel-beam projections at $16$ angles spread over a restricted $75^\circ$ wedge, each sampled by $16$ detectors, and the $256$ sinogram readings are corrupted by additive Gaussian noise of standard deviation $0.8$ (informed-subspace participation rank $r=7$). Because the posterior is linear-Gaussian, the conditioning machinery runs against a reference whose every estimand is exact, with no learned proxy in the loop.

\paragraph{The reference, the latent, and the trained net.}
The reference $\rho$ is the exact closed-form Gaussian tomography posterior $\mathcal{N}(\bmu_{\mathrm{post}}(\ystar),\bSigma_{\mathrm{post}})$ under a Laplacian field prior, not the achieved Nadaraya--Watson measure of the physics posteriors that follow---here the estimand is in hand. It is projected into the rank-deficient score-free \emph{informed} subspace, the exact local informed subspace $H=A^\top\bSigma_{\mathrm{obs}}^{-1}A$ of the forward operator~(\cref{sec:latent}), whose participation rank is the operating point read off its spectrum. The result is a low-dimensional Gaussian in the whitened informed coordinates whose squared-exponential kernel mean and self-affinity are exact. The ambient image would wall the kernel~(\cref{rem:kernel-wall}); the informed summary keeps it well conditioned---the rank-truncated reading of the same whitened metric the dimension sweep validated~(\cref{sec:exp-toys}). A single set-equivariant network is trained once to emit a weighted quadrature of the latent posterior in one forward pass, conditioning on the held-out observation with the node count drawn uniformly over the resolution range. Nothing in the metric, weight, or move machinery changes from the toys. The reconstruction back-maps the emitted nodes through the exact lift: the quadrature mean is the weighted back-mapped nodes, the per-pixel uncertainty their weighted spread.

\begin{figure}[t]
\centering
\includegraphics[width=0.62\linewidth]{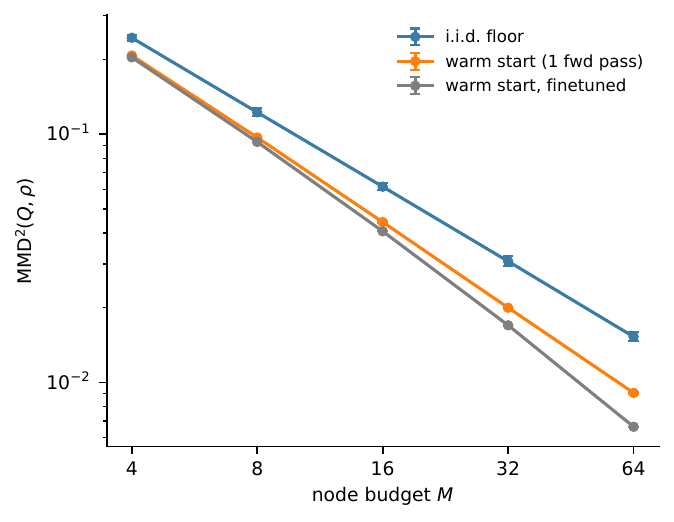}
\caption{\textbf{Conditioned on the held-out observation, one forward pass beats the independent-sample floor at every node budget on the tomography posterior.} The squared maximum mean discrepancy to the informed-subspace Gaussian tomography posterior, against the node budget, across held-out observations. The \textcolor{iidcv}{\textbf{independent-sample floor}}~(\cref{sec:problem}) is the baseline. The \textcolor{ours}{\textbf{one-pass warm start}} lies below it at every budget, the margin deepening with the budget; the \textcolor{solve}{\textbf{warm start, finetuned}} is the per-set discrepancy-descent envelope the network amortizes (three arms as in the toys, \cref{fig:dq-mmd}). The ambient image would collapse the kernel to a bag of samples~(\cref{rem:kernel-wall}). Because this reference is linear-Gaussian, its covariance---and hence the centered discrepancy---does not depend on the observation, so the margin is reported across observations but is observation-invariant. The observation-varying instance is the trained conditional flow~(\cref{sec:exp-cnf}).}
\label{fig:dq-tomography-mmd}
\end{figure}

\paragraph{The warm start beats the floor across observations; the recovery is subspace-limited by construction.}
Conditioning costs nothing on the margin. Across held-out observations the one-pass warm start sits below the independent-sample floor at every node budget, the margin deepening as the budget grows~(\cref{fig:dq-tomography-mmd}). It tracks the warm start, finetuned---the per-set discrepancy descent the network amortizes, the non-amortized per-query best---from above, essentially at the optimum for the smallest budgets and trailing it as the budget grows, the headroom the amortization leaves. The kernel that would wall in the ambient image stays well conditioned on the informed summary, the two levers separate, and the conditioned network's quadrature integrates the exact posterior strictly better than the same number of independent samples. One caveat is load-bearing. Because the reference is linear-Gaussian, its posterior covariance---and therefore the centered integration discrepancy---does not depend on the observation. The margin is reported across observations, but the integration quality is observation-invariant. The genuinely observation-varying integrand, where the posterior \emph{shape} moves with the observation, is what the next stage supplies~(\cref{sec:exp-cnf}). The point estimate is not the claim. The wedge informs only a low-rank subspace of the image, so a back-mapped quadrature mean recovers the visible-wedge structure while the off-wedge complement stays absent---the truncation floor of the limited-angle geometry, which a linear summary recovers just as well. The claim is that the quadrature integrates the exact posterior below the floor and supplies a per-pixel uncertainty map the point estimate alone does not, in a regime where the ambient image would have walled the kernel entirely. With an exact reference in hand, that uncertainty map can be audited against the truth.

\begin{figure}[t]
\centering
\includegraphics[width=\linewidth]{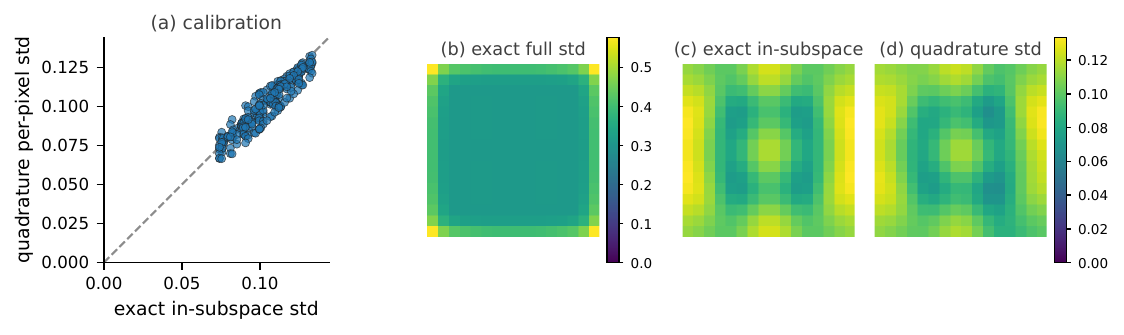}
\caption{\textbf{The quadrature's per-pixel uncertainty is calibrated to the closed-form posterior standard deviation inside the informed subspace; the off-subspace under-coverage is the known low-rank truncation residue.} The tomography posterior is exactly Gaussian, so its per-pixel standard deviation is known in closed form. \textbf{(a)}~The quadrature per-pixel standard deviation against the exact in-subspace value, on the identity line at correlation and ratio near one. \textbf{(b)}~The exact \emph{full} posterior standard deviation, on its own larger scale, dominated by the limited-angle boundary variance. \textbf{(c,d)}~The exact in-subspace and the quadrature standard deviation on a shared scale---visually indistinguishable. The quadrature is faithful to the part of the posterior the rank-truncated informed subspace resolves; the larger full-std map (b) is the off-subspace mass the subspace cannot represent by construction, the same low-rank ceiling the reconstruction lives under, not a calibration failure.}
\label{fig:dq-tomography-calibration}
\end{figure}

\paragraph{The per-pixel uncertainty is calibrated to the exact posterior in the informed subspace.}
The audit confirms the uncertainty rather than merely displays it. Because the tomography posterior is exactly Gaussian, its per-pixel standard deviation is available in closed form~(\cref{fig:dq-tomography-calibration}). In the informed subspace the quadrature's per-pixel standard deviation tracks the exact in-subspace value closely: a correlation near one, a ratio near unity, the scatter on the identity line, the two std maps visually indistinguishable on a shared scale. The full posterior standard deviation is several times larger and structurally different, concentrated at the off-wedge boundary the limited-angle geometry leaves uninformed. The rank-truncated quadrature carries only the small fraction of the total variance the subspace resolves and under-covers the complement by exactly that truncation residue. The honest reading is precise: the uncertainty is calibrated to the part of the posterior the informed subspace resolves, and under-covers the unresolved complement by the known low-rank ceiling, the same ceiling the reconstruction mean lives under. This is a validated uncertainty claim within its stated scope, not a global coverage guarantee. Calibrated uncertainty is one half of why integration is the contribution; a downstream integral is the other.

\begin{figure}[t]
\centering
\includegraphics[width=\linewidth]{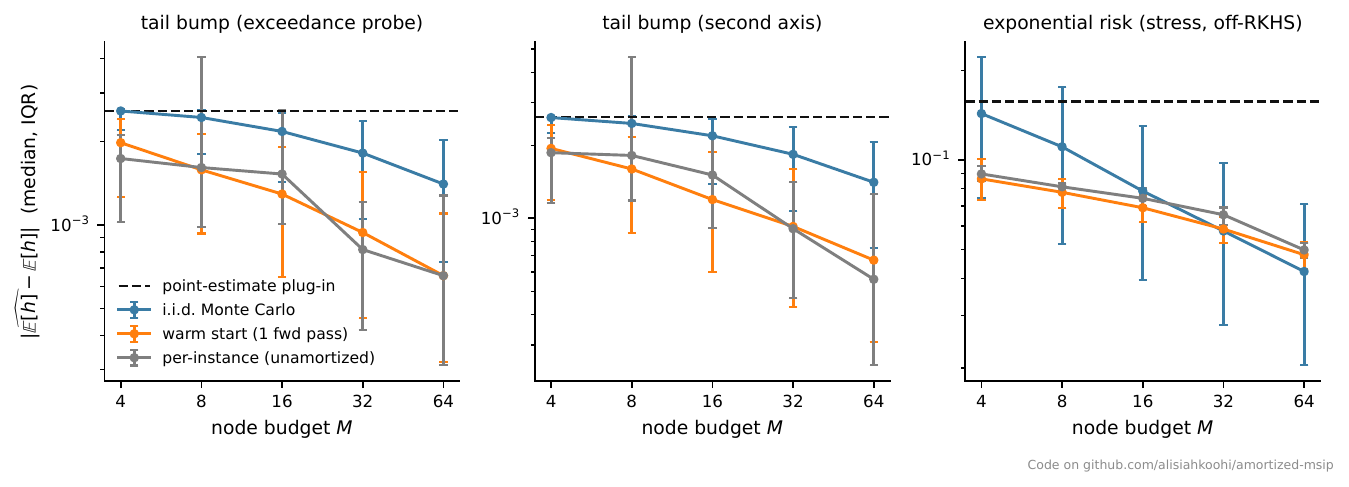}
\caption{\textbf{On a held-out posterior expectation with a closed-form truth, one forward pass integrates the high-variance functionals more accurately than the same number of independent samples and than the point-estimate plug-in.} The absolute error of a posterior expectation against its closed-form value, against the node budget, for three high-variance integrands: two deep-tail exceedance probes and a convex exponential risk (outside the kernel's certified class). The \textcolor{ours}{\textbf{one-pass warm start}} beats \textcolor{iidcv}{\textbf{independent Monte-Carlo}} decisively on the deep-tail probes, the margin deepening with the budget; the \textbf{point-estimate plug-in}, unable to integrate a nonlinear functional, is overtaken at all but the smallest budgets. The quadrature is a biased, lower-variance estimator: its win is largest where Monte-Carlo variance dominates and shrinks toward parity in the bulk, and on the off-class exponential risk its bias overtakes the independent-draw median at the largest budgets---the guarantee covers the kernel's class. The error is to the achieved posterior, the standing scope of the construction.}
\label{fig:dq-downstream-integral}
\end{figure}

\paragraph{Integration is the value: a downstream expectation, answered better than independent samples.}
Integration is the contribution, not the point estimate, and the same trained network makes it concrete on a held-out posterior expectation whose truth is closed-form~(\cref{fig:dq-downstream-integral}). On deep-tail exceedance probes---the high-variance integrands a quadrature is built to help with---the one-pass warm start's absolute error sits below that of the same number of independent samples at every budget, the margin deepening with the budget, and reaches an accuracy an independent sampler matches only at several times the budget. The point-estimate plug-in is not enough: a point estimate cannot integrate a nonlinear functional, and its fixed bias exceeds the quadrature at all but the smallest budgets. Two caveats bound the win. The quadrature is a biased, lower-variance estimator, so its margin is largest where Monte-Carlo variance dominates---the deep tail---and shrinks toward parity as the probe moves into the posterior bulk or the integrand smooths relative to the bandwidth. The exponential risk is outside the kernel's certified integrand class: it wins on the heavy tail and at small budgets, but its deterministic bias overtakes the independent-draw median at the largest budgets, a stress test rather than a certified case. As throughout, the answer is to the achieved posterior, not the truth the projection approximates. Tomography exercised every piece of the conditional machinery against an exact reference; the one thing it could not move was the posterior shape, which is where the trained flow comes in.

\subsection{A trained conditional flow: the observation-varying reference}
\label{sec:exp-cnf}

Tomography left one thing fixed: its posterior covariance does not move with the observation, so integration quality there is observation-invariant. The genuinely conditional test demands a reference whose \emph{shape}, not only its location, varies with the observation. A trained conditional normalizing flow supplies that. We train a conditional flow on the limited-angle tomography joint and take its achieved posterior as $\rho$. Its mass and curvature change with the held-out observation, and the amortized map must emit a quadrature whose nodes and weights track that changing shape in one forward pass---the stage the construction is built for.

\paragraph{The sampled flow, the informed latent, and the per-observation metric.}
The reference $\rho$ is the \emph{sampled} posterior of the trained flow. For each held-in or held-out observation we draw flow samples---the same Monte-Carlo estimand the curved banana already exercised~(\cref{sec:exp-toys}), a kernel mean and a self-affinity estimated from samples, with no density and no score. The flow occasionally blows a tail sample to a non-finite or astronomically large value, so a small fraction of raw samples are filtered on a finite-and-bounded test and the observation is topped up to budget; the dataset the quadrature sees is clean. The quadrature lives in a rank-truncated informed latent---here the leading principal directions of the achieved posterior's own samples, richer than the linear-Gaussian participation rank because the flow posterior is richer, with the ambient image left where it would have walled the kernel~(\cref{rem:kernel-wall}). One change carries the conditioning: each observation's latent dataset is whitened by its \emph{own} sample covariance $\bSigma_z(\ystar)$, so the Mahalanobis metric of~\cref{sec:whitened-kernel} tracks the observation-varying posterior shape rather than a single pooled geometry, paired with the constant-Gram-geometry bandwidth. A single set-equivariant network, conditioning on the held-out observation through an encoder of its filtered back-projection, is trained once to emit a weighted quadrature of the latent posterior at a node count drawn uniformly over the resolution range. As on the tomography posterior~(\cref{sec:exp-amort}), the informed summary keeps the kernel conditioned and the two levers separate, with nothing in the metric, weight, or move machinery changed from the toys.

\begin{figure}[t]
\centering
\includegraphics[width=0.62\linewidth]{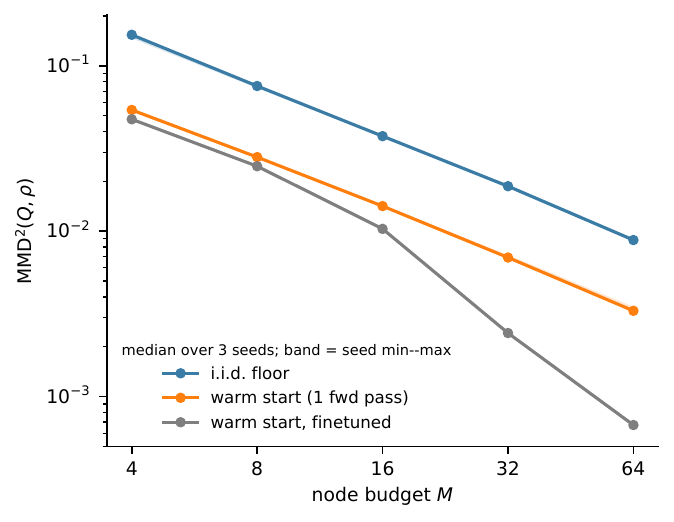}
\caption{\textbf{On a trained conditional flow whose posterior shape varies with the observation, one forward pass beats the independent-sample floor at every node budget.} The squared maximum mean discrepancy to the achieved flow posterior, against the node budget, across held-out observations. The \textcolor{iidcv}{\textbf{independent-sample floor}}~(\cref{sec:problem}) is the baseline. The \textcolor{ours}{\textbf{one-pass warm start}} lies below it at every budget, the margin steady across the resolution range; the \textcolor{solve}{\textbf{warm start, finetuned}} is the per-set discrepancy-descent envelope the network amortizes (three arms as in the toys, \cref{fig:dq-mmd}). Unlike the linear-Gaussian tomography~(\cref{fig:dq-tomography-mmd}), whose covariance does not move with the observation, here the posterior shape and the per-observation whitened metric both vary with $\ystar$, so the discrepancy and the floor-beating margin are genuinely conditional.}
\label{fig:dq-cnf-mmd}
\end{figure}

\begin{figure}[t]
\centering
\includegraphics[width=\linewidth]{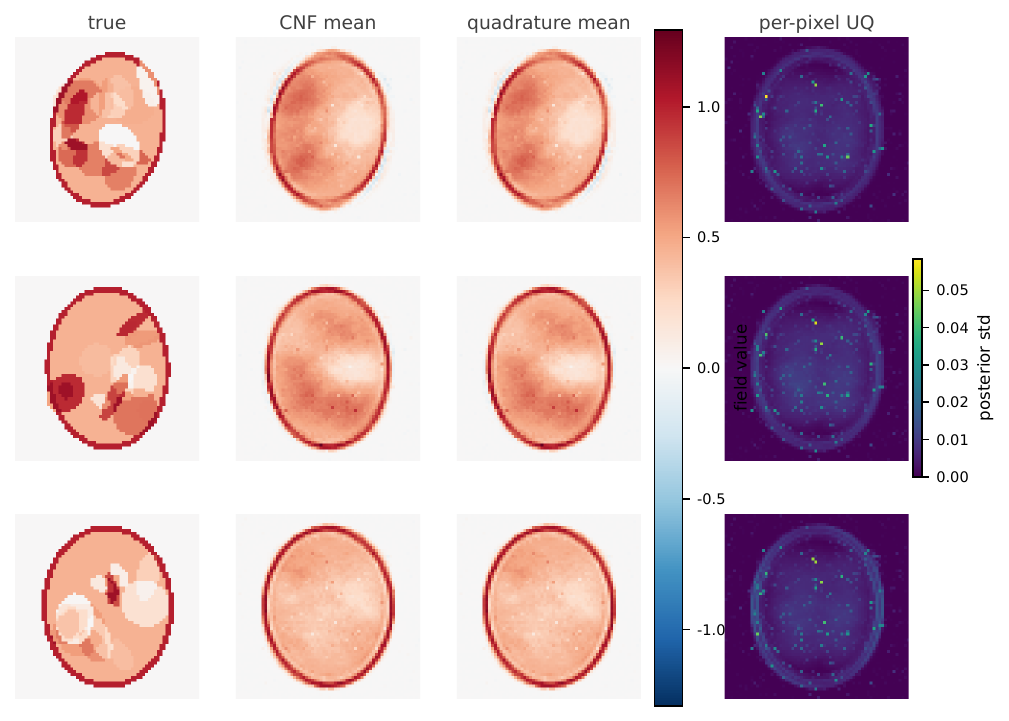}
\caption{\textbf{The quadrature integrates the flow posterior faithfully---its mean is the flow's posterior mean---and carries a per-pixel uncertainty that reads off the limited-angle geometry.} For held-out observations: \textbf{(left)}~the \emph{true image}, \textbf{(center-left)}~the \emph{flow posterior mean}, \textbf{(center-right)}~the \emph{quadrature mean} (the weighted back-mapped nodes), and \textbf{(right)}~the \emph{per-pixel uncertainty} (their weighted spread, pooled over independent emissions at the trained budget for a low-noise map; \cref{rem:pooling}). The quadrature mean and the flow mean are indistinguishable on the shared scale---the one forward pass integrates the achieved posterior to a relative error near zero. The fine off-wedge detail of the true image is absent---the in-subspace truncation floor of the limited-angle geometry, not a method error---and the uncertainty concentrates at the off-wedge boundary the geometry leaves uninformed. The point estimate is faithful to $\rho$, what the construction integrates, not to the truth the flow approximates.}
\label{fig:dq-cnf-recon}
\end{figure}

\paragraph{The warm start beats the floor at every budget on the observation-varying posterior.}
Because the reference now moves with the observation, the result is the genuinely conditional one. Across held-out observations the one-pass warm start sits below the independent-sample floor at every node budget~(\cref{fig:dq-cnf-mmd}), and it does so on a discrepancy that genuinely moves with the observation. The flow posterior's shape and the per-observation whitened metric that tracks it both vary with $\ystar$, so this is the conditional integration the linear-Gaussian tomography could not exhibit. The warm start tracks the warm start, finetuned from above, the same three-arm reading as the toys~(\cref{fig:dq-mmd}). The reconstruction~(\cref{fig:dq-cnf-recon}) makes the honest reading plain. The quadrature mean coincides with the flow posterior mean to a relative error near zero: the one forward pass integrates the achieved posterior faithfully, its mean reproducing the flow posterior's coarse structure at the truncated rank, while the off-wedge high-frequency detail of the true image stays absent by the limited-angle truncation, and the per-pixel uncertainty reads off the same geometry. Three caveats hold the scope. The construction integrates $\rho$, the flow posterior, not the truth the flow approximates. The recovery is subspace-limited at the truncated rank---richer than the linear cases, but still a low-rank reading. And a small fraction of the flow's raw samples are non-finite tail blow-ups, filtered before the quadrature sees them. Within that scope, the amortized map emits a sub-floor quadrature of an observation-varying learned posterior in one forward pass---the genuinely conditional result the program is built to deliver.

\subsection{The Nadaraya--Watson conditional measure as one instantiation}
\label{sec:exp-nw}
One reference the user may commit to without training a flow is the Nadaraya--Watson conditional measure---the score-free reading of the simulation dataset in which the posterior is a discrete mixture of held-in atoms weighted by their responsibility to the query observation. It is a valid reference posterior $\rho$ in the sense the construction requires: its squared-exponential kernel mean and self-affinity are exact finite sums over the responsibility-weighted dataset atoms, so the estimand of~equation~\eqref{eq:mmd-quadratic} is in hand with no forward solve and no density evaluation, and the amortized map integrates it exactly as it integrates a closed-form or a sampled reference. It is also the instantiation the prior score-free evidence already targets, the memorized conditional whose support is the dataset itself---a bona-fide posterior under a matched Gaussian observation model~(\cref{sec:method-estimand}). The physics posterior that follows commits to it: the groundwater Darcy benchmark~(\cref{sec:exp-physics}) takes $\rho$ to be the Nadaraya--Watson conditional in its informed latent, the discrete weighted mixture the warm start integrates below the independent-sample floor.

\subsection{A physics posterior: groundwater permeability in the informed subspace}
\label{sec:exp-physics}

Every stage so far has lived in low ambient dimension, where the kernel still resolves the reference. The construction is built for the high-dimensional physics posterior, where the squared-exponential kernel walls in the ambient field yet stays well conditioned on the informed summary~(\cref{sec:latent}); groundwater permeability is the test. The task is to recover a log-permeability field from sparse pressure observations of the steady-state elliptic (Darcy) flow it drives---a canonical high-dimensional Bayesian inverse problem in the function-space sense~\citep{kaipio2005statistical,tarantola2005inverse,stuart2010ip,hosseini2017wellposed}, whose field is expanded in a Karhunen--Lo\`eve basis and whose informed directions are the low-rank subspace the data constrain~\citep{cui2014lis,spantini2015optimal,zahm2022certified}. Here the reference is no closed-form density and no learned flow; it is the achieved conditional itself.

\paragraph{The forward model.}
The forward map is steady-state Darcy flow---single-phase, pressure-driven flow through a porous medium~\citep{darcy1856fontaines}---on the unit square. Given a log-permeability field $u(s)$, the pressure $p(s)$ solves the elliptic equation
\begin{equation}
-\nabla\cdot\!\bigl(e^{u(s)}\nabla p(s)\bigr)=0\ \ \text{in}\ \Omega=(0,1)^2,\ \ \ \ p(s)=a\,s_1+b\,s_2\ \ \text{on}\ \partial\Omega ,
\label{eq:darcy}
\end{equation}
with no interior source, so the flow is driven entirely through the affine Dirichlet boundary data $(a,b)$. A single drive leaves the recovery badly under-determined; the documented remedy is to probe the field with several drives---here eight fixed boundary potentials $(a,b)$ that each illuminate the permeability differently and contribute independent constraints. Under each drive the pressure is read at a sparse $9\times9$ grid of interior sensors, and the eight readings are stacked into one observation of dimension $648$ corrupted by additive Gaussian noise of standard deviation $0.003$. The unknown $u$ is a Gaussian random field expanded in a thousand-coefficient Karhunen--Lo\`eve basis---a Mat\'ern-type spectrum with correlation length $\approx 0.26$ and pointwise standard deviation $0.75$---so the inferred parameter is the standard-normal coefficient vector and the ambient dimension is $1024$.

\paragraph{The reference, the latent, and the trained net.}
As in the toys, the reference $\rho$ is the achieved conditional, not ground truth: the Nadaraya--Watson conditional measure in the informed latent~(\cref{sec:exp-nw}), a discrete weighted mixture of held-in latent atoms whose squared-exponential kernel mean and self-affinity are exact finite sums. The informed latent is the standardized top coordinates of the Karhunen--Lo\`eve expansion---the basis that diagonalizes the prior, i.e., the field-principal directions the data inform---truncated to rank $r=32$. The ambient $1024$-coefficient field would wall the kernel~(\cref{rem:kernel-wall}). A single set-equivariant network is trained once to emit a weighted quadrature of the latent conditional in one forward pass, across held-in query observations, with the node count drawn uniformly over the resolution range. Nothing in the metric, weight, or move machinery changes from the toys. Because the move-below-emission refinement collapses on this sharp reference~(\cref{rem:finetune-fragile}), the result reported here is the warm start with score-based finetuning left off, stated up front to keep the reading honest.

\begin{figure}[t]
\centering
\includegraphics[width=0.62\linewidth]{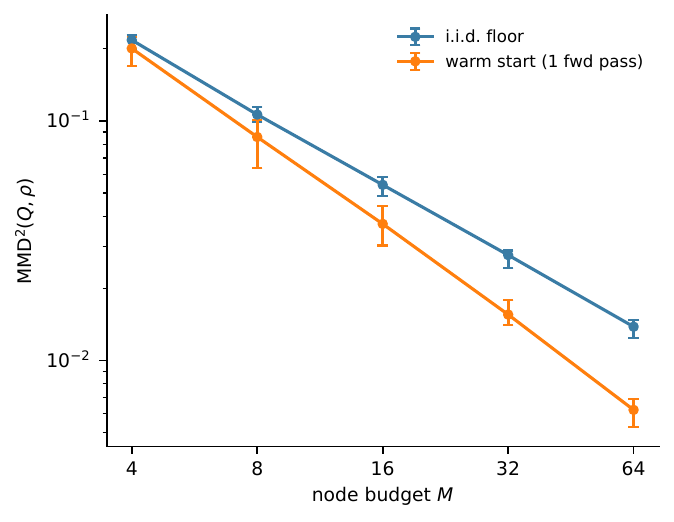}
\caption{\textbf{On the high-dimensional groundwater posterior, projected to its informed subspace, one forward pass beats the independent-sample floor at every node budget.} The squared maximum mean discrepancy to the informed-latent Nadaraya--Watson conditional, against the node budget, averaged over held-out query observations. The \textcolor{iidcv}{\textbf{independent-sample floor}}~(\cref{sec:problem}) is the baseline. The \textcolor{ours}{\textbf{one-pass warm start}} lies below it at every budget, and the margin deepens with the budget. The thousand-coefficient ambient field would collapse the kernel to a bag of samples~(\cref{rem:kernel-wall}); on the informed Karhunen--Lo\`eve summary the two levers separate and the warm start integrates the achieved posterior strictly better than independent samples.}
\label{fig:dq-darcy-mmd}
\end{figure}

\begin{figure}[t]
\centering
\includegraphics[width=\linewidth]{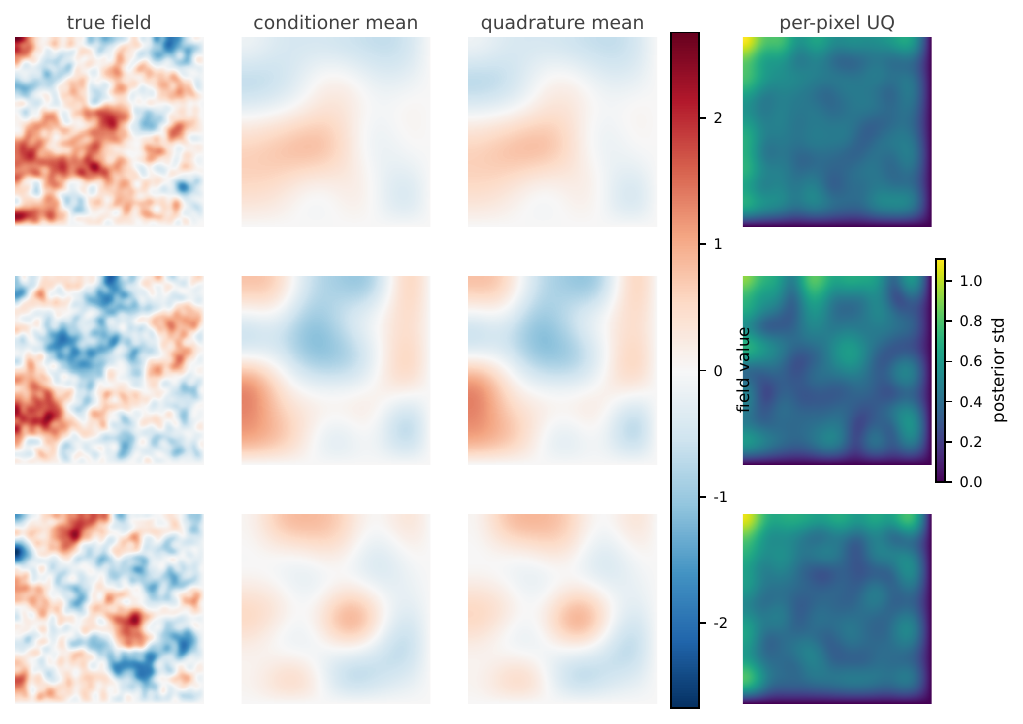}
\caption{\textbf{Back-mapping the emitted latent nodes gives a subspace-limited posterior-mean field and a coherent per-pixel uncertainty map; the integration and the uncertainty, not the point estimate, are the contribution.} For three held-out observations: \textbf{(a)}~the \emph{true field}, \textbf{(b)}~the \emph{conditioner mean} (the Nadaraya--Watson point estimate the quadrature integrates), \textbf{(c)}~the \emph{quadrature mean} (the weighted back-mapped nodes), and \textbf{(d)}~the \emph{per-pixel uncertainty} (their weighted spread, pooled over independent emissions at the trained budget for a low-noise map; \cref{rem:pooling}). The recovery is a faint, low-rank blob: the elliptic forward map informs only a low-dimensional subspace of the thousand-coefficient field, so the prior-dominated complement is unrecoverable by construction---the in-subspace truncation floor, not a method error. The value is the quadrature's integration of the achieved posterior~(\cref{fig:dq-darcy-mmd}) and its uncertainty quantification, which the point estimate alone does not carry.}
\label{fig:dq-darcy-recon}
\end{figure}

\paragraph{The warm start beats the floor; the recovery is subspace-limited by construction.}
Surviving the jump to high dimension, the margin holds. As on the tomography posterior~(\cref{sec:exp-amort}), the informed summary keeps the kernel conditioned where the ambient thousand-coefficient field would wall it: there the isotropic kernel loses its geometric contrast and moving the nodes buys nothing~(\cref{rem:kernel-wall}), so it is on the informed Karhunen--Lo\`eve summary that the two levers separate and the warm start clears the floor at every node budget, the margin deepening as the budget grows~(\cref{fig:dq-darcy-mmd}). The entire sub-floor margin is bought by the change of metric, not by new quadrature machinery---the reversal the informed subspace is built for. The field reconstruction~(\cref{fig:dq-darcy-recon}) is honestly faint. The elliptic forward map informs only a low-rank subspace of the field, so the quadrature mean recovers the coarse informed structure while the prior-dominated complement stays absent---the truncation floor inherent to the under-determined elliptic inversion, not a defect of the quadrature. The integration below the floor and the per-pixel uncertainty, not the point estimate, are the contribution, as on tomography~(\cref{sec:exp-amort}).

\definecolor{goldstar}{HTML}{E0A800}

\subsection{The warm-versus-cold finetuning frontier}
\label{sec:exp-refine}

The physics posterior ran without score-based finetuning. This subsection turns the reason into a result. The finetuning of~\cref{sec:refine} is the optional add-on: where the reference is well-conditioned it accelerates the per-query refinement and carries the quadrature below the one-pass emission; where the reference is sharp it does neither. The frontier is traced on the two exact references whose posterior score is in hand---the closed-form linear--Gaussian target and the limited-angle tomography posterior---with the negative arm stated honestly. The reference $\rho$ here is the \emph{exact} posterior---the closed-form Gaussian and the analytic tomography conditional---so the refiner reads the true score and the discrepancy is measured against the object the score defines, not a learned proxy.

\paragraph{The warm start accelerates and descends below the emission.}
On the well-conditioned references the add-on earns its place. The warm-started arm---the one-pass emission as the initialization, refined by the score-based mean shift---sits below the cold arm at every cumulative score-evaluation budget and descends below its own zero-solve emission toward the integration floor~(\cref{fig:finetune-frontier}). The cold arm, started from the prior, spends a large fraction of the budget merely reaching the discrepancy the warm arm holds at the emission. The cost of \emph{initialization} thus decouples from the per-query score budget: the one forward pass and the per-query refinement are two endpoints of a single budget knob, not alternatives. Both arms head toward the same independent-sample integration floor---the same node count of exact-posterior samples under equal weights~(\cref{sec:problem})---and on the tomography posterior the signed-weight quadrature settles below it, the integration gain the one-pass emission delivers without any score. Two effects separate cleanly. The descent below the emission is the score's contribution, conditional on its being exact and the reference well-conditioned. The acceleration, which reads only the node positions, is not~(\cref{rem:finetune-fragile}). That conditional is the whole story of the next paragraph.

\begin{figure}[t]
\centering
\includegraphics[width=\linewidth]{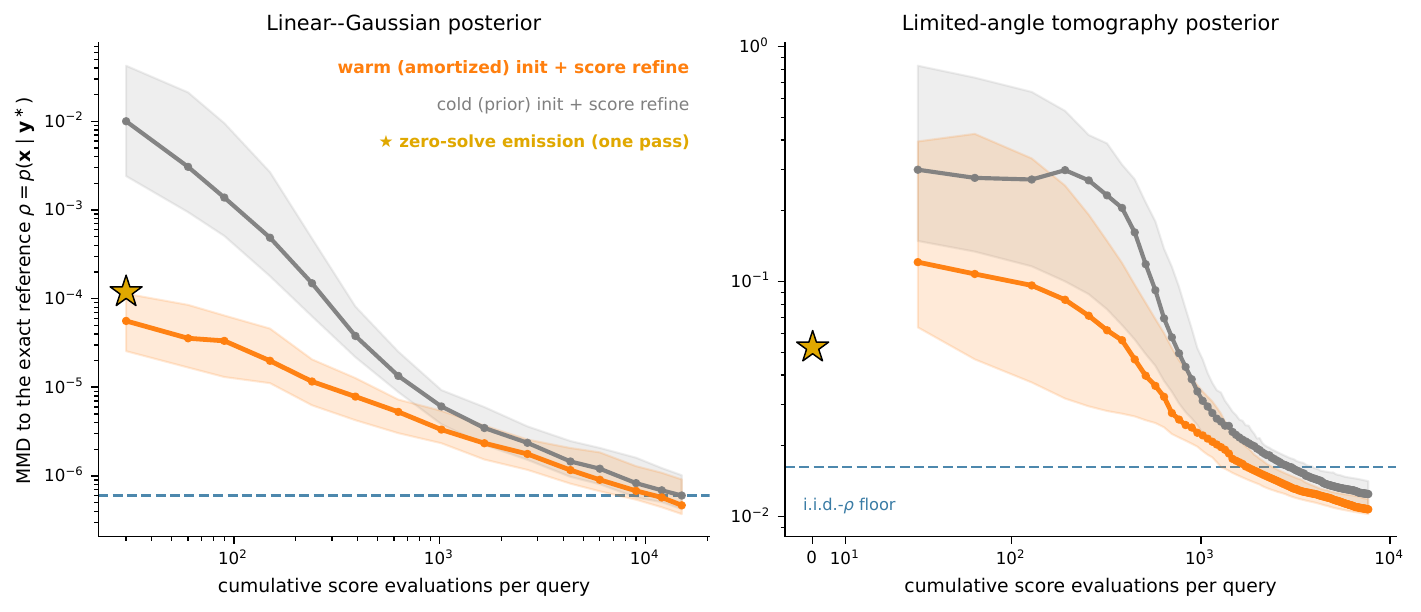}
\caption{\textbf{On the well-conditioned exact references, the one-pass emission seeds a per-query refinement that descends below it, toward the independent-sample integration floor.} For the exact \emph{linear--Gaussian} \textbf{(left)} and \emph{limited-angle tomography} \textbf{(right)} posteriors, the squared maximum mean discrepancy to the exact reference $\rho=p(\mathbf{x}\mid\mathbf{y}^\ast)$ against the cumulative per-query score evaluations, median over held-out observations with the interquartile spread. The \textcolor{ours}{\textbf{warm (amortized) initialization}}---refined by the score-based mean shift from the one-pass emission (the \textcolor{goldstar}{$\bigstar$~\textbf{zero-solve emission}})---sits below the \textcolor{solve}{\textbf{cold (prior) initialization}} at every budget, the \textcolor{corr}{\textbf{refinement}} descending toward the dashed \textcolor{iidcv}{\textbf{independent-sample floor}}~(\cref{sec:problem}) and settling at or below it. Both arms run the same map and differ only in initialization, so the horizontal gap is the score budget the cold solve spends to reach what the emission already holds; the per-query axis excludes the amortized one forward pass and context dataset. This below-emission descent does not survive a sharp reference~(\cref{rem:finetune-fragile}).}
\label{fig:finetune-frontier}
\end{figure}

\paragraph{The descent does not survive a sharp reference.}
The other side of the frontier is a failure, and naming it sharpens the scope. The below-emission descent is confined to well-conditioned references. On the sharp, low-rank informed groundwater (Darcy) reference~(\cref{sec:exp-physics}), the iterated mean shift over-contracts and the descent stalls~(\cref{rem:finetune-fragile}). The collapse persists across a Mat\'ern kernel, a corrected weight solve, and a larger node budget. The move-below-emission refinement does not materialize there---which is exactly why the result on that posterior is the warm start alone. The finetuning is therefore the optional add-on rather than the core: it is correctly silent or harmful precisely where the reference is sharp, while the one-pass emission---which gradient training keeps away from the collapsed configuration~(\cref{rem:finetune-fragile})---is not subject to it. This collapse is one of several ways the construction can break, and the next section stress-tests the rest~(\cref{sec:stress}).

\FloatBarrier
\section{Stress-testing the construction}
\label{sec:stress}

The experiments establish that the warm start beats the floor. They leave open whether the gain rests on a well-conditioned solve or hides a degenerate one. The guarantee covers only the weight solve, and a quadrature can sit below the floor on weights that are signed, near-singular, and one bandwidth from collapse. This section interrogates those soft spots, one sharp question at a time. Do the designed weights stay well-conditioned, or does the gain hide a degenerate solve~(\cref{sec:abl-ess})? When pure reweighting would collapse, does moving the nodes repair the conditioning~(\cref{sec:abl-node-weight})? Does the gain survive the kernel bandwidth both levers share, or is the median heuristic a lucky operating point~(\cref{sec:abl-bandwidth})? Does the resolution dial generalize to budgets the network never saw~(\cref{sec:abl-budget})? Every diagnostic reuses the trained networks of~\cref{sec:exp-toys} with no further training, scoring the deployed quadrature on fresh held-out samples disjoint from training and evaluation. The construction survives each question.

\subsection{Do the designed weights stay well-conditioned?}
\label{sec:abl-ess}

The closed-form weight solve is provably no worse than equal weights at the floor~(\cref{prop:reweight-floor}). A guarantee on the discrepancy says nothing about the conditioning of the weights that attain it. A quadrature can sit below the floor on a near-singular Gram---a few nodes carrying the mass, the rest cancelling against them. The diagnostic that separates the two is the effective sample size of the unit-sum weights, i.e., the reciprocal of their squared norm. It equals the node count when the weights are uniform and collapses far below it when a single node dominates. A benign reweighting keeps the effective sample size a healthy fraction of the node count; a signed, degenerate quadrature does not.

\paragraph{The deployed quadrature stays well-conditioned while it beats the floor.}
Across the targets the deployed move-and-reweight quadrature keeps its effective sample size a healthy fraction of the node count~(\cref{fig:weight-ess}), and the negative-weight fraction is essentially zero~(\cref{fig:weight-dist}). The construction is a benign reweighting of repositioned samples, not a signed quadrature whose accuracy is bought with cancellation. The signed-weight degeneracy a mixture reference induced under fixed nodes does not recur: with the unit-sum-constrained solve and the trained net, every Gaussian away from the lowest ambient dimensions and every Gaussian mixture retains an effective sample size near the node count, and almost no node turns negative. The conditioning is most favorable at the smaller resolutions, where the fewer nodes spread loosely relative to the kernel and the Gram stays well separated. It tightens as the resolution grows and the nodes crowd---but it does not collapse.

\paragraph{The hardest references sit at a moderate, non-degenerate effective sample size, and the win survives there.}
The reading is honest about its tail. The three references that demand the largest move off the floor---the lowest-dimensional Gaussian, the curved banana, and the moderate-dimensional Gaussian mixture---sit at a moderate effective sample size, a fraction of the node count rather than near it, with a small but nonzero share of negative weights on the two two-dimensional toys~(\cref{fig:weight-ess,fig:weight-dist}). This is the price of a large discrepancy gain, not a degeneracy that erases it. In every one of these cases the warm start still descends decisively below the floor~(\cref{fig:dq-mmd}). A moderate effective sample size is what a large move buys, and the move carries the margin. The conditioning trades against the gain, never against the guarantee, which the reweight leg holds at every conditioning~(\cref{prop:reweight-floor}). That trade has a sharper consequence---moving the nodes does not merely cost conditioning---it can buy it back.

\begin{figure}[t]
\centering
\includegraphics[width=0.8\linewidth]{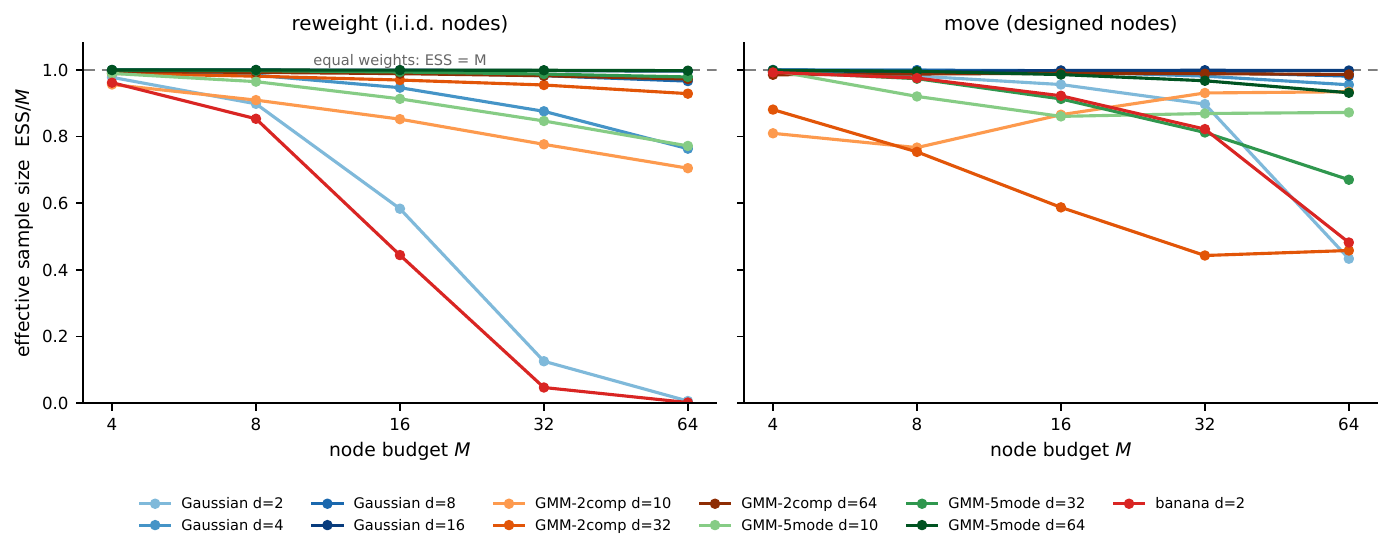}
\caption{\textbf{The deployed quadrature keeps its effective sample size a healthy fraction of the node count: the designed weights are well-conditioned, not a signed quadrature resting on a near-singular Gram.} For each reference, the effective sample size of the unit-sum weights as a fraction of the node count, on fresh held-out samples; equal weights would sit at one. The \textcolor{ours}{\textbf{deployed move-and-reweight quadrature}} stays near the node count for every Gaussian away from the lowest dimensions and every Gaussian mixture; the \textcolor{rwonly}{\textbf{reweight-only}} arm at the same nodes is shown for contrast. The three references that demand the largest move---the lowest-dimensional Gaussian, the banana, and the moderate-dimensional mixture---sit at a moderate but non-degenerate fraction: the price of a large discrepancy gain, not a degeneracy, and each still beats the floor decisively~(\cref{fig:dq-mmd}). The reweight leg's guarantee~(\cref{prop:reweight-floor}) holds at every conditioning.}
\label{fig:weight-ess}
\end{figure}

\begin{figure}[t]
\centering
\includegraphics[width=0.8\linewidth]{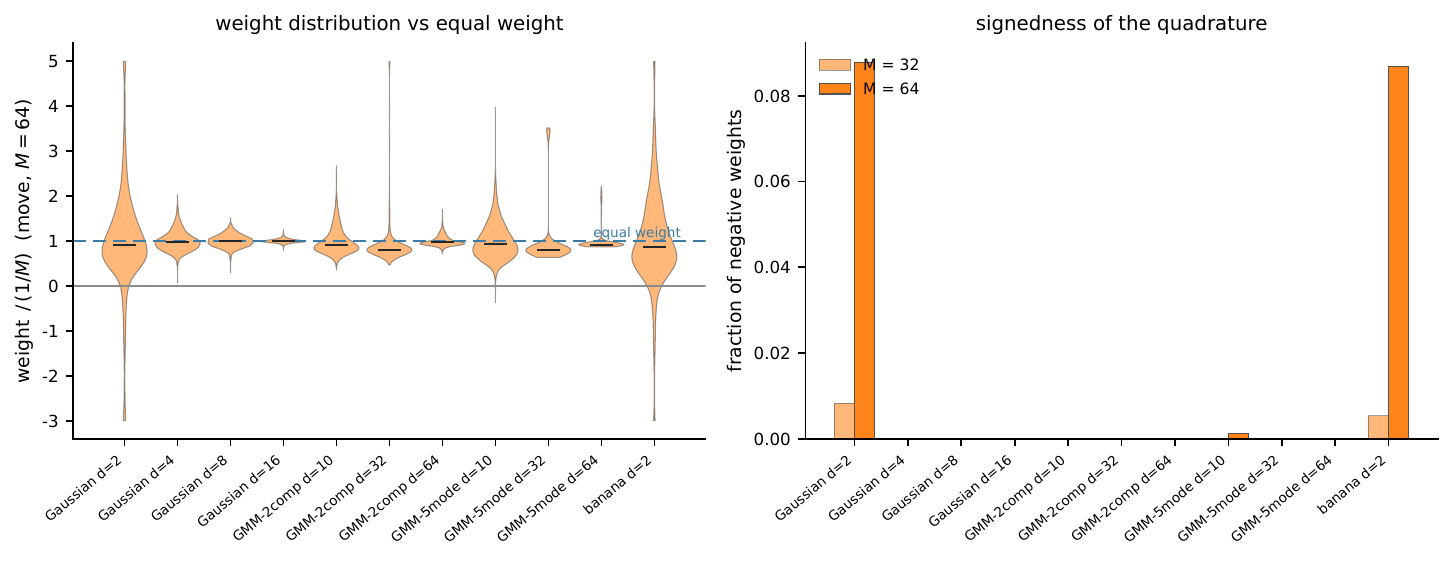}
\caption{\textbf{The quadrature is a benign reweighting concentrated around the equal weight, not a wild signed quadrature: almost no node turns negative.} The distribution of the unit-sum weights and the share of negative weights for the deployed quadrature, on held-out samples. The bulk of every distribution sits tightly around the equal weight, and the negative-weight share vanishes for every Gaussian away from the lowest dimensions and every Gaussian mixture. Only the two two-dimensional toys carry any sign at all, on a handful of tail nodes. The signed-weight degeneracy seen under fixed nodes does not recur with the unit-sum-constrained solve and the trained net. Companion to~\cref{fig:weight-ess}.}
\label{fig:weight-dist}
\end{figure}

\subsection{Moving the nodes also repairs the conditioning}
\label{sec:abl-node-weight}

The effective-sample-size reading carries a second finding, sharper than the first: where pure reweighting collapses, moving the nodes lifts the conditioning back. Fixing the nodes at the independent samples and solving only the weights leaves the Gram at the mercy of where the samples landed. A small bandwidth makes neighboring samples overlap relative to the kernel scale, the node Gram turns near-singular, and the reweight-only effective sample size collapses far below the node count---a few distinguishable nodes carry the mass, the rest cancel. The two-dimensional toys, whose samples crowd most relative to their kernel, are exactly where this happens. Moving the nodes is the repair.

\paragraph{The learned node placement lifts the conditioning the reweight arm destroys.}
Where the reweight-only effective sample size collapses, the learned node placement lifts the same reference back to a moderate, non-degenerate effective sample size~(\cref{fig:weight-ess}). The network does not merely reweight a fixed configuration. It repositions the nodes so the Gram separates, making the designed set both lower in discrepancy and better-conditioned than the independent samples it replaces. The move lever buys two things at once---a lower discrepancy and a healthier weight solve---rather than trading one for the other. \cref{fig:node-weight} shows what the reweighting does on the density. The up-weighted nodes concentrate on the high-mass ridge while the few down-weighted nodes sit in the low-density tails, correcting the over-coverage independent samples leave relative to the informative ridge. On the well-separated mixture the weights stay positive and split between the modes, gently favoring the tighter component, consistent with its vanishing negative-weight share~(\cref{fig:weight-dist}). The mechanism is the one that drives the discrepancy gain~(\cref{sec:why-move}): the designed positions decorrelate the residual the random nodes leave, which lowers the worst-case error and separates the Gram the weight solve inverts at the same stroke. One lever, two gains---but both ride on a single bandwidth, and a gain that lived only at the tuned value would be cherry-picked.

\begin{figure}[t]
\centering
\includegraphics[width=0.8\linewidth]{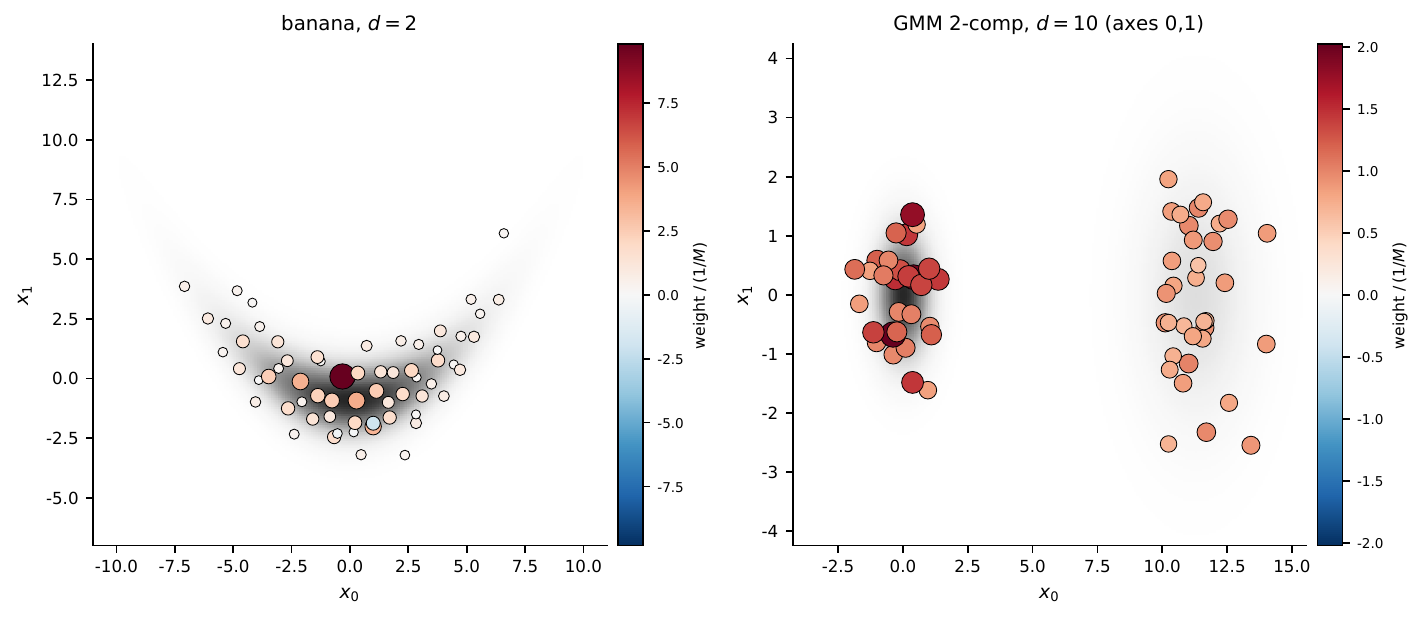}
\caption{\textbf{The quadrature up-weights the high-mass ridge and down-weights the over-covered tails: moving the nodes both lowers the discrepancy and separates the Gram the weight solve inverts.} The emitted nodes drawn on the reference density, colored by signed weight relative to the equal weight, for the curved banana and the two-component Gaussian mixture. On the banana, the \textcolor{corr}{\textbf{up-weighted}} nodes concentrate on the high-mass ridge near the bend while the few \textcolor{iidcv}{\textbf{down-weighted}} nodes sit in the low-density tails. On the well-separated mixture the weights stay positive and split between the modes, gently favoring the tighter component. The repositioning that lowers the discrepancy~(\cref{sec:why-move}) is the same that separates the node Gram and lifts the effective sample size~(\cref{fig:weight-ess}).}
\label{fig:node-weight}
\end{figure}

\subsection{Is the gain robust to the bandwidth?}
\label{sec:abl-bandwidth}

Both levers share one kernel bandwidth, set per target by the median heuristic. A gain that held only at that one hand-tuned value would be cherry-picked. The test re-scores the same emitted nodes across a range of bandwidths around the median, separating a robust gain from a lucky operating point. The network is bandwidth-free, so the move stays fixed and only the weight solve and the discrepancy re-score.

\paragraph{The warm start beats the floor across the swept range, and the median heuristic sits in the sweet spot.}
The warm start beats the floor across the entire swept bandwidth range~(\cref{fig:bandwidth-sensitivity}). The gain is not an artifact of the bandwidth. The two readings trade off in opposite directions across the sweep. The discrepancy margin over the floor deepens monotonically as the bandwidth grows---the kernel bites harder, so the designed nodes pull farther below the independent samples. The effective sample size moves the other way: it holds a robust plateau at and below the median, then degrades as the bandwidth grows and the nodes crowd. The median heuristic sits in the sweet spot between the two---a strong margin at a high effective sample size---neither the smallest bandwidth, where the margin is shallow, nor the largest, where the conditioning erodes. Pushing the bandwidth up buys a deeper margin at the cost of conditioning, a knob the practitioner may turn rather than the operating point. The median heuristic is the principled choice, and the gain is robust to it. Robustness to a hyperparameter is one thing; generalization to a budget the network never saw is another, and the resolution dial of contribution~(2) claims exactly that.

\begin{figure}[t]
\centering
\includegraphics[width=0.8\linewidth]{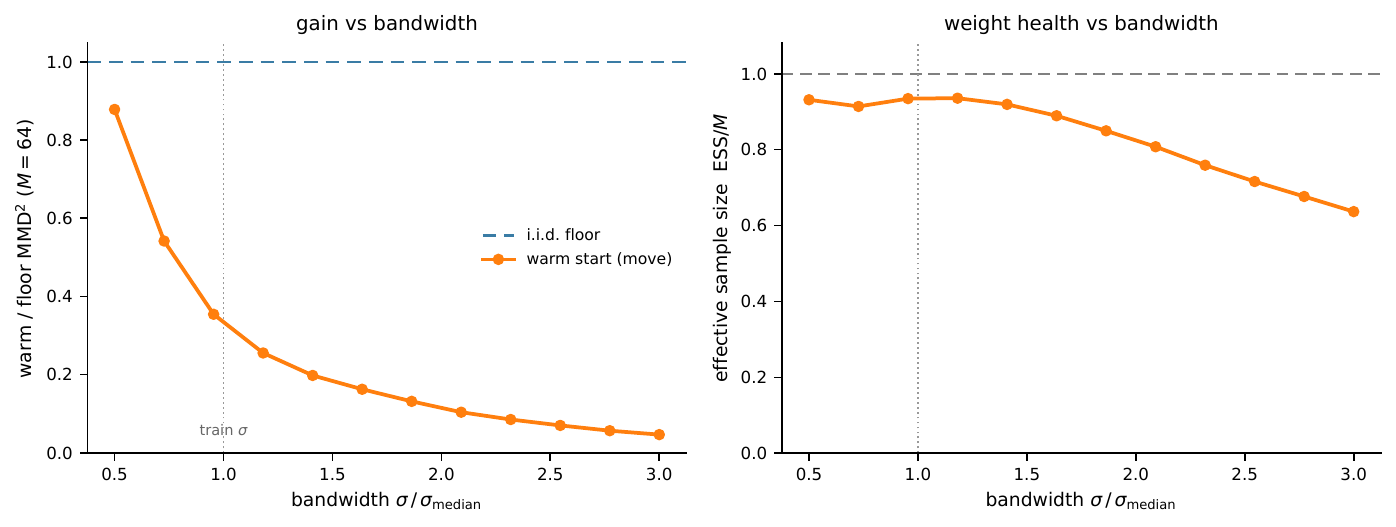}
\caption{\textbf{The warm start beats the floor across the swept bandwidth range, and the median heuristic sits in the sweet spot between a deep margin and a well-conditioned solve: the gain is robust, not cherry-picked.} The same emitted nodes re-scored across kernel bandwidths around the per-target median heuristic; the network is bandwidth-free, so only the weight solve and the discrepancy re-score. The \textcolor{ours}{\textbf{discrepancy margin over the floor}} deepens monotonically as the bandwidth grows---the kernel bites harder---while the effective sample size holds a plateau at and below the median, then erodes as the nodes crowd. The \textbf{median heuristic} sits between the two: a strong margin at a high effective sample size. Pushing the bandwidth up buys a deeper margin at the cost of conditioning---a knob, not the operating point.}
\label{fig:bandwidth-sensitivity}
\end{figure}

\subsection{Does the resolution dial generalize to budgets it never saw?}
\label{sec:abl-budget}

The resolution dial of contribution~(2) is an empirical claim, not a theorem~(\cref{rem:budget-composition}). The closed-form weight solve is resolution-agnostic, but the \emph{learned positions} carry no guarantee at a node count outside the training range. This ablation tests the claim directly, on the two closed-form toys where the truth---and hence the floor and the finetuned optimum---is exact, so no Monte-Carlo bias contaminates the verdict. Three networks of identical architecture, steps, and seed train on the same target and differ only in the node counts they see: one over the full range, one over a strict subset that holds out two interior budgets, one at a single fixed budget. All three then evaluate on the full grid of held-out samples, the held-out budgets marked apart from the trained ones. If the dial is real, the subset network beats the floor at the budgets it never saw.

\paragraph{One net beats the floor at node counts it never saw, and the generalization comes from training over a range.}
It does. The subset network---which never saw the two held-out interior budgets---emits a sub-floor quadrature at exactly those budgets on both targets, tracking the full-range network closely~(\cref{fig:dq-resolution-ablation}): the FiLM conditioning on the requested node count interpolates the budget rather than memorizing it. The single-budget control isolates the cause. It is competitive near its own training budget but degrades as the requested count moves away, while the range-trained networks hold below the floor across the whole grid. The resolution generalization comes from training over a range of node counts, not from the set-equivariant architecture alone. One honest caveat tempers the contrast: the off-budget degradation of the single-budget control is sharp only in the smallest-budget Gaussian case, where it falls above the floor outright. On the multimodal target the optimal move is weakly budget-dependent, so even one budget transfers part-way and the control stays below the floor everywhere, merely worse than the range-trained nets. The load-bearing reading is the apples-to-apples one: the subset and full-range networks are both sub-floor at every budget, and the held-out budgets are no exception. The construction survives every question put to it---a well-conditioned, bandwidth-robust, resolution-general quadrature that beats the Monte-Carlo floor. What remains is to mark the scope of that claim.

\begin{figure}[t]
\centering
\includegraphics[width=0.8\linewidth]{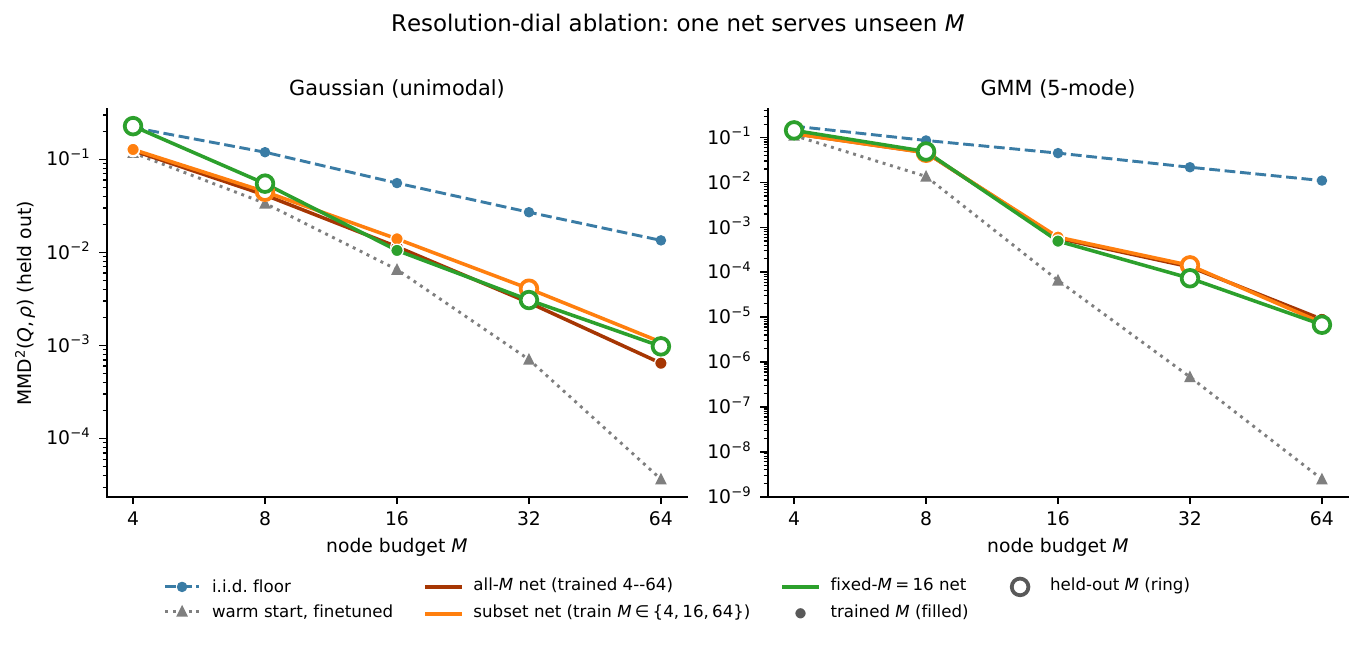}
\caption{\textbf{One trained network beats the independent-sample floor at node counts it never saw in training, and the generalization comes from training over a range, not from the architecture alone.} The squared maximum mean discrepancy to the closed-form reference against the node budget, for a smooth Gaussian and a harder multimodal mixture, with three networks of identical architecture differing only in the budgets they saw. The \textcolor{iidcv}{\textbf{independent-sample floor}} (dashed) and the \textcolor{solve}{\textbf{warm start, finetuned}} (dotted, the per-instance optimum) bracket the achievable range. The \textcolor[HTML]{a63603}{\textbf{full-range network}} and the \textcolor{ours}{\textbf{subset network}}---trained on a strict subset with two interior counts held out---both lie below the floor at every budget, the subset network sub-floor at exactly the budgets it never saw and tracking the full-range network there. The \textcolor[HTML]{2ca02c}{\textbf{single-budget control}} is competitive near its own count but degrades off it---sharply on the small-budget Gaussian, mildly on the multimodal target. Trained budgets are filled marks, held-out budgets open rings.}
\label{fig:dq-resolution-ablation}
\end{figure}

\FloatBarrier
\section{Discussion}
\label{sec:discussion}

The experiments establish one point across closed-form, sampled, learned, and physics-based posteriors: a trained, set-equivariant map emits, in a single forward pass and at any node budget, a weighted quadrature that integrates the posterior below the Monte-Carlo floor---provably through reweighting and, empirically, further by moving the nodes. We close by placing the construction among its neighbors, marking how far it reaches, and stating what it does not claim.

\paragraph{Relation to the mean shift and to control variates.}
The construction amortizes the per-observation mean-shift quadrature~\citep{belhadji2025datadriven,belhadji2026msip} into one conditional, score-free forward pass: where that construction re-solves a fixed point at every observation and, in its sharpest form, reads the posterior score, the trained map emits the quadrature from samples alone and reuses it across the stream. On the integrand axis it is the all-integrand counterpart of a control variate~\citep{oates2017controlfunctionals,SiahkoohiOh_2026}---a single node set lowers the worst-case integration error over the kernel's unit ball at once~\citep{briol2019pi}, where a control variate cancels one chosen integrand. The reweighting leg carries this provably~(\cref{prop:reweight-floor}); the move leg carries the larger gain empirically.

\paragraph{A different object than an interacting-particle sampler.}
On the one axis it shares with a sampler---the integration error of the posterior---the quadrature is favorable against a tuned interacting-particle method~(\cref{sec:exp-svgd}), but the two remain different objects: a sampler returns an equally weighted cloud read off the score and re-run per query, while ours is a signed-weight, score-free quadrature emitted once and reused across observations and integrands. Where the score is in hand and a sample cloud is wanted per query, the sampler---or a dual-space particle scheme~\citep{SiahkoohiAghazadeGholami_2026}---is the natural tool; the construction's value is in the opposite, common corner.

\paragraph{Reach and future directions.}
Because the map reads its reference only through samples, any posterior the user can draw from is admissible, and the same machinery should extend wherever a stream of expensive posteriors must be integrated. Several directions follow. First, a uniform-in-resolution analysis would certify the empirical move and cross-resolution gains the reweighting alone cannot~(\cref{rem:budget-composition}). Second, a refinement that stays stable on a sharp reference would extend the optional finetuning into the regime where it now steps back, leaving the one-pass emission to carry it~(\cref{rem:finetune-fragile}). Third, a per-mode or jointly learned whitening metric would carry the margin into multimodal and learned high-dimensional posteriors beyond the pooled estimate used here~(\cref{rem:kernel-wall,app:subspace}). Fourth, certifying robustness to an out-of-support observation and to integrands beyond the kernel's class~(\cref{fig:dq-downstream-integral}) would widen the guarantee.

\paragraph{What the construction does not claim.}
The map integrates the posterior $\rho$ the user commits to, not the ground truth $\pi$: it drives down $\MMD(Q,\rho)$ and leaves the fidelity $\MMD(\rho,\pi)$ to the user's modeling effort~(\cref{rem:integrate-rho}), whose error for a generative prior is itself a subject of study~\citep{hosseini2026error}. The point estimate is not the contribution---a linear summary recovers the same subspace-limited reconstruction---and the worst-case guarantee holds over the kernel's reproducing-kernel Hilbert space, not for every functional. Within that scope the construction is a Pareto improvement on the Monte-Carlo estimator, not a replacement for drawing more.

\FloatBarrier
\section{Conclusion}
\label{sec:conclusion}

The deliverable in simulation-based and inverse problems is an \emph{integral} of a posterior the user has already committed to, and the default estimator---$M$ independent samples of that posterior $\rho$ averaged with equal weights---carries the Monte-Carlo floor on every expectation at once. Beating that floor with a designed quadrature classically demanded a fresh solve at every observation and, in its strongest form, the score at every step. We frame this as designing a quadrature of $\rho$ and propose \emph{amortized mean-shift interacting particles}: an amortized, arbitrary-resolution, set-equivariant map that emits, in one forward pass and across observations, a weighted-node set improving on that floor, the all-integrand counterpart of the conditional control variate. Its reweighting is \emph{provably} no worse than the floor at every budget~(\cref{prop:reweight-floor}); moving the nodes carries the \emph{empirically} larger gain. The construction does not compete with drawing more from $\rho$; it is a Pareto improvement on the Monte-Carlo estimator itself.

\section*{Acknowledgments}
AS acknowledges support from the Institute for Artificial Intelligence at the University of Central Florida.

\FloatBarrier
\appendix

\section{Proofs}
\label{app:proofs}

Throughout, the squared-exponential kernel $k_{\sigma_x}$ is strictly positive definite, so for distinct nodes the node Gram matrix $\bK$ is positive definite and $\bK^{-1}$ is well defined, and the damping satisfies $\lambda \in (0, 1]$. The reference is the posterior $\rho$; each proof states whether it holds for a general $\rho$ carried through the estimand $(\bmu, c_\rho)$ or is specific to the Nadaraya--Watson instantiation of $\rho$. \Cref{app:norminv,app:reweight-floor,app:fixedpoint} prove the three propositions of~\cref{sec:theory} in turn, and~\cref{app:C-recovers} derives the score-free subspace identity~\eqref{eq:CH-map}.

\paragraph{Operating window of the guarantees.} The constructions below run with the ridge inflation $\bK \mapsto \bK + \varepsilon\bm{I}$, $\varepsilon > 0$, which is symmetric positive definite for every $\varepsilon$ and so makes the weight solve well posed without the strict-positive-definiteness premise. The ridge perturbs the guarantees only to $O(\varepsilon)$: the minimizer of the ridged quadratic $\bw^{\!\top}(\bK + \varepsilon\bm{I})\bw - 2\bw^{\!\top}\bmu + c_\rho$ differs from the unridged one by $O(\varepsilon)$ in norm, since the objective is shifted by $\varepsilon\|\bw\|^2$, so~\cref{prop:reweight-floor,prop:fixedpoint} hold for the unridged objective up to an $O(\varepsilon)$ slack and exactly in the limit $\varepsilon \downarrow 0$ when $\bK$ is invertible. The guarantees are non-vacuous only away from the two degeneracy regimes the experiments also live in: where the Gram loses its geometric contrast in high ambient dimension---its off-diagonals near-equal---the constrained optimum is itself near the equal weights and the margin over the floor shrinks~(\cref{rem:kernel-wall}), and where a sharp reference drives the Gram toward rank one the unridged solve is ill-conditioned and the ridge is load-bearing~(\cref{rem:finetune-fragile}). The statements are stated for a well-conditioned Gram, $1 \ll \kappa(\bK) \ll \varepsilon^{-1}$, and the ridge carries the rest.

\paragraph{Amendment for the latent kernel (\cref{sec:latent}).}
\label{app:latent-prop}
When the ambient kernel is replaced by a feature-space kernel $k_{\bphi}(\bx, \bx') = \exp(-\|\bphi(\bx) - \bphi(\bx')\|^2 / 2\sigma_x^2)$ acting through an informed feature map $\bphi : \R^{d_x} \to \R^r$~(\cref{sec:latent}), the preceding sentence no longer applies as stated. The kernel $k_{\bphi}$ is positive \emph{semi}-definite on $\R^{d_x}$ but \emph{not} strictly positive definite there, since it is constant along the fibers of $\bphi$ and so degenerate whenever two nodes share a feature value. Invertibility of the node Gram matrix is then secured not by strict positive-definiteness but by the diagonal inflation $\bK_{\bphi} + \varepsilon\bm{I}$ with $\varepsilon > 0$ (the same inflation used in the ambient construction above), which is symmetric positive definite for every $\varepsilon > 0$. With this single substitution the proofs of~\cref{prop:norminv,prop:reweight-floor,prop:fixedpoint} below hold verbatim for the latent quadrature: they use only the linearity of the embeddings in the reference, the dependence of the Gram on node positions alone, and the convexity and non-negativity of the squared loss---none of which is specific to the ambient kernel.

\subsection{Normalization-invariance of the mean-shift map}
\label{app:norminv}

\begin{proof}[Proof of~\cref{prop:norminv}]
The embeddings $v_0(\bx') = \int k_{\sigma_x}(\bx, \bx')\,\mathrm{d}\rho(\bx)$ and $\bV_1(\bx') = \int \bx\, k_{\sigma_x}(\bx, \bx')\,\mathrm{d}\rho(\bx)$ are both linear functionals of the reference $\rho$. Hence replacing $\rho$ by $c\,\rho$ with $c > 0$ sends $v_0 \mapsto c\,v_0$ and $\bV_1 \mapsto c\,\bV_1$ pointwise. The Gram matrix $\bK_{ij} = k_{\sigma_x}(\bz_i, \bz_j)$ depends only on the node positions, not on $\rho$, so it is unchanged. Using that multiplication by $\bK^{-1}$ is linear,
\[
\Psi(\bz)_i = \frac{\big(\bK^{-1}(c\,\bV_1)\big)_i}{\big(\bK^{-1}(c\,\bv_0)\big)_i} = \frac{c\,(\bK^{-1}\bV_1)_i}{c\,(\bK^{-1}\bv_0)_i} = \frac{(\bK^{-1}\bV_1)_i}{(\bK^{-1}\bv_0)_i},
\]
so the map is unchanged, and with it the damped iteration and its fixed point. The stationary weights $\bw \propto \bK^{-1}\bv_0 \mapsto c\,\bK^{-1}\bv_0$ are scaled by $c$, hence identical after normalization to unit sum. The nodes and their normalized weights are therefore invariant to $c$. In the Nadaraya--Watson instantiation the unnormalized and normalized responsibilities of~equation~\eqref{eq:nw} differ exactly by the positive factor $Z(\ystar) = \sum_j k_{\sigma_y}(\by_j, \ystar)$, the total mass of the conditional measure; taking $c = Z(\ystar)$ shows the two choices return the same nodes and weights, and the evidence enters only through this overall factor and so does not reach the quadrature.
\end{proof}

\subsection{Reweighting is no worse than equal weights}
\label{app:reweight-floor}

We restate~\cref{prop:reweight-floor} in full and prove it. Let $\bz_1, \dots, \bz_M$ be fixed distinct nodes, let $\bK$ be their Gram matrix in a characteristic kernel $k$ with reproducing-kernel Hilbert space $\mathcal{H}$, and let $\bmu = (\mu_1, \dots, \mu_M)$ with $\mu_j = \int k(\bz_j, \bx)\,\mathrm{d}\rho(\bx)$ be the kernel mean of the reference $\rho$ at the nodes, with self-affinity $c_\rho = \iint k(\bx, \bx')\,\mathrm{d}\rho\,\mathrm{d}\rho$. For signed unit-mass weights the squared maximum mean discrepancy is the convex quadratic
\[
\MMD^2\!\Big(\textstyle\sum_j w_j\,\delta_{\bz_j},\ \rho\Big) = \bw^{\!\top}\bK\,\bw - 2\,\bw^{\!\top}\bmu + c_\rho,
\]
and the unit-sum-constrained optimum $\bw^\star = \arg\min_{\bm{1}^{\!\top}\bw = 1} \MMD^2(\sum_j w_j\,\delta_{\bz_j}, \rho)$ is $\bw^\star = (\bK + \varepsilon\bm{I})^{-1}(\bmu + \kappa\,\bm{1})$ with the multiplier $\kappa$ fixed by $\bm{1}^{\!\top}\bw^\star = 1$ for any ridge $\varepsilon > 0$ (or $\varepsilon = 0$ when $\bK$ is invertible). The bound~\eqref{eq:reweight-floor} of~\cref{prop:reweight-floor} then holds, with equality if and only if equal weights are themselves the constrained optimum; when the nodes are $M$ independent samples of $\rho$ the right-hand side is the equal-weight discrepancy on those same samples---the independent-sample Monte-Carlo floor at node count $M$, realized pathwise---so the reweighted quadrature is no worse than that floor at every $M$. The guarantee is exact for a closed-form reference (a Gaussian, a Gaussian mixture, or a Nadaraya--Watson finite sum, where $\bmu$ is exact); for a sampled reference $\bmu$ is the empirical kernel mean $\hat\bmu$, and the constrained optimum then minimizes $\MMD^2(\cdot, \hat\rho_L)$ against the empirical reference, so the floor guarantee against $\rho$ holds up to the $O(L^{-1/2})$ estimand error of~\cref{prop:fixedpoint}.

\begin{proof}[Proof of~\cref{prop:reweight-floor}]
Expanding the squared norm of the kernel mean embeddings, $\MMD^2(\sum_j w_j\,\delta_{\bz_j}, \rho) = \|\sum_j w_j\, k(\cdot, \bz_j) - \mu_\rho\|_{\mathcal{H}}^2 = \bw^{\!\top}\bK\,\bw - 2\,\bw^{\!\top}\bmu + c_\rho$, since $\langle k(\cdot, \bz_j), k(\cdot, \bz_l)\rangle_{\mathcal{H}} = \bK_{jl}$, $\langle k(\cdot, \bz_j), \mu_\rho\rangle_{\mathcal{H}} = \mu_j$, and $\langle \mu_\rho, \mu_\rho\rangle_{\mathcal{H}} = c_\rho$. The Gram matrix $\bK$ is positive semidefinite (positive definite for a strictly positive-definite kernel at distinct nodes, or after the ridge inflation $\bK + \varepsilon\bm{I}$ of the latent amendment), so the objective $J(\bw) = \bw^{\!\top}\bK\,\bw - 2\,\bw^{\!\top}\bmu + c_\rho$ is convex in $\bw$. The feasible set $\Delta = \{\bw \in \R^M : \bm{1}^{\!\top}\bw = 1\}$ is an affine hyperplane---closed and convex---and the equal weights $\bw_{\mathrm{eq}} = \tfrac{1}{M}\bm{1}$ satisfy $\bm{1}^{\!\top}\bw_{\mathrm{eq}} = 1$, so $\bw_{\mathrm{eq}} \in \Delta$ is feasible. Minimizing a convex objective over a convex set containing a feasible point, the constrained minimizer attains a value no larger than at any feasible point, in particular at $\bw_{\mathrm{eq}}$:
\[
\MMD^2\!\Big(\textstyle\sum_j w_j^\star\,\delta_{\bz_j}, \rho\Big) = J(\bw^\star) = \min_{\bw \in \Delta} J(\bw) \;\le\; J(\bw_{\mathrm{eq}}) = \MMD^2\!\Big(\tfrac{1}{M}\textstyle\sum_j \delta_{\bz_j}, \rho\Big),
\]
which is~\eqref{eq:reweight-floor}. The constrained minimizer is the stationary point of the Lagrangian $J(\bw) - 2\kappa(\bm{1}^{\!\top}\bw - 1)$: setting $\nabla_\bw = 2\bK\bw - 2\bmu - 2\kappa\bm{1} = \bm{0}$ gives $\bw^\star = \bK^{-1}(\bmu + \kappa\bm{1})$ (with $\bK \mapsto \bK + \varepsilon\bm{I}$ under the ridge), and $\kappa$ is fixed by $\bm{1}^{\!\top}\bw^\star = 1$, so the optimum is the stated closed form. Equality in~\eqref{eq:reweight-floor} holds if and only if $\bw_{\mathrm{eq}}$ is itself a minimizer over $\Delta$, i.e., when the equal weights already solve the constrained problem; by convexity and the affine constraint this minimizer is unique up to the kernel of $\bK$, so equality is exactly the case $\bw_{\mathrm{eq}} = \bw^\star$ modulo that kernel. When the nodes are $M$ independent samples of $\rho$, the equal-weight discrepancy on the right is the realized independent-sample discrepancy on those same samples---the Monte-Carlo floor at node count $M$ evaluated pathwise, whose sample-average is the canonical-rate floor---and the bound holds at every such $M$ since the argument fixes $M$ throughout and uses no relation between node counts. The reweighting is the provable lever and carries no node motion, so the margin it attains over the floor is limited by the random node locations---which a designed node set improves on empirically~(\cref{sec:why-move}).
\end{proof}

\subsection{Shared fixed points of the distillation and MMD-regression objectives}
\label{app:fixedpoint}

\begin{proof}[Proof of~\cref{prop:fixedpoint}]
Fix a query $\ystar$, and let $\bar\Psi(\bz) = (1-\lambda)\bz + \lambda\Psi(\bz)$ denote one damped step of the mean-shift map under the embeddings of the reference $\rho$, with $\bar\Psi^{(n)}$ its $n$-fold composition.

\emph{Distillation.} Plain distillation regresses the emitted nodes $\bz = f_\theta(\cdot)$ onto their own detached image under the mean-shift map, with per-query integrand $\ell(\bz) = \tfrac{1}{Md_x}\|\bz - \mathrm{sg}[\bar\Psi^{(n)}(\bz)]\|^2 \ge 0$, a nonnegative quadratic that vanishes if and only if $\bz = \bar\Psi^{(n)}(\bz)$; the detached target affects only the derivative in $\theta$, not the value of $\ell$ at a given $\bz$. At $n = 1$ the condition $\bar\Psi(\bz) = \bz$ reads $\lambda(\Psi(\bz) - \bz) = \bm{0}$, and since $\lambda > 0$ this is $\Psi(\bz) = \bz$, the genuine mean-shift fixed-point set; conversely any such fixed point has $\bar\Psi^{(n)}(\bz) = \bz$ for every $n$. For $n > 1$ the zero set $\{\bz : \bar\Psi^{(n)}(\bz) = \bz\}$ is the set of period-$n$ points, which contains the mean-shift fixed points but need not coincide with them, so the equivalence is stated at the single inner step $n = 1$ the construction uses.

\emph{Maximum-mean-discrepancy regression, closed-form reference.} The objective~\eqref{eq:mmdreg} is $\mathcal{L}(\theta) = \E_{\ystar}\,\MMD_{k_{\sigma_x}}^2(\sum_j \hat w_j\,\delta_{\hat\bz_j}, \rho)$. We \emph{import} one equivalence and prove the rest. The imported step is the quadrature-stationarity result of~\citet{belhadji2025datadriven}: \emph{the node configurations stationary for $\MMD_{k_{\sigma_x}}^2(\cdot, \rho)$ under the coupled mean-shift weight rule $\bw \propto \bK^{-1}\bv_0$ are exactly the fixed points of the mean-shift map $\bar\Psi$}, an equivalence that holds for any reference the embeddings $(\bv_0, \bV_1)$ are \emph{linear} in. We do not reprove it; the novelty here is the closed-form/sampled extension and the weight-rule reconciliation below. Where $\rho$ is closed-form---a Gaussian or Gaussian mixture under the squared-exponential kernel, so that $(\bmu, c_\rho)$ are exact---the embeddings are exact linear functionals of $\rho$ and the imported equivalence holds verbatim with $\rho$ in place of any finite weighted sum: the per-query distillation zero set ($\bar\Psi(\bz) = \bz$) coincides with the per-query fixed-point set of $\bar\Psi$, which is the stationary set of $\MMD_{k_{\sigma_x}}^2(\cdot, \rho)$ under the mean-shift weight rule $\bw \propto \bK^{-1}\bv_0$. This is a shared fixed-point (stationary-point) set, not a shared global minimizer: mean-shift fixed points are stationary and near-optimal~\citep{belhadji2026msip}, not certified global minima.

\emph{Mean-shift weight rule versus the unit-sum-constrained solve.} The equivalence above is stated for the \emph{mean-shift} weight rule $\bw \propto \bK^{-1}\bv_0$, a positive rescaling of $\bK^{-1}\bv_0$ to unit sum. The amortized emission and the boxed objective~\eqref{eq:mmdreg} instead run the \emph{unit-sum-constrained} solve $\hat\bw = (\bK + \varepsilon\bm{I})^{-1}(\bmu + \kappa\bm{1})$ with the Lagrange multiplier $\kappa$ fixed by $\bm{1}^{\!\top}\hat\bw = 1$. These two weight rules do not coincide: the additive offset $\kappa\bm{1}$ is not a positive rescaling of $\bK^{-1}\bv_0$, so the two solves agree only in the degenerate case $\kappa = 0$, and the constrained-MMD stationary set and the mean-shift fixed-point set need not be identical sets of node configurations. The proposition's equivalence is therefore between distillation and the \emph{mean-shift-weighted} MMD regression; for the constrained solve it is the statement that the two objectives share the same node-configuration stationary set up to this weight-rule offset, which vanishes when the constrained optimum is already unit-sum without an active multiplier. We state this relationship rather than assert the two solves identical.

\emph{Sampled reference: stationarity of the finite-sample estimand.} Where $\rho$ is supplied by a sampler---a trained flow or a physics posterior---the embeddings entering the objective are not exact but the $L$-sample Monte-Carlo estimates $\hat\bv_0, \hat\bV_1$ of $(\bv_0, \bV_1)$, equivalently the empirical kernel mean $\hat\bmu$ and self-affinity $\hat c_\rho$ formed from $L$ samples $\bx_1, \dots, \bx_L \sim \rho$. Substituting these into the quadratic form gives the \emph{finite-sample objective} $\widehat\MMD^2(Q, \rho) = \bw^{\!\top}\bK\bw - 2\bw^{\!\top}\hat\bmu + \hat c_\rho$, which is the exact squared discrepancy to the \emph{empirical} measure $\hat\rho_L = \tfrac{1}{L}\sum_\ell \delta_{\bx_\ell}$. The empirical measure is a finite weighted sum, so the embeddings are linear in it and the closed-form argument applies verbatim to $\hat\rho_L$: the proved object for a sampled reference is the stationarity of the finite-sample estimand, i.e., the fixed-point equivalence to the mean shift driven by $\hat\rho_L$. The estimand differs from the exact objective by the kernel-mean estimation error $\|\hat\bmu - \bmu\|$ and the self-affinity error $|\hat c_\rho - c_\rho|$, each a Monte-Carlo average of a bounded kernel statistic and so $O(L^{-1/2})$ in expectation by the standard bound on a bounded-kernel empirical mean: for $|k| \le 1$, $\E\|\hat\bmu - \bmu\|^2 \le M/L$ and $\E|\hat c_\rho - c_\rho| = O(L^{-1/2})$, so the finite-sample objective converges to the exact objective in expectation as $L \to \infty$. The boxed objective's legitimacy for a sampled reference is exactly this finite-sample estimand statement: stationarity of the empirical objective, whose value converges to the exact objective at the Monte-Carlo rate. We do not claim convergence of the stationary set itself, which would require an additional argmin-continuity argument we do not make.
\end{proof}
\noindent\textit{Intuition.} Freezing distillation's target at the map's image of the nodes makes its loss zero exactly when that image coincides with the nodes---a fixed point---which is exactly a stationary configuration of the discrepancy the mean shift descends. For a closed-form reference the discrepancy is the exact one; for a sampled reference it is the discrepancy to the $L$ samples in hand, so the same equivalence holds for the empirical reference and converges to the exact statement as the sample count grows.

\subsection{\texorpdfstring{The score-free $\bm{C}$ recovers the score-based $\bm{H}$ span: derivation of~\eqref{eq:CH-map}}{The score-free C recovers the score-based H span: derivation of the C-H map}}
\label{app:C-recovers}

The informed subspace of~\cref{sec:latent} can be read either from the likelihood score, as the leading eigenspace of the gradient-information matrix $\bm{H} = \E_{p(\bx)}[\nabla_\bx \log p(\by\mid\bx)\,\nabla_\bx \log p(\by\mid\bx)^{\!\top}]$ in the prior metric~\citep{cui2014lis,spantini2015optimal,zahm2022certified}, or score-free from the variation of the conditional mean, $\bm{C} = \Cov_{\by}(\E[\bx \mid \by])$ in the prior metric, estimated from the joint pairs by sliced inverse regression~\citep{li1991sir} or by reusing the conditional mean of~equation~\eqref{eq:nw}; where $\rho$ is a trained flow or physics posterior the conditional mean is estimated from $\rho$-samples, so the score-free estimator carries over. We derive the one constructive fact the body uses---the span identity~\eqref{eq:CH-map} in the linear--Gaussian model---and cite the rest. With $\E[\bx\mid\by] = \bm{G}\by$, $\bm{G} = \bSigma_{\mathrm{pr}}\bm{A}^{\!\top}\bm{S}^{-1}$, $\bm{S} = \bm{A}\bSigma_{\mathrm{pr}}\bm{A}^{\!\top} + \bSigma_{\mathrm{obs}}$, the score-free matrix is $\bm{C} = \bm{G}\bm{S}\bm{G}^{\!\top} = \bSigma_{\mathrm{pr}}\bm{A}^{\!\top}\bm{S}^{-1}\bm{A}\bSigma_{\mathrm{pr}}$; prior-whitening with the whitened forward map $\bm{B} = \bm{U}\bm{\Sigma}\bm{R}^{\!\top}$ (right factor $\bm{R}$) and the push-through identity gives $\tilde{\bm{C}} = \bm{B}^{\!\top}(\bm{B}\bm{B}^{\!\top} + \bm{I})^{-1}\bm{B} = \bm{R}\bm{\Sigma}^2(\bm{\Sigma}^2 + \bm{I})^{-1}\bm{R}^{\!\top}$, while the score-based $\tilde{\bm{H}} = \bm{R}\bm{\Sigma}^2\bm{R}^{\!\top}$, so the two share the eigenvectors $\bm{R}$ and their eigenvalues satisfy the strictly increasing bijection~\eqref{eq:CH-map}, $\lambda_k(\tilde{\bm{C}}) = \lambda_k(\tilde{\bm{H}})/(1 + \lambda_k(\tilde{\bm{H}}))$. Because the bijection is strictly increasing it preserves the ordering of the eigenvalues, so the two matrices fix the same leading-$r$ subspace at every $r$: in the linear--Gaussian model the score-free $\bm{C}$ recovers exactly the score-based informed subspace, a span identity, not a matrix equality. For a general likelihood the two part: $\bm{C}$ recovers the central-mean subspace~\citep{cook2002cms}, consistent under sliced inverse regression with a leave-one-out or sample-split conditional mean~\citep{li1991sir}, contained in the gradient-informed subspace of $\bm{H}$ but able to omit directions along which the observation moves only the posterior's spread, which a second-moment estimator recovers~\citep{cook1991save}. We use the subspace only as conditioning infrastructure~(\cref{sec:latent}); no reported number depends on a certified reduction bound, so we keep only this span identity and cite the certified dimension-reduction theory~\citep{zahm2022certified} for the off-subspace tail.

\FloatBarrier
\section{The informed subspace and the posterior-whitened kernel}
\label{app:subspace}

\Cref{sec:latent} summarizes the posterior in an informed subspace inside a posterior-whitened kernel and defers the construction here: the full-rank whitened kernel and its dimension-aware bandwidth~(\cref{sec:whitened-kernel}), and the score-free estimator of the subspace it truncates to~(\cref{sec:latent-subspace}).

\subsection{The posterior-whitened kernel: a constant-Gram-geometry metric}
\label{sec:whitened-kernel}

The fix is a change of metric. Replace the isotropic squared-exponential kernel by the Mahalanobis kernel in the posterior precision $\bm{M} = \bSigma_{\mathrm{post}}^{-1}$,
\begin{equation}
\label{eq:whitened-kernel}
k_{\bm{M}}(\bx, \bz) \;=\; \exp\!\Big(-\tfrac{1}{2\sigma_x^2}\,(\bx - \bz)^{\!\top}\bSigma_{\mathrm{post}}^{-1}(\bx - \bz)\Big).
\end{equation}
This is a coordinate change, not a new construction. With the Cholesky factor $\bSigma_{\mathrm{post}} = \bm{L}\bm{L}^{\!\top}$ and the whitening map $\bm{R} = \bm{L}^{-1}$, the change of variables $\bm{u} = \bm{R}\bx$ makes $(\bx - \bz)^{\!\top}\bSigma_{\mathrm{post}}^{-1}(\bx - \bz) = \|\bm{u}_\bx - \bm{u}_\bz\|^2$. So $k_{\bm{M}}$ is \emph{exactly} the isotropic squared-exponential kernel in whitened coordinates, in which the reference has identity covariance---unit variance in every dimension---by construction. Nothing in the metric, the weights, the move, or the objective changes: the validated closed forms for $(\bmu, c_\rho)$ are read in the whitened coordinate.

Whitening alone is not enough. Whitened samples are still isotropic, so their squared pairwise distances concentrate at $\approx 2d$, and the median heuristic---whose divisor scales the typical exponent to a constant independent of $d$---still drives the bandwidth too small and walls the Gram. The bandwidth must be fixed by the same logic that fixed the metric: hold the typical off-diagonal Gram entry at a value independent of $d$. With whitened median squared distance $\approx 2d$, the choice
\begin{equation}
\label{eq:sqrtd}
\boxed{\;\sigma_x = \sqrt{d}\;}\qquad\Longrightarrow\qquad \text{typical off-diagonal} \;=\; \exp\!\big(-2d / (2d)\big) \;=\; e^{-1},
\end{equation}
pins the typical off-diagonal coupling at $e^{-1}$ for every $d$---the \emph{constant-Gram-geometry} bandwidth---and is equivalent to the single ambient anisotropic kernel $k(\bx, \bz) = \exp(-\tfrac{1}{2}(\bx - \bz)^{\!\top}\bSigma_{\mathrm{post}}^{-1}(\bx - \bz)/d)$. This rule is falsifiable, not tuned to the answer. A larger bandwidth would inflate the apparent margin in the trivial broad-kernel limit, so pinning the off-diagonal at $e^{-1}$ is the conservative anchor, and even tighter multipliers $c\sqrt{d}$ with $c < 1$ still leave real room to beat the floor.

The whitening metric is exact where it is known and estimated where it is not. For a Gaussian posterior $\bSigma_{\mathrm{post}}^{-1}$ is in hand. For a Gaussian mixture a single global metric is the pooled covariance, a heuristic whose residue is the per-mode geometry it cannot resolve, so its margin is shallower than the matched-metric Gaussian. For a trained flow or a physics posterior the metric is the $O(M d^2)$ sample covariance of the achieved posterior, formed from the same samples that seed the map---cheap, though a dependency on the reference to state. None of these is a wall. The warm start, finetuned recovers large room to beat the floor at every tested dimension, so the residual is the global-metric approximation, not the kernel-quadrature idea.

The informed subspace falls out of this same remedy by truncation. Where the posterior precision concentrates on an $r$-dimensional subspace, the whitening metric is effectively low-rank, and projecting the kernel onto that subspace is the rank-$r$ Mahalanobis metric with the off-subspace directions carried by the closed-form prior moment. Full-rank whitening keeps every direction and rescales it by the posterior; the subspace keeps the leading $r$ and integrates the rest analytically. We use the subspace where the posterior is genuinely low-rank---the physics posterior of~\cref{sec:exp-physics}---and the full whitening where it is not.

\subsection{Reading the subspace with no score}
\label{sec:latent-subspace}

The kernel is moved onto a low-dimensional \emph{informed subspace} $\bphi : \R^{d_x} \to \R^r$, $r \ll d_x$. Distances, the kernel mean, and the node Gram are all measured through $\bphi$, where the kernel is well conditioned, while the displacement still moves the node in ambient space and the quadrature integrates an ordinary functional. This $\bphi$ is the same summary the observation embedding $\bs(\ystar)$ of~\cref{sec:method-net} reads---infrastructure for the conditioning, not an inference procedure that competes with it.

The summary should be a near-sufficient statistic. The textbook choice is the \emph{likelihood-informed subspace} of certified dimension reduction~\citep{cui2014lis,spantini2015optimal,zahm2022certified}, the leading eigenspace of the gradient-information matrix $\bm{H}$ read from the likelihood score---but that score is exactly what this construction avoids. The score-free route reads the same subspace from $\rho$-samples, as the leading generalized eigenvectors of the covariance of conditional means $\bm{C} = \Cov_{\by}(\E[\bx \mid \by])$ in the prior metric, estimated from the joint pairs alone~\citep{li1991sir,cook2002cms} in the spirit of data-free likelihood-informed reduction~\citep{cui2021datafree}. It needs no extra forward solve: either sliced inverse regression or a reuse of the conditional mean of~equation~\eqref{eq:nw} supplies it. Where $\rho$ is a trained flow or a physics posterior, $\E[\bx \mid \by]$ is estimated from $\rho$-samples just as from the dataset, so the score-free recovery is reference-agnostic too. One question is left: whether this score-free $\bm{C}$ recovers the same directions the score-based $\bm{H}$ would.

In the linear--Gaussian case the answer is yes, exactly. For a linear forward model with Gaussian prior and noise, the prior-whitened $\tilde{\bm{H}}$ and $\tilde{\bm{C}}$ share their eigenvectors---the informed directions---and their eigenvalues are tied by a strictly increasing bijection,
\begin{equation}
\label{eq:CH-map}
\boxed{\;\lambda_k(\tilde{\bm{C}}) \;=\; \frac{\lambda_k(\tilde{\bm{H}})}{1 + \lambda_k(\tilde{\bm{H}})}\,,\;}
\end{equation}
so the ordering of directions, and therefore the leading-$r$ subspace, is identical at every $r$: \emph{in the linear--Gaussian case the score-free $\bm{C}$ equals $\bm{H}$ in the subspace it recovers}. This is a span identity, not a matrix equality---the spectra differ by the bijection~\eqref{eq:CH-map} while the leading-$r$ subspace coincides. For a general likelihood the two part. The matrix $\bm{C}$ recovers the \emph{central-mean subspace}~\citep{cook2002cms}, contained in the gradient-informed subspace of $\bm{H}$ but able to miss directions along which the observation moves only the posterior's spread. This is harmless here: the map's observation-dependence enters only through $\E[\bx \mid \ystar]$, so the central-mean subspace is exactly the one the conditioning needs. The span identity~\eqref{eq:CH-map} and the consistency of the score-free estimator are derived in~\cref{app:C-recovers}, and the certified off-subspace tail is the standard certified-dimension-reduction bound~\citep{zahm2022certified}, which we cite rather than re-derive.

\FloatBarrier
\section{Architecture, hyperparameters, reproduction}
\label{app:repro}

This appendix records the architecture, the training protocol, the kernel and bandwidth rules, and the per-experiment reference sampling and settings, in enough detail to reproduce the reported runs. The per-experiment table is~\cref{tab:repro}; the running text states the choices common to all experiments.

\subsection{The set-equivariant map}
\label{app:repro-net}

The map of~\cref{sec:method-net} is a permutation-equivariant set network that takes a seed set of $M$ independent samples $\{\bz_j^0\}$ of the reference $\rho(\cdot \mid \ystar)$ and emits a per-node displacement $\Delta\bz_j$, returning the designed nodes $\bz_j = \bz_j^0 + \Delta\bz_j$. The body embeds each seed particle through a shared linear layer to a hidden width and then applies a stack of identical set blocks. Each block is residual and pre-normalized: a multi-head self-attention sublayer couples the $M$ particles---the only inter-particle interaction, with no positional encoding, so the layer is permutation-equivariant---followed by a per-particle feed-forward sublayer of two linear layers with a Gaussian-error-linear-unit nonlinearity~\citep{hendrycks2016gaussian} and an inner expansion to twice the hidden width. The layer normalizations act over the channel axis only, so they are per-particle and preserve equivariance. The output head is a single linear layer to the parameter dimension, applied after a final layer normalization, and its emitted displacement is scaled by a fixed factor $\delta_{\mathrm{scale}} = 0.5$. The head weight and bias are initialized to zero, so an untrained network emits $\Delta\bz_j = \bm{0}$ and reproduces the independent-sample floor exactly; training can only move the quadrature below it.

The requested node count is supplied through feature-wise modulation rather than through any layer whose shape depends on $M$. A set-global conditioning vector, the pair $[\log M,\ \log \overline{\mathrm{spread}}]$ of the log node count and the log mean per-coordinate standard deviation of the seed set, is embedded by a small two-layer network of width $32$ and used to produce the affine scale-and-shift parameters of every block, broadcast across the $M$ particles; the scale is added to one so the untrained modulation is the identity. Because the attention pools over the set and the modulation reads only set-invariant features, one trained network serves every node count in the resolution range, including counts unseen in training~(\cref{sec:abl-budget}). The unconditional network of~\cref{tab:repro} (the toy and the closed-form or empirical latent posteriors) uses hidden width $64$, three blocks, and four attention heads, with the exception of the Darcy run, which uses hidden width $96$; these instances carry of order $1.1\times10^{5}$ to $2.5\times10^{5}$ parameters.

The observation enters in two distinct ways across the experiments. Where the integration metric does not move with the observation---the linear--Gaussian latent posteriors of~\cref{sec:exp-amort} and the elliptic posterior of~\cref{sec:exp-physics}, whose latent covariance is observation-independent---the observation conditions only the reference $\rho(\cdot \mid \ystar)$ the seed is drawn from, and the unconditional network above is used. Where the posterior shape itself moves with the observation---the trained conditional flow of~\cref{sec:exp-cnf}---the network is the observation-conditioned twin: the validated set body is augmented in each block with a residual cross-attention sublayer in which the particles are queries and the keys and values are a set of conditioner tokens produced by a shallow strided-convolution encoder of the filtered-back-projection image of the observation. The cross-attention output projection is zero-initialized, so an untrained network ignores the observation and the floor default is preserved, and the cross-attention is shared across particles, so set-equivariance and the arbitrary-resolution property are retained. The conditioned network uses hidden width $128$, four blocks, eight heads, a conditioner-token width of $32$ from a three-stage encoder, and carries of order $1.0\times10^{6}$ parameters. In every case the weights are not a network output; only the node positions are learned.

\subsection{Training}
\label{app:repro-train}

The objective is the maximum-mean-discrepancy regression of~equation~\eqref{eq:mmdreg}: the expected squared maximum mean discrepancy, in the squared-exponential kernel at the parameter bandwidth $\sigma_x$, between the emitted quadrature and the reference $\rho(\cdot \mid \ystar)$. The weights are not learned but the closed-form unit-sum-constrained solve $\hat\bw = (\bK + \varepsilon\bm{I})^{-1}(\bmu + \kappa\,\bm{1})$ at the emitted nodes, the multiplier $\kappa$ fixed by $\bm{1}^{\!\top}\hat\bw = 1$~(\cref{prop:reweight-floor}), differentiated end to end through the $M \times M$ kernel solve so that the gradient reaches the node positions through both the displacement and the weights. No fixed-point iteration is run during training and no score is differentiated.

The node count is drawn uniformly over the resolution range $M \sim \mathrm{Uniform}\{4, 8, 16, 32, 64\}$ for every reported run, with a fresh seed set drawn each step (and, for the amortized-across-observations runs, a fresh observation each step), so the one network is trained across the resolution range and the observation range it will be queried at. Optimization is by Adam~\citep{kingma2015adam} at learning rate $3\times10^{-4}$ throughout, in double precision (float64), on four CPU threads for the closed-form, empirical, and linear--Gaussian latent runs and on a single graphics processor for the conditional-flow run. The per-run step counts and the seeds are listed in~\cref{tab:repro}; the closed-form Gaussian toys run $4000$ steps, the Gaussian-mixture toys run $3500$, and the Darcy posterior runs $6000$. Each run is identified by its base seed $20260612$ and a per-run seed; the conditional-flow headline is reported across three training seeds $0$, $1$, and $2$ to certify it is not a single-seed accident.

Evaluation is on held-out fresh samples disjoint from training. For each resolution $M$ the quadrature is scored against the reference on independent seed sets---$256$ sets per $M$ for the closed-form and linear--Gaussian latent toys, fewer for the larger physics runs as listed---and the reported quantity at each $M$ is the median squared maximum mean discrepancy over those sets. The held-out seed ranges are disjoint from the training ranges; for the dataset-conditional references the held-out queries are disjoint from the held-in atoms the reference is built from. For each run a metric cross-check confirms that the closed-form and the brute-force squared maximum mean discrepancy agree to machine precision (relative deviation at or below $10^{-15}$), with the cross-check vacuous for the banana, whose estimand has no independent closed form.

\subsection{Kernel and bandwidth}
\label{app:repro-kernel}

The kernel is squared-exponential at a single parameter bandwidth $\sigma_x$ shared between the displacement and the weight solve. For the ambient and the rank-truncated latent runs the bandwidth is the per-target median heuristic, the square root of the median squared pairwise distance computed on $4000$ samples of the reference; setting it per target keeps the Gaussian-mixture kernel coupling within mode as the dimension grows, where a fixed small bandwidth would wall the kernel. The chosen median bandwidths are $\sigma_x = 0.1648$ (Gaussian $d=2$), $0.1840$ ($d=4$), $0.2735$ ($d=8$), $0.3962$ ($d=16$); $2.2148$ (Gaussian mixture $d=10$), $3.0738$ ($d=32$), $3.9294$ ($d=64$); $0.8413$ (banana); $0.3460$ (the linear--Gaussian tomography latent, rank $r=7$); $1.9418$ (the Darcy latent, rank $r=32$).

The conditional-flow run instead operates under the posterior-whitened (Mahalanobis) kernel of~\cref{sec:whitened-kernel}, with the per-observation sample covariance of the achieved posterior as the metric (computed in the rank-$32$ latent, a diagonal jitter of $10^{-6}$ added to that covariance before whitening), paired with the constant-Gram-geometry bandwidth $\sigma_x = \sqrt{r} = \sqrt{32} \approx 5.6569$ rather than the median heuristic, whose coupling is too low to give usable contrast in whitened coordinates. The general full-rank statement of this rule is $\sigma_x = \sqrt{d}$~(\cref{eq:sqrtd}), which pins the typical off-diagonal coupling at $e^{-1}$ independent of dimension.

The weight solve in every run is regularized by the ridge inflation $\bK \mapsto \bK + \varepsilon\bm{I}$ with $\varepsilon = 10^{-10}$, the diagonal inflation the latent amendment of~\cref{app:proofs} relies on to secure invertibility when the feature-space Gram is only positive semidefinite, and the same inflation used in the ambient construction. The ridge is added to the Gram for the linear solve only; the squared maximum mean discrepancy is scored with the exact, un-inflated Gram, so the reported metric carries no ridge bias. Because $\varepsilon$ is far below the conditioning of the Gram in the well-conditioned regimes the guarantees are stated for, it perturbs the constrained optimum only to $O(\varepsilon)$~(\cref{app:proofs}); it becomes load-bearing only in the sharp-reference regime where the Gram approaches rank one. For the dataset-conditional reference a second, observation-side bandwidth $\sigma_y$ sets the responsibility sharpness of the Nadaraya--Watson conditional measure~(\cref{eq:nw}); it is fixed at half the median heuristic on the informed observation feature for the Darcy run, and is distinct from the parameter bandwidth $\sigma_x$ throughout.

\subsection{The reference and its sampling, by type}
\label{app:repro-rho}

The construction reads the reference only through the estimand $(\bmu, c_\rho)$ of~equation~\eqref{eq:mmd-quadratic}, and the experiments differ only in how that pair is supplied~(\cref{sec:method-estimand}).

\paragraph{Closed-form references.} The Gaussian toys take $\rho$ to be the exact Gaussian posterior of a linear--Gaussian model---at $d=2$ the byte-identical posterior of the prior paper, and at $d \in \{4, 8, 16\}$ the dimension-scaling construction with a random-orthogonal prior covariance, identity forward map, and observation noise $\sigma_{\mathrm{obs}} = 0.3$. The Gaussian-mixture toys take $\rho$ to be the two-component anisotropic mixture with within-component variances $0.5$ and $2.0$, separation $8\sqrt{2}$ along one axis, and equal weights, at $d \in \{10, 32, 64\}$. For these references the squared-exponential kernel mean $\bmu$ and self-affinity $c_\rho$ are available in closed form, so the objective is the exact quadratic form; because the squared-exponential discrepancy is translation-invariant and the posterior covariance is observation-independent, the integration metric does not move with the observation and is reported across observations. The linear--Gaussian tomography run of~\cref{sec:exp-amort} is a closed-form reference of this kind, the exact Gaussian posterior projected into the rank-$7$ informed subspace, scored on independent latent samples.

\paragraph{Sampled references.} The banana toy of~\cref{sec:exp-toys} is the first sampled reference: the curved twisted-Gaussian admits no closed form, so $\bmu$ and $c_\rho$ are Monte-Carlo estimated from a fixed sample dataset---a smaller dataset for the per-step kernel mean, kept autograd-differentiable in the node positions, and a larger fixed reference for the cached self-affinity and the metric cross-check---the same empirical interface a trained flow supplies. The conditional-flow tomography of~\cref{sec:exp-cnf} is the genuinely observation-varying sampled reference: $\rho(\cdot \mid \ystar)$ is the achieved posterior of a trained conditional flow, sampled per observation, with each sample passed through a finite-and-bounded filter (samples of magnitude above $10^{4}$ rejected) and the rejected mass topped up to the requested budget, before the per-observation kernel mean and self-affinity are formed in the rank-$32$ latent.

\paragraph{Empirical dataset reference.} The Darcy groundwater run of~\cref{sec:exp-physics} takes $\rho$ to be the Nadaraya--Watson conditional measure read off the simulation dataset~(\cref{eq:nw}), a discrete mixture of held-in latent atoms weighted by their responsibility to the query observation; its kernel mean and self-affinity are exact finite sums over the responsibility-weighted atoms, formed in the informed latent (a top-mass truncation of the responsibility-weighted dataset). No forward solve and no density evaluation enter at training or evaluation; the run reuses a pre-simulated joint dataset rather than re-solving the partial differential equation.

\subsection{Reproduction table}
\label{app:repro-table}

The complete per-experiment settings are listed in~\cref{tab:repro}: the reference type, the ambient parameter dimension, the informed-subspace rank where the run operates in a latent (and not applicable where the run is ambient), the node-count range, the step count, the learning rate, the bandwidth rule with its realized value, the ridge $\varepsilon$, and the seed. Each run is identified by its configuration, with all values drawn from the configuration files and the run records.

\begin{table}[t]
\centering
\caption{\textbf{Per-experiment architecture and reproduction settings.} Reference type is closed-form (an exact Gaussian or Gaussian-mixture estimand), sampled (a trained flow or a fixed sample dataset with Monte-Carlo estimand), or empirical (the Nadaraya--Watson conditional measure of the simulation dataset, an exact finite-sum estimand). The ambient dimension $d_x$ is the parameter dimension; $r$ is the informed-subspace rank where the run operates in a latent, and ``--'' marks an ambient run with no rank truncation. The node-count range is the uniform training range, queried at the same grid. The bandwidth rule is the per-target median heuristic, with its realized value, except the conditional-flow run, which uses the constant-Gram-geometry bandwidth $\sqrt{r}$ in the posterior-whitened kernel. The ridge $\varepsilon$ inflates the node Gram in the weight solve only~(\cref{app:repro-kernel}). All runs use Adam at the listed learning rate in double precision, with base seed $20260612$ and the listed per-run seed.}
\label{tab:repro}
\small
\setlength{\tabcolsep}{4pt}
\begin{tabular}{lllrlrlllr}
\toprule
Experiment & Reference & $d_x$ & $r$ & $M$-range & Steps & lr & Bandwidth & Ridge $\varepsilon$ & Seed \\
\midrule
Gaussian $d{=}2$        & closed-form & $2$    & --   & $4$--$64$ & $4000$ & $3\mathrm{e}{-}4$ & median ($0.1648$) & $1\mathrm{e}{-}10$ & $0$ \\
Gaussian $d{=}4$        & closed-form & $4$    & --   & $4$--$64$ & $4000$ & $3\mathrm{e}{-}4$ & median ($0.1840$) & $1\mathrm{e}{-}10$ & $0$ \\
Gaussian $d{=}8$        & closed-form & $8$    & --   & $4$--$64$ & $4000$ & $3\mathrm{e}{-}4$ & median ($0.2735$) & $1\mathrm{e}{-}10$ & $0$ \\
Gaussian $d{=}16$       & closed-form & $16$   & --   & $4$--$64$ & $4000$ & $3\mathrm{e}{-}4$ & median ($0.3962$) & $1\mathrm{e}{-}10$ & $0$ \\
Mixture $d{=}10$        & closed-form & $10$   & --   & $4$--$64$ & $3500$ & $3\mathrm{e}{-}4$ & median ($2.2148$) & $1\mathrm{e}{-}10$ & $0$ \\
Mixture $d{=}32$        & closed-form & $32$   & --   & $4$--$64$ & $3500$ & $3\mathrm{e}{-}4$ & median ($3.0738$) & $1\mathrm{e}{-}10$ & $0$ \\
Mixture $d{=}64$        & closed-form & $64$   & --   & $4$--$64$ & $3500$ & $3\mathrm{e}{-}4$ & median ($3.9294$) & $1\mathrm{e}{-}10$ & $0$ \\
Banana $d{=}2$          & sampled     & $2$    & --   & $4$--$64$ & $3500$ & $3\mathrm{e}{-}4$ & median ($0.8413$) & $1\mathrm{e}{-}10$ & $0$ \\
Tomography              & closed-form & $256$  & $7$  & $4$--$64$ & $4000$ & $3\mathrm{e}{-}4$ & median ($0.3460$) & $1\mathrm{e}{-}10$ & $0$ \\
Tomography (flow)       & sampled     & $4096$ & $32$ & $4$--$64$ & $4000$ & $3\mathrm{e}{-}4$ & $\sqrt{r}$ ($5.6569$) & $1\mathrm{e}{-}10$ & $0,1,2$ \\
Darcy                   & empirical   & $1024$ & $32$ & $4$--$64$ & $6000$ & $3\mathrm{e}{-}4$ & median ($1.9418$) & $1\mathrm{e}{-}10$ & $0$ \\
\bottomrule
\end{tabular}
\end{table}

\FloatBarrier

\bibliographystyle{plainnat}
\bibliography{refs}

\end{document}